\documentclass[11pt]{article}

\usepackage{amsthm, amsmath, amsfonts, amssymb, mathrsfs,bm}
\usepackage{latexsym, enumerate,color,graphicx}
\usepackage[margin=1in]{geometry}
\usepackage{algorithm,algorithmic}

\usepackage[backend=biber, style=numeric-comp, isbn=false, doi=false,eprint=false,url=false,natbib=true,maxbibnames=9,maxcitenames=2]{biblatex}

\usepackage{comment}

\RequirePackage[colorlinks,citecolor=blue,urlcolor=blue]{hyperref}

\renewcommand{\mathcal}{\mathscr}

\newcommand{\myx}{u}

\newcommand{\my}{y}
\newcommand{\mya}{a}
\newcommand{\myb}{b}
\newcommand{\myc}{c}

\newcommand{\mye}{e}

\newcommand{\mynu}{\nu}

\newcommand{\cX}{\mathcal{X}}

\newtheorem{lemma}{Lemma}
\newtheorem{prop}{Proposition}
\newtheorem{cor}{Corollary}
\newtheorem{rem}{Remark}

\newcommand{\sgn}{\operatorname{sgn}}

\thispagestyle{empty}

\addbibresource{references.bib}

\begin{document}

\title{\vspace*{-3.3cm} On Parametric Optimal
  Execution and Machine Learning Surrogates  \footnote{We would like to thank Kevin Webster and Nicholas Westray, as well as an anonymous referee, for their very valuable comments and suggestions.} \footnote{The accompanying Jupyter Notebook is available at \url{https://github.com/moritz-voss/Parametric_Optimal_Execution_ML}.}  
  }

\author{Tao Chen
   \footnote{University of Michigan,
     Department of Mathematics, 530 Church Street Ann Arbor, MI 48109-1043, USA, email \texttt{chenta@umich.edu}.}
  \hspace{3ex}
  Mike Ludkovski
   \footnote{University of California Santa
     Barbara, Department of Statistics \& Applied Probability, Santa
     Barbara, CA 93106-3110, USA, email
     \texttt{ludkovski@pstat.ucsb.edu}.}
  \hspace{3ex}
  Moritz Vo{\ss}
   \footnote{University of California Los Angeles, Department of Mathematics, Los Angeles, CA 90095, USA, email \texttt{voss@math.ucla.edu}.}
}

\date{\today}

\maketitle
\begin{abstract}
We investigate optimal order execution problems in discrete time with instantaneous price impact and stochastic resilience. First, in the setting of linear transient price impact we derive a closed-form recursion for the optimal strategy, extending the deterministic results from~\citet{ObizhaevaWang:13}. Second, we develop a numerical algorithm based on dynamic programming and deep learning for the case of nonlinear transient price impact as proposed by~\citet{BouchaudEtAl:04}. Specifically, we utilize an actor-critic framework that constructs two neural-network (NN) surrogates for the value function and the feedback control. The flexible scalability of NN functional approximators enables parametric learning, i.e., incorporating several model or market parameters as part of the input space. Precise calibration of price impact, resilience, etc., is known to be extremely challenging and hence it is critical to understand sensitivity of the execution policy to these parameters. Our NN learner organically scales across multiple input dimensions and is shown to accurately approximate optimal strategies across a wide range of parameter configurations. We provide a fully reproducible Jupyter Notebook with our NN implementation, which is of independent pedagogical interest, demonstrating the ease of use of NN surrogates in (parametric) stochastic control problems.
\end{abstract}

\begin{description}
\item[Keywords:] optimal execution, parametric control, neural network surrogates, stochastic resilience
\end{description}

\section{Introduction}

In the past decade an extensive literature has analyzed optimal execution of trades within a micro-structural framework that accounts for the interaction between trades, prices, and limit order book liquidity. For example, one notable strand follows the approach of Obizhaeva and Wang \cite{ObizhaevaWang:13} which is able to generate closed-form formulas for optimal trading strategy with linear transient price impact. However, any practical use of these elegant mathematical derivations must immediately confront the strong dependence of the solution on the given model parameters. Concepts such as order book resilience or book depth are mathematical abstractions and are not directly available in the real world. Similarly, parameters such as inventory penalty, are model-specific and have to be entered by the user. Consequently, model calibration becomes highly nontrivial and in turn requires understanding the interaction between model parameters and the resulting strategy. In parallel, model risk, i.e., mis-specification of the dynamics, is also a major concern. Model risk can be partially mitigated by considering more realistic nonlinear models, but this comes at the cost of losing closed-form solutions. 

Motivated by these issues, in this article we approach optimal execution using the lens of \emph{parametric} stochastic control. To this end, we investigate numerical algorithms that determine the optimal execution strategy jointly in terms of the state variables (inventory and limit order book spread/mark-up), as well as model parameters (instantaneous price impact, resilience factor, inventory penalty, etc.). We consider augmenting 1-4 model parameters to the training space, yielding multi-dimensional control problems. 

Our contribution is two-fold. In terms of the numerical methods, we propose a direct approach to parametric control that focuses on generating functional approximators to the value function. A concrete choice of such \emph{statistical surrogates} that we present below are Neural Networks (NNs). Our approach employs NNs to approximate the Bellman equation and can be contrasted with other ways of using NNs, such as Deep Galerkin Methods \cite{AlAradi2018solving,AlAradi2019applications,Germain2021neural} for PDEs; or parameterization of the feedback control \cite{Pham18deep1,Pham18deep2} in tandem with stochastic gradient descent. One advantage of our method is its conceptual simplicity; as such it as an excellent \emph{pedagogical} testbed for machine learning methods. Indeed, given that optimal execution is now a core problem that is familiar to anyone working in mathematical finance, we believe that this example offers a great entryway to students or researchers who wish to understand and ``play" with modern numerical tools for stochastic control. To this end, we provide a detailed  Python Jupyter notebook that allows a fully reproducible checking of our results. Our notebook is made as simple as possible, in order to distill where the statistics tools come in, and aiming to remove much of the ``ML mystique" that can be sometimes present.
Our approach also emphasizes the questions of how to train the statistical surrogate, and how to approximate the optimal control, both important implementation aspects whose discussion is often skipped.

In terms of the financial application, we contribute to the optimal execution literature in several ways that would be of independent interest to experts in that domain. Specifically, our optimal execution modeling framework builds on the discrete-time, linear transient price impact model with exponential decay from~\citet{ObizhaevaWang:13}
and additionally allows for (i) nonlinear, power-type price impact \`a la \citet{BouchaudEtAl:04, BouchaudEtAl:09}, \citet{Gatheral:10}, \citet{AlfonsiFruthSchied:10}, \citet{Dang:17}, \citet{CuratoGatheralLillo:16}; (ii) a stochastic transient price impact driven by its own noise, akin to the continuous-time model  in~\citet{BechererBilarevFrentrup:18}; and (iii) a risk-aversion-type running quadratic penalty on the inventory as arising, e.g., in~\citet{SchiedSchoenebornTehranchi:10}. In other words, our model nests various existing proposals in the literature, offering a unified discrete-time framework that we investigate in detail. Other related work on optimal order execution with (exponentially) decaying (linear) transient price impact in discrete and continuous time include, e.g., \citet{AlfonsiFruthSchied:08}, \citet{AlfonsiSchied:10}, \citet{PredoiuShaikhetShreve:11}, \citet{AlfonsiSchiedSlynko:12}, \citet{GatheralSchiedSlynko:12}, \citet{LorenzSchied:13},
\citet{AlfonsiSchied:13},
\citet{AlfonsiAcevedo:14}, \citet{BankFruth:14},
\citet{AlfonsiBlanc:16}, \citet{FruthSchoenebornUrusov:14, FruthSchoenebornUrusov:19},  \citet{GraeweHorst:17}, \citet{LehalleNeuman:19}, \citet{HorstXia:19}, \citet{ChenHorstTran:19}, \citet{AckermannKruseUrusov:21_1, AckermannKruseUrusov:21_2}, \citet{FordeBetancourtSmith:21}, \citet{NeumanVoss:22}; we also refer to the recent monograph by~\citet{WebsterBook} for an excellent overview and discussion of price impact modeling. Moreover, in the linear case, we also establish a new explicit formula, see Proposition~\ref{prop:benchmark}, for the optimal execution strategy with stochastic transient price impact and inventory penalty, which extends the explicit deterministic solution from~\citet{ObizhaevaWang:13} and allows us to also accurately benchmark our machine learning approach. Therefore, our numerical experiments provide new reliable insights on the interaction between different model parameters and the optimal strategy. Obtaining these insights, in other words building a better intuition on how the model behaves in different regimes, was the original motivation for our work, and is valuable for practitioners who must develop gut feelings on how the model reacts as the real world (i.e.,~calibrated parameters) changes. In particular, our numerical analysis complements the studies on deterministic optimal execution strategies with non-linear transient price impact carried out in~\citet{Dang:17} and~\citet{CuratoGatheralLillo:16}.

In the broader context of machine learning methods for stochastic optimal control, our work is related to other applications of neural networks to financial problems, see \citet{Pham18deep1,Pham18deep2,IsmailPham:19r}. Perhaps the closest is~\citet{LealLauriereLehalle:21} who also study execution problems, but in a model with only temporary and permanent price impact \`a la~\citet{BertsimasLo:98}, \citet{AlmgrenChriss:01}, \citet{CarteaJaimungal:16}; see also \citet{Papanicolaou23} for a similar study. For other approaches in optimal execution in the presence of temporary price impact more in the flavor of reinforcement learning see, e.g.,  the recent survey articles by \citet{HamblyXuYang:21} and \citet{Jaimungal:22} and the references therein.

This article is organized as follows. Section \ref{sec:model} formulates our generalized optimal execution setting. Section \ref{sec:benchmarksol} presents an explicit reference solution for unconstrained trading strategies. Section \ref{sec:method} describes our methodology for parametric stochastic control via statistical surrogates. Section \ref{sec:results} presents the numerical experiments and resulting insights. Following a brief conclusion in Section \ref{sec:conclude}, Section \ref{sec:proofs} contains the proofs.

\section{Problem formulation}\label{sec:model}

Let $(\Omega,\mathcal{F}, \mathbb{F} = (\mathcal{F}_n)_{n=0,\ldots,N},\mathbb{P})$ be a discrete-time filtered probability space with trivial $\sigma$-field $\mathcal{F}_0$ and a terminal time horizon $N \in \mathbb{N}$. We consider a financial market with one risky asset whose $\mathbb{F}$-adapted real-valued unaffected fundamental price process is denoted by $P=(P_n)_{n=0,\ldots,N}$. We set $P_0 := p_0 \in \mathbb{R}_+$.

Suppose a large trader dynamically trades in the risky security and incurs price impact in an adverse manner. Specifically, for $n\in\{1, \ldots, N\}$, by choosing her number of shares in the risky asset at time $n-1$, she trades $\myx_n  \in \mathcal{F}_{n-1}$ shares in the $n$-th trading period and confronts the $n$-th fundamental random shock $\Delta P_n := P_n - P_{n-1} \in \mathcal{F}_n$. Her action permanently affects the future evolution of the mid-price process $M = (M_n)_{n=0,\ldots,N}$ which becomes
\begin{equation} \label{def:midprice}
  M_n := P_n + \gamma
  \sum_{j=1}^n \myx_j, \quad n=1,\ldots,N,
\end{equation}
after the $n$-th order $\myx_n$ is executed (we set $M_0 := P_0$). The parameter $\gamma \geq 0$ represents linear permanent price impact. 

In addition, the trader's market orders are filled at a deviation $D_n$ from the mid-price $M_n$ in~\eqref{def:midprice}. The \emph{post}-execution dynamics of these deviations $D=(D_n)_{n=1,\ldots,N}$ from the mid-price after trading $\myx_n$ shares at time $n \in \{1,\ldots,N\}$ are modeled as
\begin{equation}\label{def:deviation}
  \begin{aligned}
    D_0 & := d_0, \\
    D_n & := (1-\kappa) D_{n-1} + \eta  \vert \myx_n \vert^{\alpha} \sgn(\myx_n)  +
    \epsilon_n,
    \quad n = 1, \ldots, N,
  \end{aligned}
\end{equation}
with $d_0 \in \mathbb{R}$ denoting the given initial deviation. Thus, in the absence of the large trader's actions, the deviation tends to revert exponentially to zero at rate $\kappa \in (0,1]$, with the latter parameter known as the book \emph{resilience}. The resilience captures the transience of the instantaneous price impact, with $\kappa$ representing the fraction by which the deviation from the mid-price $M$ incurred by past trades diminishes over a trading period. Empirically, the deviation process~$D$ can be calibrated by considering the limit order book (LOB) spread and depth. 

Due to finite market depth, which is measured by $1/\eta > 0$, the trader's turnover of $\myx_n$ shares pushes the deviation in the trade's direction by a constant factor $\eta$ times the instantaneous price impact, which is assumed to be $|\myx_n|^\alpha \sgn(\myx_n)$, $\alpha > 0$. Following, e.g., \citet{BouchaudEtAl:09, BouchaudEtAl:04}, the latter power-type term generalizes the common linear situation $\alpha=1$ from~\citet{ObizhaevaWang:13} where the instantaneous price impact is $\eta \myx_n$. However, empirically the instantaneous price impact is observed to be concave (at least for relatively small $\myx_n$), so that $\alpha < 1$ seems more realistic; see~\citet{LilloFarmerMantegna:03}, \citet{BouchaudEtAl:09}, \citet{BacryIugaLasnier:15}. 

The $\mathbb{R}$-valued, $\mathbb{F}$-adapted sequence of zero-mean random variables $(\epsilon_n)_{n=1,\ldots,N}$ represents additional small perturbations in the deviation stemming, e.g., from market and limit orders, which are placed at time $n$ by other small market participants, making the transient price impact captured by the deviation process $D$ stochastic; cf., e.g.,~\citet{BechererBilarevFrentrup:18}.  

For the rest of the paper, we assume that $(\Delta P_n)_{n=1,\ldots,N}$ are independent and square integrable  random variables with mean zero. The perturbations $(\epsilon_n)_{n=1,\ldots,N}$ in~\eqref{def:deviation} are i.i.d.~normally distributed random variables with mean zero and variance $\sigma^2 >0$, and independent of $(\Delta P_n)_{n=1,\ldots,N}$ as well. We also refer to the case $\sigma = 0$ where $\epsilon_n \equiv 0$ for every $n \in \{1,\ldots,N\}$. Finally, we let the filtration~$\mathbb{F}$ be given by $\mathcal{F}_n = \sigma(\{\Delta P_1, \epsilon_1,\ldots,\Delta P_n,\epsilon_n\})$ for all $n \in \{1, \ldots, N\}$. 

\subsection{Optimal Trade Execution} \label{subsec:execution}

From now on, we suppose that the large trader wants to carry out a \emph{buying program} to buy $X_0>0$ shares by executing $N$ market \emph{buy orders} $u_n \geq 0$, $n=1,\ldots,N$. She starts with zero inventory and we use $X_n$ to denote the remaining number of shares she needs to buy after step $n$ to reach her target. That is, we set 
\begin{equation} \label{def:inventory}
X_{n} := X_0 - \sum_{j=1}^{n} \myx_j, \qquad n=1,\ldots,N,   
\end{equation}
where $X_n$ represents her remaining order to be filled \emph{after} she executed her $n$-th trade $\myx_{n}$. 

As common in the literature, in order to describe the evolution of the large trader's cash balance, we assume that the $n$-th transaction $\myx_n$ affects the mid-price $M$ and deviation $D$ gradually. More precisely, half of the $n$-th order is filled at the \emph{pre-}transaction's mid-price $M_{n-1}$, as well as the refreshed \emph{pre-}transaction's deviation $(1-\kappa) D_{n-1}$, whereas the other half is executed at the less favorable \emph{post-}transaction quantities $M_{n} - \Delta P_n$ (before the $n$-th fundamental random shock hits the stock price) and $D_n-\epsilon_n$ (before the $n$-th random perturbation of the deviation). Hence, assuming zero interest rates, the self-financing condition dictates that changes in the trader's cash balance $(C_n)_{n=1,\ldots,N}$ with initial value $c_0 \in \mathbb{R}$ are only due to her buying activity of the risky asset which is executed at the previously
described \emph{average} execution prices:
\begin{equation} \label{def:cashdynamics}
  \begin{aligned}
    C_0 := & \; c_0, \\
    C_n := & \; C_{n-1} - \left( \frac{M_{n-1} + (1-\kappa) D_{n-1} + M_{n} - \Delta P_n + D_n -\epsilon_n}{2} \right) \myx_n \\
    = & \; C_{n-1} - \left( P_{n-1} - \frac{\gamma}{2} (X_{n}
      + X_{n-1}) + \gamma X_0 \right) \myx_n\\
    & - \left( (1-\kappa) D_{n-1} + \frac{\eta}{2} \myx_n^\alpha 
    \right)  \myx_n, \qquad n = 1, \ldots, N.
  \end{aligned}
\end{equation}

\begin{lemma}
The terminal cash position $C_N$ at time $N$ of a buying schedule $(u_n)_{n=1,\ldots,N}$ with $u_n \geq 0$ for all $n\in\{1,\ldots,N\}$ and terminal state constraint $X_N = 0$ is given by
\begin{equation}\label{eq:lemma1}
  \begin{aligned}
    C_N =& \; c_{0} - \left( p_0 + \frac{\gamma}{2} X_0 \right) X_0-
    \sum_{n=1}^{N-1} X_n \Delta P_n - \sum_{n=1}^N \left( (1-\kappa) D_{n-1} + \frac{\eta}{2} 
      u_n^{\alpha} \right) u_{n} .
  \end{aligned}
\end{equation}
\end{lemma}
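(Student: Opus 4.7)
The plan is to telescope the recursion for the cash balance in \eqref{def:cashdynamics} and then simplify the resulting sums using the inventory relation $u_n = X_{n-1}-X_n$ (immediate from \eqref{def:inventory}) together with the terminal constraint $X_N=0$. Iterating \eqref{def:cashdynamics} from $n=1$ to $N$ yields
\begin{equation*}
C_N = c_0 - \sum_{n=1}^N \left(P_{n-1} - \tfrac{\gamma}{2}(X_n+X_{n-1}) + \gamma X_0\right) u_n - \sum_{n=1}^N \left((1-\kappa)D_{n-1} + \tfrac{\eta}{2} u_n^\alpha\right) u_n.
\end{equation*}
The second sum already matches the corresponding term in \eqref{eq:lemma1}, so the work reduces to simplifying the first sum. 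I would split it into a ``$\gamma$-contribution'' and a ``$P_{n-1}$-contribution'' and handle each one separately.

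For the $\gamma$-contribution, since $u_n=X_{n-1}-X_n$, the identity $(X_{n-1}+X_n)(X_{n-1}-X_n)=X_{n-1}^2-X_n^2$ produces a telescoping sum
\begin{equation*}
\sum_{n=1}^N \tfrac{\gamma}{2}(X_n+X_{n-1})\,u_n = \tfrac{\gamma}{2}\bigl(X_0^2-X_N^2\bigr)=\tfrac{\gamma}{2}X_0^2,
\end{equation*}
while $\sum_{n=1}^N \gamma X_0\, u_n = \gamma X_0^2$ by the total-inventory constraint $\sum_{n=1}^N u_n = X_0$. Combining these gives the contribution $-\tfrac{\gamma}{2}X_0^2$, i.e., the $\tfrac{\gamma}{2}X_0$ piece inside the parenthesis of \eqref{eq:lemma1}.

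For the $P_{n-1}$-contribution, I would apply Abel's summation by parts. Setting $U_n:=\sum_{j=1}^n u_j = X_0 - X_n$, so that $u_n = U_n-U_{n-1}$ with $U_0=0$ and $U_N=X_0$, one obtains
\begin{equation*}
\sum_{n=1}^N P_{n-1}(U_n - U_{n-1}) = P_{N-1}X_0 - \sum_{n=1}^{N-1}\Delta P_n\,(X_0-X_n).
\end{equation*}
Because $\sum_{n=1}^{N-1}\Delta P_n = P_{N-1}-P_0$ and $P_0=p_0$, the boundary terms collapse to $p_0 X_0$, leaving the noise term $\sum_{n=1}^{N-1} X_n\,\Delta P_n$. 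Substituting this and the $\gamma$-contribution back into the telescoped expression, and absorbing $p_0X_0+\tfrac{\gamma}{2}X_0^2$ into the factored form $(p_0+\tfrac{\gamma}{2}X_0)X_0$, produces exactly \eqref{eq:lemma1}.

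The proof is purely algebraic; there is no genuine analytic obstacle. The only point requiring care is the bookkeeping in the Abel summation, specifically the index shifts and the signs of the boundary terms that arise from $U_0=0$ and $U_N=X_0$. Once these are tracked correctly, the $\gamma$-telescoping and the summation by parts combine cleanly to deliver the stated identity.
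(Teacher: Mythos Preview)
Your argument is correct and follows essentially the same route as the paper: telescope the cash recursion \eqref{def:cashdynamics}, use $u_n=X_{n-1}-X_n$ to collapse the $\gamma$-terms via $X_{n-1}^2-X_n^2$, and apply summation by parts to $\sum_n P_{n-1}u_n$ to produce $p_0X_0+\sum_{n=1}^{N-1}X_n\Delta P_n$. The only cosmetic difference is that the paper first writes the general identity with $X_N$ still present and then imposes $X_N=0$, whereas you use $X_N=0$ earlier in the telescoping; the substance is identical.
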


\begin{proof}
Using $\myx_n = X_{n-1}-X_n$ we obtain from~\eqref{def:cashdynamics} together with~\eqref{def:deviation} and~\eqref{def:midprice}
\begin{align}
  C_N =
  & \; C_{0} - \sum_{n=1}^{N} P_{n-1} \myx_{n} +
    \frac{\gamma}{2} \sum_{n=1}^N (X_{n-1}^2 - X_{n}^2) - \gamma X_0
    \sum_{n=1}^N \myx_{n} \nonumber \\
  & - \sum_{n=1}^N \left( (1-\kappa) D_{n-1} + \frac{\eta}{2} \myx_n^{\alpha} \right) \myx_n \nonumber \\
  = & \; c_{0} - \sum_{n=1}^{N-1} X_n \Delta P_n - p_0 X_0 - \gamma X_0 (X_0 - X_N) +
      \frac{\gamma}{2} (X_0^2 - X_N^2) 
      \nonumber \\
  & - \sum_{n=1}^N \left( (1-\kappa) D_{n-1} + \frac{\eta}{2} \myx_n^{\alpha} \right)
    \myx_n.\label{eq:terminalcash}
\end{align}
Under the terminal state constraint $X_N = 0$,
the representation in~\eqref{eq:terminalcash} simplifies to \eqref{eq:lemma1}
\end{proof}

Next, in order to introduce the trader's optimization problem let us denote for all time steps $n=1,\ldots,N$ the collection of admissible buying strategies by 
\begin{equation} \label{def:admissibleSet}
    \mathcal{A}_n := \left\{ (\myx_j)_{j=n,\ldots,N} : \, \myx_j \in L^{(\alpha+1) \vee 2}(\mathcal{F}_{j-1}, \mathbb{P}),\, \myx_j \geq 0\;\text{a.s.} \; \text{for all} \; j=n,\ldots, N, \; \sum_{j=n}^N \myx_j = X_{n-1} \right\}.
\end{equation}
The trader aims to maximize her expected terminal cash position given in~\eqref{eq:lemma1} while also controlling for inventory risk. The latter is modeled through a quadratic urgency penalty $\mynu X_n^2$ for an urgency parameter $\mynu \ge 0$ on her outstanding order. Combining the two terms, the objective is to minimize
\begin{equation} \label{def:optProblem}
  \inf_{(\myx_n)_{n=1,\ldots,N}\in\mathcal{A}_1}
  \mathbb{E}\left[ \sum_{n=1}^N \left\{\left( (1-\kappa) D_{n-1} +\frac{\eta}{2} \myx_{n}^{\alpha} \right) 
    \myx_{n} + \mynu (X_{n-1} - \myx_n)^2 \right\}\right].
\end{equation} 
Note that $X_{n-1}-\myx_n = X_n$ is the remaining order to be filled; our notation emphasizes the role of $X_{n-1}$ and $D_{n-1}$ as the state variables, and $\myx_n$ as the control. As it is well-known in the literature, we remark that the permanent impact $\gamma$ disappears in \eqref{def:optProblem} and from our further discussion since it only adds a fixed offset $-0.5\gamma X_0^2$ to the terminal cash position, irrespective of the trading strategy. Similarly, the martingale term $\sum_n X_n \Delta P_n$ in \eqref{eq:lemma1} disappears as well after taking expectations thanks to the independence of price increments.

We introduce for all $n \in \{1,\ldots,N\}$ the value function as
\begin{equation}\label{def:valuefunction}
    \begin{aligned}
    & V_n(x, d) := \inf_{(\myx_j)_{j=n,\ldots,N} \in \mathcal{A}_{n} } \mathbb{E}\Bigg[ \sum_{j=n}^N \bigg\{ \left( (1-\kappa) D_{j-1} +\frac{\eta}{2} \myx_{j}^{\alpha}\right) \bigg.
    \myx_{j} \Bigg. \\
    & \Bigg. \hspace{195pt} \bigg. + \mynu (X_{j-1} - \myx_j)^2 \bigg\} \, \bigg\vert \, X_{n-1}=x, D_{n-1}=d \Bigg].
    \end{aligned}
\end{equation}

To characterize $V_n$, we use the corresponding dynamic programming (DP) equation.
Since at the last period $N$, the admissible set $\mathcal{A}_N$ is a singleton $\myx_N \equiv X_{N-1}$, we have the terminal condition
\begin{align} \label{eq:bellman-1}
V_{N}(x, d) 
    & = (1-\kappa) \cdot d \cdot x + \frac{\eta}{2} x^{\alpha+1},
\end{align}
and then for $n=N-1,\ldots,1$
\begin{align}\label{eq:bellman-2}
V_{n}(x, d) 
    & = \inf_{\myx \in [0,x]} \mathbb{E}\Bigg[ (1-\kappa) \cdot d \cdot \myx + \frac{\eta}{2} \myx^{\alpha+1}  + \mynu ( x - \myx)^2  \Bigg. \nonumber \\
    & \hspace{52pt} \Bigg. + V_{n+1}(x-\myx,D_{n}) \, \bigg\vert \, X_{n-1}=x, D_{n-1}=d \Bigg]
\end{align}
with  $D_n = (1-\kappa)d + \eta u^\alpha + \epsilon_n$ as postulated in~\eqref{def:deviation} and the expectation being with respect to the Gaussian noise $\epsilon_n \sim \mathcal{N}(0,\sigma^2)$. 

The DP equation \eqref{eq:bellman-2} has two primary state variables $(X_{n-1}, D_{n-1})$. Below, we will also consider its dependence on the static parameters $\kappa, \eta, \alpha, \mynu, \sigma$. Recall that $\kappa \in (0,1]$ is the book resilience (smaller $\kappa$ increases the transient price impact); $\eta > 0$ is the instantaneous price impact (larger $\eta$ makes trades affect $D_n$ more); $\mynu \ge 0$ is the urgency parameter (larger $\mynu$ encourages larger buys to mitigate inventory risk); $\alpha \approx 1$ is the exponent of the instantaneous price impact function ($\alpha \gtrless 1$ leads to convex (resp.~concave) price impact) and $\sigma >0$ is the standard deviation of the one-step-ahead deviation $D_n$. Note that while $\kappa$ and $\alpha$ are dimensionless, the value of $\mynu$ should be thought of relative to the initial inventory $X_0$, and the values of $\eta,\sigma$ should be picked relative to fluctuations in $D_n$. For example, if $X_0 = 10^5$ (buying program of a hundred thousand shares) then $\mynu$ should be on the order of $10^{-4}$, $\eta$ should be on the order of $10^{-3}$ (so that $D_n$ is on the order of 10-100), and $\sigma$ should be on the order of 1.

\begin{rem}[Unconstrained Problem] In the above formulated optimal trade execution problem it is tempting to a priori allow for trading in both directions (buy orders $u_n > 0$ and sell orders $u_n < 0$) as in the discrete-time linear transient price impact model in~\citet{AckermannKruseUrusov:21_1}; i.e.,
to compute
\begin{equation}\label{def:valuefunctionUC}
    \begin{aligned}
    & V^\circ_n(x, d) := \inf_{(\myx_j)_{j=n,\ldots,N} \in \mathcal{A}_{n}^\circ } \mathbb{E}\Bigg[ \sum_{j=n}^N \bigg\{ \left( (1-\kappa) D_{j-1} +\frac{\eta}{2} \vert \myx_{j} \vert^{\alpha} \sgn(u_j) \right) \bigg.
    \myx_{j} \Bigg. \\
    & \Bigg. \hspace{195pt} \bigg. + \mynu (X_{j-1} - \myx_j)^2 \bigg\} \, \bigg\vert \, X_{n-1}=x, D_{n-1}=d \Bigg]
    \end{aligned}
\end{equation}
over the set of unconstrained order schedules
\begin{equation} \label{def:admissibleSetUC}
    \mathcal{A}_n^\circ := \left\{ (\myx_j)_{j=n,\ldots,N} : \, \myx_j \in L^{(\alpha+1) \vee 2}(\mathcal{F}_{j-1}, \mathbb{P}),\; \mathbb{R} \text{-valued for all} \; j=n,\ldots, N, \; \sum_{j=n}^N \myx_j = X_{n-1} \right\}.
\end{equation}
However, as it will become apparent in Section~\ref{sec:benchmarksol}, the exogenous noise $(\epsilon_n)_{n=1,\dots,N}$ in the deviation process $D$ in~\eqref{def:deviation} would then trigger price manipulation in the sense of~\citet{HubermanStanzl:04}. That is, there would exist profitable round-trip trades, i.e., nonzero strategies $(u_n)_{n=1,\ldots,N} \in \mathcal{A}^\circ_1$ that generate strictly negative expected costs with $X_0 = 0 = X_N$  by exploiting a nonzero deviation $D_n$ for some $n \in \{0,1,\ldots,N\}$. This can be ruled out by either introducing a bid-ask spread or confining trading in one direction only; see also the discussion on price manipulation in~\cite{AckermannKruseUrusov:21_1, FruthSchoenebornUrusov:19, FruthSchoenebornUrusov:14} for the linear case $\alpha=1$, as well as \cite{AlfonsiSchied:10,Gatheral:10, CuratoGatheralLillo:16} for the nonlinear case $\alpha \neq 1$.
\end{rem}

\section{Explicit Solution for the Unconstrained Linear Case} \label{sec:benchmarksol}

The unconstrained optimal trade execution problem in~\eqref{def:valuefunctionUC} and~\eqref{def:admissibleSetUC} can be solved explicitly in the case of linear transient price impact ($\alpha =1$) following a similar computation as done by~\citet{ObizhaevaWang:13}. They derived a solution in the deterministic case ($\sigma=0$) without inventory penalty ($\nu=0$) and with initial deviation $d_0 = 0$.

\begin{prop} \label{prop:benchmark}
Let $\alpha=1$ and let $\sigma \geq 0$. Define
\begin{equation} \label{prop:benchmark:recursionTerminal}
\mya_N := \frac{\eta}{2}, \; \myb_N := 1 - \kappa, \; \myc_N := 0
\end{equation}
and, recursively, for all $n=N-1,\ldots,1$, set
\begin{equation} \label{prop:benchmark:recursion}
\left\{\begin{aligned}
    \mya_n := & \, \mynu + \mya_{n+1} - \frac{\left( 2 \mynu + 2 \mya_{n+1} - \eta \myb_{n+1} \right)^2}{2 \eta + 4 \mynu + 4 \mya_{n+1} - 4 \eta \myb_{n+1} + 4 \eta^2 \myc_{n+1}} , \\
    \myb_n := & \, (1-\kappa) \myb_{n+1} + \frac{2 (1-\kappa) \left( 2 \mynu + 2 \mya_{n+1} - \eta \myb_{n+1} \right) (1 - \myb_{n+1} + 2 \eta \myc_{n+1} ) }{2 \eta + 4 \mynu + 4 \mya_{n+1} - 4 \eta \myb_{n+1} + 4 \eta^2 \myc_{n+1} 
    } , \\
    \myc_n := & \, (1-\kappa)^2 \myc_{n+1} - \frac{(1-\kappa)^2 (1 - \myb_{n+1} + 2 \eta \myc_{n+1} )^2}{2 \eta + 4 \mynu + 4 \mya_{n+1} - 4 \eta \myb_{n+1} + 4 \eta^2 \myc_{n+1}}. \end{aligned}\right. 
\end{equation}
Then the  value function in~\eqref{def:valuefunctionUC} is quadratic in its arguments and given by
\begin{equation} \label{prop:benchmark:valuefunction}
V^\circ_n(x, d) = \mya_n x^2 + \myb_n x d + \myc_n d^2 + \sigma^2 \sum_{j=n+1}^N \myc_{j} \quad (n=1,\ldots,N).
\end{equation}
Moreover, the optimal
order execution strategy $(\myx^\circ_n)_{n=1,\ldots,N} \in \mathcal{A}_1^\circ$ is given by
\begin{equation} \label{prop:benchmark:feedback}
\begin{aligned}
    \myx^\circ_n = & \, \frac{ ( 2 \mynu + 2 \mya_{n+1} - \eta \myb_{n+1}) X^\circ_{n-1} - (1 - \myb_{n+1} + 2 \eta\myc_{n+1}) (1-\kappa) D^\circ_{n-1}}{\eta + 2 \mynu + 2 \mya_{n+1} - 2 \eta \myb_{n+1} + 2 \eta^2 \myc_{n+1}} \quad (n=1,\ldots, N-1), \\
    \myx^\circ_N = & \, X^\circ_{N-1},
\end{aligned}
\end{equation}
where $X^\circ_{n-1} = X_0 - \sum_{j=1}^{n-1} \myx^\circ_j$ and 
\begin{equation} \label{prop:benchmark:deviation}
D^\circ_{n-1} = (1-\kappa)^{n-1} d_0 + \eta \sum_{j=1}^{n-1} (1-\kappa)^{(n-1)-j} \myx^\circ_j + \sum_{j=1}^{n-1} (1-\kappa)^{(n-1)-j} \epsilon_j
\end{equation}
for all $n=1,\ldots,N+1$. 
\end{prop}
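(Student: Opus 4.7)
The plan is to proceed by backward induction on $n$, using the dynamic programming equation for the unconstrained problem (analogous to \eqref{eq:bellman-1}--\eqref{eq:bellman-2} but with $\myx \in \mathbb{R}$), and verifying that the quadratic ansatz \eqref{prop:benchmark:valuefunction} is propagated with the claimed recursion \eqref{prop:benchmark:recursion}.

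\textbf{Base case.} At $n=N$, the constraint $X_N = 0$ forces $\myx_N = X_{N-1}$, so $V^\circ_N(x,d) = (1-\kappa) d x + \tfrac{\eta}{2} x^2$, which matches \eqref{prop:benchmark:valuefunction} with $\mya_N, \myb_N, \myc_N$ as in \eqref{prop:benchmark:recursionTerminal} and with an empty constant sum $\sigma^2 \sum_{j=N+1}^N \myc_j = 0$.

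\textbf{Inductive step.} Assume $V^\circ_{n+1}(x',d') = \mya_{n+1} (x')^2 + \myb_{n+1} x' d' + \myc_{n+1} (d')^2 + \sigma^2 \sum_{j=n+2}^N \myc_j$. Substituting into the Bellman equation with $\alpha = 1$ and the next-period state $(x-\myx, (1-\kappa)d + \eta \myx + \epsilon_n)$, one takes the expectation with respect to $\epsilon_n \sim \mathcal{N}(0,\sigma^2)$. Using $\mathbb{E}[\epsilon_n] = 0$ and $\mathbb{E}[\epsilon_n^2] = \sigma^2$, the deviation noise contributes only the extra constant $\sigma^2 \myc_{n+1}$, accounting exactly for the growing constant in \eqref{prop:benchmark:valuefunction}. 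The remaining expression inside the infimum,
\[
F(\myx) = (1-\kappa) d \myx + \tfrac{\eta}{2} \myx^2 + \mynu (x-\myx)^2 + \mya_{n+1} (x-\myx)^2 + \myb_{n+1} (x-\myx)\bigl((1-\kappa) d + \eta \myx\bigr) + \myc_{n+1}\bigl((1-\kappa) d + \eta \myx\bigr)^2,
\]
is a quadratic in $\myx$ with leading coefficient
\[
\tfrac{1}{2}\bigl(\eta + 2\mynu + 2\mya_{n+1} - 2\eta \myb_{n+1} + 2\eta^2 \myc_{n+1}\bigr),
\]
which is exactly half the denominator appearing in \eqref{prop:benchmark:recursion} and \eqref{prop:benchmark:feedback}. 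The first-order condition $F'(\myx)=0$ then yields a linear feedback in $(x,d)$ that I expect to match \eqref{prop:benchmark:feedback} verbatim after grouping terms.

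\textbf{Completing the induction and final details.} Plugging the minimizer back into $F$ produces a quadratic form in $(x,d)$; collecting the coefficients of $x^2$, $xd$, and $d^2$ should reproduce the recursions for $\mya_n$, $\myb_n$, $\myc_n$ in \eqref{prop:benchmark:recursion}. The formula \eqref{prop:benchmark:deviation} for $D^\circ_{n-1}$ is then just the explicit solution of the linear recursion \eqref{def:deviation} along the optimal path. Admissibility of $(u^\circ_n)$ in $\mathcal{A}^\circ_1$ follows because the feedback is linear in $(X^\circ_{n-1}, D^\circ_{n-1})$, which are square-integrable by induction given the $L^2$ assumptions on $\epsilon_n$.

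\textbf{Main obstacle.} The conceptual argument is routine LQ dynamic programming, and the only genuine hurdle is the bookkeeping: one has to verify that the quadratic coefficients obtained after substituting the minimizer back into $F(\myx)$ agree exactly with the asymmetric-looking expressions in \eqref{prop:benchmark:recursion}, including the precise coupling between $\myb_n$ and the cross term $(1-\myb_{n+1}+2\eta\myc_{n+1})$. I would also briefly argue that the leading coefficient $\tfrac{1}{2}(\eta + 2\mynu + 2\mya_{n+1} - 2\eta\myb_{n+1} + 2\eta^2\myc_{n+1})$ remains strictly positive along the induction (so that the infimum is attained at the critical point and \eqref{prop:benchmark:feedback} is well-defined); this should follow from the fact that $F$ itself is strictly convex as a sum of nonnegative quadratic forms plus $\tfrac{\eta}{2}\myx^2$ and $\mynu(x-\myx)^2$ once one rewrites $\mya_{n+1}(x-\myx)^2 + \myb_{n+1}(x-\myx)((1-\kappa)d+\eta\myx) + \myc_{n+1}((1-\kappa)d+\eta\myx)^2$ using the inductive quadratic form $V^\circ_{n+1}\ge 0$-type structure or by direct verification that the discriminant of the recursion stays in the stable regime.
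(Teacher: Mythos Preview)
Your approach is essentially the paper's: backward LQ dynamic programming with the quadratic ansatz, the same expansion of $F(\myx)$, the same first-order condition yielding \eqref{prop:benchmark:feedback}, and substitution back to read off \eqref{prop:benchmark:recursion}; the paper also derives \eqref{prop:benchmark:deviation} from \eqref{def:deviation} and checks admissibility by noting each $\myx^\circ_n$ is a linear function of $(\epsilon_1,\ldots,\epsilon_{n-1})$, hence Gaussian and in $L^2$.

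The one place where your sketch does not quite match the paper is the positivity of the leading coefficient of $F$. Your first suggested route, via a ``$V^\circ_{n+1}\ge 0$-type structure'' so that $F$ is a sum of nonnegative quadratics plus $\tfrac{\eta}{2}\myx^2$, does not go through: the paper shows (cf.\ the remark on profitable round trips) that $\myc_{n+1}<0$ for $n+1<N$ when $\kappa<1$, so the quadratic form $\mya_{n+1}(\cdot)^2+\myb_{n+1}(\cdot)(\cdot)+\myc_{n+1}(\cdot)^2$ is indefinite and cannot be used as a nonnegative building block. The paper instead isolates this point into Lemma~\ref{lem:positive}, which verifies $2\eta+4\mynu+4\mya_{n}-4\eta\myb_{n}+4\eta^2\myc_{n}>0$ by a direct backward induction using the algebraic identity \eqref{proof:lem:eq1}. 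Your second option, ``direct verification that the discriminant of the recursion stays in the stable regime,'' is exactly this, so your plan is fine once you commit to that route.
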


\begin{rem}\label{rem:deterministicSolution}
In the deterministic case $\sigma=0$ and setting $\nu = 0$ as well as $d_0 = 0$, the solution in Proposition~\ref{prop:benchmark} coincides with the deterministic solution presented in~\citet[Proposition 1]{ObizhaevaWang:13}.
\end{rem}

Observe that for every $n \in \{1, \ldots, N \}$ the optimal execution policy $\myx_n^\circ$ in~\eqref{prop:benchmark:feedback} is given as a deterministic linear feedback function of the controlled random state variables $X_{n-1}^\circ$ and $D_{n-1}^\circ$. Also, the coefficients of this linear function do not depend on $\sigma$. Therefore, in the deterministic version of the problem where $\sigma = 0$, the optimal policy is given by the exact same feedback law in~\eqref{prop:benchmark:feedback}. Moreover, since the state variables $X_{n-1}^\circ$ and $D_{n-1}^\circ$ are themselves linear in $(\myx^\circ_j)_{j=1,\ldots,n-1}$ and the i.i.d.~zero-mean Gaussian noise $(\epsilon_j)_{j=1,\ldots,n-1}$ , it follows that the optimal trades $(\myx^\circ_n)_{n=1,\ldots,N}$ of the stochastic version of the problem with $\sigma >0$ are in fact just a Gaussian-distributed perturbation from the corresponding deterministic solution with mean given by the latter. 

\begin{cor} \label{lem:benchmark}
On average, 
the optimal order executions $(\myx_n^\circ)_{n=1,\ldots,N}$ in~\eqref{prop:benchmark:feedback} coincide with $\myx_n^\circ$ for the deterministic problem with $\sigma=0$. Moreover, viewed as a random variable, $\myx_n^\circ$ has a Gaussian distribution with the above mean.  
\end{cor}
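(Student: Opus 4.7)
The plan is to exploit the structural form of the optimal feedback law from Proposition~\ref{prop:benchmark}. The key observation is that the recursion \eqref{prop:benchmark:recursionTerminal}--\eqref{prop:benchmark:recursion} defining $(a_n, b_n, c_n)$ does not involve $\sigma$, and consequently the feedback coefficients appearing in \eqref{prop:benchmark:feedback} are deterministic numbers that are identical in the stochastic ($\sigma > 0$) and the deterministic ($\sigma = 0$) problem. Let me denote by $(\tilde u_n, \tilde X_n, \tilde D_n)$ the optimal trades and induced state trajectory of the deterministic problem, and by $(u_n^\circ, X_n^\circ, D_n^\circ)$ those of the stochastic problem, both computed from the identical feedback law with the same initial data $X_0, d_0$.

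First I would prove by induction on $n \in \{1, \ldots, N\}$ the following statement: the difference $u_n^\circ - \tilde u_n$ is an $\mathcal{F}_{n-1}$-measurable affine function of the independent Gaussian perturbations $\epsilon_1, \ldots, \epsilon_{n-1}$ with zero constant term. The base case $n=1$ is immediate since $X_0^\circ = X_0 = \tilde X_0$ and $D_0^\circ = d_0 = \tilde D_0$ are deterministic, hence the feedback law gives $u_1^\circ = \tilde u_1$. For the inductive step, I would use the identities $X_n^\circ - \tilde X_n = -\sum_{j=1}^n (u_j^\circ - \tilde u_j)$ together with \eqref{prop:benchmark:deviation}, which yields
\begin{equation*}
D_n^\circ - \tilde D_n = \eta \sum_{j=1}^n (1-\kappa)^{n-j}\bigl(u_j^\circ - \tilde u_j\bigr) + \sum_{j=1}^n (1-\kappa)^{n-j} \epsilon_j.
\end{equation*}
By the induction hypothesis both $X_n^\circ - \tilde X_n$ and $D_n^\circ - \tilde D_n$ are affine in $\epsilon_1, \ldots, \epsilon_n$ with zero constant term, and plugging these into the linear feedback \eqref{prop:benchmark:feedback} for $u_{n+1}^\circ - \tilde u_{n+1}$ closes the induction.

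Once this structural claim is established, the two assertions of the corollary follow at once. Taking expectations and using $\mathbb{E}[\epsilon_j] = 0$ gives $\mathbb{E}[u_n^\circ] = \tilde u_n$, which is the claim about averages. Moreover, being an affine function of the jointly Gaussian vector $(\epsilon_1, \ldots, \epsilon_{n-1})$, the random variable $u_n^\circ$ is itself Gaussian with mean $\tilde u_n$.

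There is no real obstacle here; the result is essentially a corollary of the explicit form \eqref{prop:benchmark:feedback} together with the $\sigma$-independence of the Riccati-type recursion \eqref{prop:benchmark:recursion}. The only mildly delicate point is to verify cleanly that the $\sigma^2 \sum_{j=n+1}^N c_j$ term in \eqref{prop:benchmark:valuefunction}, which is the one place $\sigma$ enters the value function, does not feed back into the minimizer — but this is automatic since that term is additively separable and independent of the control $u$.
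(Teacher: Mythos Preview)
Your argument is correct and follows exactly the same approach as the paper. The paper does not give a separate formal proof of this corollary; instead it justifies it in the paragraph immediately preceding the statement (and again briefly within the proof of Proposition~\ref{prop:benchmark}), observing that the feedback coefficients in~\eqref{prop:benchmark:feedback} are $\sigma$-free and that the controlled states are linear in the i.i.d.\ zero-mean Gaussian shocks, so each $u_n^\circ$ is a linear transformation of $(\epsilon_1,\ldots,\epsilon_{n-1})$ --- your induction simply makes this explicit.
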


\begin{rem}[Profitable Round-Trip Strategies]
Let $N \geq 2$. By virtue of Lemma~\ref{lem:positive} below, we obtain that the coefficients $(c_n)_{n=1,\ldots,N-1}$ in~\eqref{prop:benchmark:recursion} are all strictly negative for $\kappa < 1$. As a consequence, if $\sigma > 0$ and taking zero initial buying volume outstanding ($X_0 = 0$), we have $V^\circ_n(0, d) < V^\circ_n(0,0) < 0$ in~\eqref{prop:benchmark:valuefunction} for $n<N$. In other words, the resulting optimal strategy in~\eqref{prop:benchmark:feedback} is a profitable round-trip strategy in the sense  of~\citet{HubermanStanzl:04}. This happens because the feedback policy readily exploits a nonzero deviation and its resilience as it arises; see also the detailed discussion in~\cite{AckermannKruseUrusov:21_1,FruthSchoenebornUrusov:14, FruthSchoenebornUrusov:19}. Similarly, in line with the latter references, observe that there are no profitable round-trip strategies (nor transaction triggered price manipulation strategies in the sense of~\citet{AlfonsiSchiedSlynko:12}) in the deterministic resilience case $\sigma = 0$, as long as the initial deviation satisfies $d_0 = 0$ (because $V^\circ_0(0,0) = 0$ and $u^\circ_n \equiv 0$ for all $n=1,\ldots,N$).
\end{rem}

\subsection{Deterministic Resilience Case $\sigma=0$}

In the deterministic case $\sigma=0$, the solution presented in Proposition~\ref{prop:benchmark} can be rewritten in an explicit closed-form without backward recursion in~\eqref{prop:benchmark:recursion} and forward feedback policy in~\eqref{prop:benchmark:feedback}. In view of Corollary~\ref{lem:benchmark}, this is useful for revealing the dependence of the average optimal order executions $\myx^\circ_1, \ldots, \myx^\circ_N$ on the model parameters $\kappa \in (0,1]$, $\eta > 0$ and $\nu \geq 0$. 

\begin{prop} \label{prop:benchmark_det}
Let $\sigma =0$. Moreover, let $\mathfrak{a}, \mathfrak{b}, \mathfrak{c}, \mathfrak{a}^x, \mathfrak{b}^x, \mathfrak{c}^x, \mathfrak{a}^d, \mathfrak{b}^d, \mathfrak{c}^d \in \mathbb{R}^N$ denote the vectors defined in~\eqref{def:mathcalabc}, \eqref{def:mathcalabcx}, \eqref{def:mathcalabcd} below. Set
\begin{equation} \label{prop:benchmark_det_bc}
\begin{aligned}
\hat{b} := - \frac{\mathfrak{a}^{\top} (\eta\mathfrak{b}+(1-\kappa)\mathfrak{b}^d) + (1-\kappa) (\mathfrak{a}^d)^\top \mathfrak{b} + 2 \nu (\mathfrak{a}^x)^\top \mathfrak{b}^x}{\eta \mathfrak{a}^\top \mathfrak{a} + 2 (1-\kappa) (\mathfrak{a}^d)^\top \mathfrak{a} + 2 \nu (\mathfrak{a}^x)^\top \mathfrak{a}^x}, \\
\hat{c} := - \frac{\mathfrak{a}^{\top} (\eta\mathfrak{c}+(1-\kappa)\mathfrak{c}^d) + (1-\kappa) (\mathfrak{a}^d)^\top \mathfrak{c} + 2 \nu (\mathfrak{a}^x)^\top \mathfrak{c}^x}{\eta \mathfrak{a}^\top \mathfrak{a} + 2 (1-\kappa) (\mathfrak{a}^d)^\top \mathfrak{a} + 2 \nu (\mathfrak{a}^x)^\top \mathfrak{a}^x}.
\end{aligned}    
\end{equation}
Then the optimal order execution strategy from Proposition~\ref{prop:benchmark} in~\eqref{prop:benchmark:feedback} is given by
\begin{equation} \label{prop:benchmark_det_x1}
\myx^\circ_1 = \hat{b} \cdot d_0 + \hat{c} \cdot X_0
\end{equation}
and
\begin{equation} \label{prop:benchmark_det_xi}
\myx^\circ_n = \mathfrak{a}_n \cdot \myx^\circ_1 + \mathfrak{b}_n \cdot d_0 + \mathfrak{c}_n \cdot X_0 \qquad (n=2,\ldots,N).
\end{equation}
Moreover, the optimally controlled deviation process and remaining inventory are given by 
\begin{equation} \label{prop:benchmark_det_XD}
\begin{aligned}
X^\circ_n = & \, \mathfrak{a}^x_{n+1} \cdot \myx^\circ_1 + \mathfrak{b}^x_{n+1} \cdot d_0 + \mathfrak{c}^x_{n+1} \cdot X_0, \\ D^\circ_n = & \, \mathfrak{a}^d_{n+1} \cdot u^\circ_1 + \mathfrak{b}^d_{n+1} \cdot d_0 + \mathfrak{c}^d_{n+1} \cdot X_0
\end{aligned}
\qquad (n=0,\ldots,N-1).
\end{equation}
\end{prop}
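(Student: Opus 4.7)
In the deterministic case $\sigma = 0$, the state dynamics $X_n = X_{n-1} - u_n$ and $D_n = (1-\kappa) D_{n-1} + \eta u_n$ together with the feedback policy~\eqref{prop:benchmark:feedback} from Proposition~\ref{prop:benchmark} are all linear in their inputs. Consequently, the entire optimal trajectory, viewed as a function of the initial inventory $X_0$, the initial deviation $d_0$, and the first trade $u_1$ (treated provisionally as a free parameter), is affine. My plan is to (i) unroll this linear structure by a forward induction in $n$ to obtain~\eqref{prop:benchmark_det_xi}--\eqref{prop:benchmark_det_XD}, and then (ii) determine $u_1^\circ$ by a one-dimensional convex quadratic minimization of the cost, which yields~\eqref{prop:benchmark_det_x1}--\eqref{prop:benchmark_det_bc}.

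For step (i), the base case is the trivial representation $u_1 = 1 \cdot u_1 + 0 \cdot d_0 + 0 \cdot X_0$, and analogously for $X_0$ and $d_0$. For the induction step, given affine representations of $X_{n-1}$ and $D_{n-1}$ in $(u_1, d_0, X_0)$, the feedback formula~\eqref{prop:benchmark:feedback} immediately yields $u_n$ (for $n \ge 2$) as an affine combination whose coefficients define $\mathfrak{a}_n, \mathfrak{b}_n, \mathfrak{c}_n$. Advancing the state one step then produces the next coefficient triples $\mathfrak{a}^x_{n+1}, \mathfrak{b}^x_{n+1}, \mathfrak{c}^x_{n+1}$ and $\mathfrak{a}^d_{n+1}, \mathfrak{b}^d_{n+1}, \mathfrak{c}^d_{n+1}$. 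This is exactly the recursive construction behind the definitions in~\eqref{def:mathcalabc}--\eqref{def:mathcalabcd}, and so~\eqref{prop:benchmark_det_xi} and~\eqref{prop:benchmark_det_XD} follow directly by construction.

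For step (ii), I would substitute these affine expressions into the deterministic cost
\begin{equation*}
J(u_1; d_0, X_0) = \sum_{n=1}^N \left[ (1-\kappa) D_{n-1} u_n + \frac{\eta}{2} u_n^2 + \nu X_n^2 \right],
\end{equation*}
which becomes a convex quadratic in $u_1$ (convexity follows from $\eta > 0$ and $\nu \ge 0$). Collecting coefficients, the $u_1^2$ term carries $\tfrac{\eta}{2} \mathfrak{a}^\top \mathfrak{a} + (1-\kappa) (\mathfrak{a}^d)^\top \mathfrak{a} + \nu (\mathfrak{a}^x)^\top \mathfrak{a}^x$, matching the denominator structure of~\eqref{prop:benchmark_det_bc}, while the $u_1 d_0$ and $u_1 X_0$ cross-terms produce exactly the numerators of $\hat b$ and $\hat c$ respectively. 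The first-order condition $\partial J / \partial u_1 = 0$ then delivers $u_1^\circ = \hat b \, d_0 + \hat c \, X_0$, proving~\eqref{prop:benchmark_det_x1}. Optimality is automatic from strict convexity, and consistency with~\eqref{prop:benchmark:feedback} at $n=1$ is guaranteed by the dynamic programming principle.

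The main obstacle is the algebraic bookkeeping in step (ii): matching each term in the expansion of the quadratic cost to the specific inner-product form in~\eqref{prop:benchmark_det_bc}. Step (i) is mechanical once the base case is set, but the quadratic expansion in step (ii) requires carefully tracking how each summand $(1-\kappa) D_{n-1} u_n$, $(\eta/2) u_n^2$, and $\nu X_n^2$ decomposes under the affine substitutions, and then recognizing the emerging sums over $n$ as the stated inner products of the vectors $\mathfrak{a}, \mathfrak{b}, \mathfrak{c}, \mathfrak{a}^x, \mathfrak{b}^x, \mathfrak{c}^x, \mathfrak{a}^d, \mathfrak{b}^d, \mathfrak{c}^d$. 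Once those matchings are in place, the closed-form expressions for $\hat b$ and $\hat c$ follow from a single completing-the-square step.
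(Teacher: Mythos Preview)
Your overall architecture---parametrize the trajectory by $u_1$, substitute into the cost, and minimize over $u_1$---is sound and is also the paper's strategy. But your step~(i) contains a real gap: the vectors $\mathfrak{a},\mathfrak{b},\mathfrak{c}$ (and their $x,d$ counterparts) are \emph{not} defined by unrolling the feedback law~\eqref{prop:benchmark:feedback}. They are defined in~\eqref{def:mathcalabc}--\eqref{def:mathcalabcd} through the recursion $u_i=\tilde a\,u_{i-1}+\tilde b\,D_{i-2}+\tilde c\,X_{i-2}$ with \emph{time-independent} constants $\tilde a,\tilde b,\tilde c$ given in~\eqref{def:constants}. The paper obtains that recursion by writing down the Lagrangian first-order conditions $\partial C/\partial u_i=\lambda$, exploiting the identity~\eqref{proof:benchmark_det:pC3} between consecutive partials, and then differencing~\eqref{proof:benchmark_det:FOC} for successive indices. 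Your feedback-unrolling would instead produce a recursion whose coefficients at step $n$ depend on the backward quantities $a_{n+1},b_{n+1},c_{n+1}$ from Proposition~\ref{prop:benchmark}; that is a different formal object, and your assertion that it is ``exactly the recursive construction behind the definitions'' is not justified.

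The two parametrizations of $(u_2,\ldots,u_N)$ in terms of a free $u_1$ do not coincide off the optimum: yours enforces optimality of the continuation given $u_1$ (i.e., the FOCs for $i\ge 2$), while the paper's recursion for $i=2$ already uses the difference of the FOCs at $i=1$ and $i=2$, which presupposes optimality of $u_1$ itself. They agree at $u_1=u_1^\circ$, but since the representation $u_n^\circ=\mathfrak a_n u_1^\circ+\mathfrak b_n d_0+\mathfrak c_n X_0$ is not unique (three coefficients, only two free data $d_0,X_0$), matching values does not force your coefficient vectors to equal the paper's. Consequently your step~(ii) would yield $\hat b,\hat c$ expressed in terms of the wrong vectors. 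To close the gap you should either derive the $\tilde a,\tilde b,\tilde c$ recursion directly via the Lagrangian/FOC route (as the paper does), or else prove that the feedback-unrolled coefficients satisfy the same recursion~\eqref{def:mathcalabc}---a nontrivial identity linking the forward $\tilde a,\tilde b,\tilde c$ to the backward $a_n,b_n,c_n$.
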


Observe that in Proposition~\ref{prop:benchmark_det} the deterministic optimal strategy $u^\circ_1,\ldots,u^\circ_N$ in~\eqref{prop:benchmark_det_xi} is fully characterized by the first trade $u^\circ_1$ in~\eqref{prop:benchmark_det_x1} and the size of the orders vary over time $n \in \{1,\ldots,N\}$. A further simplification is obtained when there is no urgency/inventory penalty, i.e., $\nu=0$. Specifically, all intermediate trades $u^\circ_2,\ldots,u^\circ_{N-1}$ are flat and determined as a $\kappa$-fraction of the first order $u^\circ_1$, shifted by a proportion of $d_0$. 

\begin{cor} \label{cor:benchmark_det}
Set $\nu = 0$ in Proposition~\ref{prop:benchmark_det}. Then in~\eqref{prop:benchmark_det_bc} it holds that
\begin{equation} \label{cor:benchmark_det_bc}
    \hat{b} = - \frac{\eta b (N-2) \big(1 + \kappa (N-1) \big) + \kappa (1-\kappa)}{\kappa\eta(N-1)(2+(N-2)\kappa)}
    \quad \text{and} \quad \hat{c} =  \frac{1}{2+(N-2)\kappa}, 
\end{equation}
and the optimal intermediate trades in~\eqref{prop:benchmark_det_xi} with initial trade $\myx^\circ_1$ in~\eqref{prop:benchmark_det_x1} are constant and simplify to
\begin{equation} \label{cor:benchmark_det_opt_1}
\myx^\circ_n = \kappa \cdot \myx^\circ_1 + \tilde{b} \cdot d_0 \qquad (n=2,\ldots,N-1)
\end{equation}
with $\tilde{b}$ defined in~\eqref{def:constants} below. The final trade is given by $\myx^\circ_N = X_0 - (1+(N-2)\kappa) \cdot \myx^\circ_1 - (N-2) \tilde{b} \cdot d_0$. The optimally controlled deviation process and remaining inventory in~\eqref{prop:benchmark_det_XD} simplify to
\begin{equation} \label{cor:benchmark_det_xd}
\begin{aligned}
X^\circ_n = & \, X_0  - (1+(n-1)\kappa) \cdot \myx^\circ_1 - (n-1) \tilde{b} \cdot d_0, \\ D^\circ_n = & \, \eta \cdot \myx^\circ_1 + (1-\kappa) \cdot d_0
\end{aligned}
\qquad (n=1,\ldots,N-1).
\end{equation}
In particular, if $d_0 = 0$ it holds that
\begin{equation} \label{cor:benchmark_det_opt_2}
    \myx^\circ_1 = \myx^\circ_N = \frac{X_0}{2+(N-2) \kappa} \quad \text{and} \quad \myx^\circ_2=\ldots= \myx^\circ_{N-1} = \kappa \myx^\circ_1. 
\end{equation}
\end{cor}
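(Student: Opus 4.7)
The plan is to bypass the matrix apparatus of Proposition~\ref{prop:benchmark_det} and solve the deterministic problem with $\nu=0$ directly from its first-order conditions. With $\alpha=1$, $\sigma=0$, $\nu=0$, and writing $r := 1-\kappa$, expanding $D_{n-1} = r^{n-1} d_0 + \eta\sum_{j<n} r^{n-1-j} u_j$ in the cost from~\eqref{def:optProblem} reduces the problem to minimizing a strictly convex quadratic form in $u=(u_1,\ldots,u_N) \in \mathbb{R}^N$ subject to $\sum_n u_n = X_0$. A short calculation shows that the Lagrangian stationarity condition at $u_k$ takes the symmetric convolution form
\begin{equation*}
\mathrm{FOC}_k:\quad  \eta\sum_{j=1}^N r^{|j-k|} u_j + r^k d_0 = \lambda \qquad (k=1,\ldots,N).
\end{equation*}

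The key observation is that $\mathrm{FOC}_{k+1} - r \cdot \mathrm{FOC}_k$ causes the $d_0$-terms to cancel exactly and telescopes the convolution kernel, yielding $\eta(1-r^2)\sum_{j=k+1}^N r^{j-k-1} u_j = \lambda\kappa$ for all $k=1,\ldots,N-1$. Hence $T_k := \sum_{j=k+1}^N r^{j-k-1} u_j$ is independent of $k$ on that range, and subtracting consecutive instances of the identity $T_{k-1} = u_k + r T_k$ forces $u_k = \kappa T_k$ for $k=2,\ldots,N-1$. A backward induction starting from $T_{N-1} = u_N$ (so that $u_{N-1}=\kappa u_N$) then collapses the interior to the constant value $u_2 = \ldots = u_{N-1} = \kappa u_N$. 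This is already the flat-interior structure asserted in~\eqref{cor:benchmark_det_opt_1}; the dependence on $d_0$ encoded by $\tilde b = \kappa(1-\kappa)/\eta$ then emerges once the $u_N$-versus-$u_1$ shift is identified in the next step.

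To pin down the boundary trades and the constants $\hat b, \hat c$ I would then solve a small linear system. Substituting the flat interior into $\mathrm{FOC}_1$ makes its geometric sums collapse, leaving $\eta(u_1 + r u_N) + r d_0 = \lambda$; combined with $\lambda = \eta(2-\kappa) u_N$ (read off from evaluating the constant right-hand side at $k=N-1$), this produces the clean relation $u_N = u_1 + (1-\kappa) d_0/\eta$. Pairing it with the inventory constraint $u_1 + (1+(N-2)\kappa) u_N = X_0$ determines $u_1^\circ = \hat c\, X_0 + \hat b\, d_0$ with $\hat c = 1/(2+(N-2)\kappa)$, yielding~\eqref{cor:benchmark_det_bc} and the final trade $u_N^\circ$. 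The closed forms in~\eqref{cor:benchmark_det_xd} then follow by direct telescoping: in $D_n^\circ = r^n d_0 + \eta\sum_{j=1}^n r^{n-j}u_j^\circ$ the two resulting geometric series combine so that all $r^n$-dependence cancels, leaving the time-independent value $\eta u_1^\circ + (1-\kappa) d_0$, and $X_n^\circ$ is read off from a telescoping sum. The case $d_0=0$ in~\eqref{cor:benchmark_det_opt_2} is then immediate.

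The main obstacle I anticipate is not the structural argument, which is clean, but the algebraic bookkeeping required to match the resulting $\hat b$ against the explicit expression in~\eqref{cor:benchmark_det_bc} arising from the $\nu=0$ specialisation of the abstract inner-product formula~\eqref{prop:benchmark_det_bc}. This matching requires identifying how the vectors $\mathfrak{a},\mathfrak{b},\mathfrak{c},\mathfrak{a}^d,\mathfrak{b}^d,\mathfrak{c}^d$ simplify at $\nu=0$ (with all $\mathfrak{a}^x$-contributions dropping out) and then factoring out the common $2+(N-2)\kappa$ from the denominator; I expect this to collapse cleanly once the flat-interior identity $u_2=\ldots=u_{N-1}=\kappa u_N$ is substituted inside those vectors.
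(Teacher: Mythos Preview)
Your argument is correct and takes a genuinely different route from the paper's. The paper's proof simply specialises Proposition~\ref{prop:benchmark_det}: plugging $\nu=0$ into~\eqref{def:constants} gives $\tilde a=\kappa$, $\tilde b=\kappa(1-\kappa)/\eta$, $\tilde c=0$, after which the recursions~\eqref{def:mathcalabc}--\eqref{def:mathcalabcd} collapse to constants ($\mathfrak a_i=\kappa$, $\mathfrak b_i=\tilde b$, $\mathfrak c_i=0$ on the interior; $\mathfrak a^d_i=\eta$, $\mathfrak b^d_i=1-\kappa$, $\mathfrak c^d_i=0$; etc.), and~\eqref{cor:benchmark_det_bc}--\eqref{cor:benchmark_det_xd} follow by direct substitution into~\eqref{prop:benchmark_det_bc}. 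You instead work from the Toeplitz structure of the first-order conditions, which is more self-contained and gives a transparent structural reason for the flat interior via constancy of the tail sums $T_k$. Your anticipated obstacle is moot: since the cost is a strictly convex quadratic (the matrix $(r^{|i-j|})_{i,j}$ is positive definite for $r\in[0,1)$), the constrained minimiser is unique, so your directly computed $\hat b,\hat c$ automatically coincide with those produced by~\eqref{prop:benchmark_det_bc} without any algebraic matching of the $\mathfrak a,\mathfrak b,\ldots$ vectors.
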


Corollary~\ref{cor:benchmark_det} shows that in the deterministic case without urgency penalty $\nu=0$ the optimal execution strategy in~\eqref{cor:benchmark_det_opt_1} is constant for the intermediate trades from $2$ to $N-1$ and keeps the deviation process in~\eqref{cor:benchmark_det_xd} flat until $N-1$. If, in addition, $d_0=0$ (i.e., the setup in~\citet{ObizhaevaWang:13})
the optimal strategy simplifies to the symmetric U-shaped~\eqref{cor:benchmark_det_opt_2} and becomes independent of the instantaneous price impact parameter $\eta$ (as observed in~\cite{ObizhaevaWang:13}). Initial and last trades are the same and all remaining intermediate trades are simply prescribed as a constant $\kappa$-fraction of the initial trade. In particular, for full resilience $\kappa=1$ all trades are just equal to $X_0/N$.

\begin{rem}
The simple formula in~\eqref{cor:benchmark_det_opt_2} has also been derived in~\citet[Corollary 6.1]{AlfonsiFruthSchied:10}.
\end{rem}

\subsection{Optimal Execution Profiles}

Figure~\ref{fig:benchmark_control} illustrates the behaviour of an optimal buying program from Proposition~\ref{prop:benchmark_det} with linear transient price impact ($\alpha=1$) for different values of the model parameters $\kappa, \eta$. We consider buying  $X_0 = 100,000$ shares in $N=10$ trades, with initial deviation $d_0 = 0$. As discussed, when $\nu=0$ we obtain the well-known U-shape for the buying schedule, driven only by $\kappa$. With positive $\nu > 0$ several novel qualitative effects appear:
 (i) the optimal strategy now depends on \emph{both} resilience rate $\kappa$ \emph{and} temporary price impact $\eta$; (ii) intermediate trades are in general not flat anymore; and (iii) the overall pattern may shift from a U-shaped to a monotonically decreasing sequence of trades. The presence of an inventory penalty $\nu > 0$ creates an incentive to buy faster in the beginning, which becomes more pronounced for small $\eta$. Consequently, the case of $\nu >0$ and small $\eta$ might lead to a decreasing execution schedule, cf.~the purple curve in Figure~\ref{fig:benchmark_control}.

\begin{figure}[!htb]
\centering
\includegraphics[height=2in,trim=0.1in 0.25in 0.2in 0in]{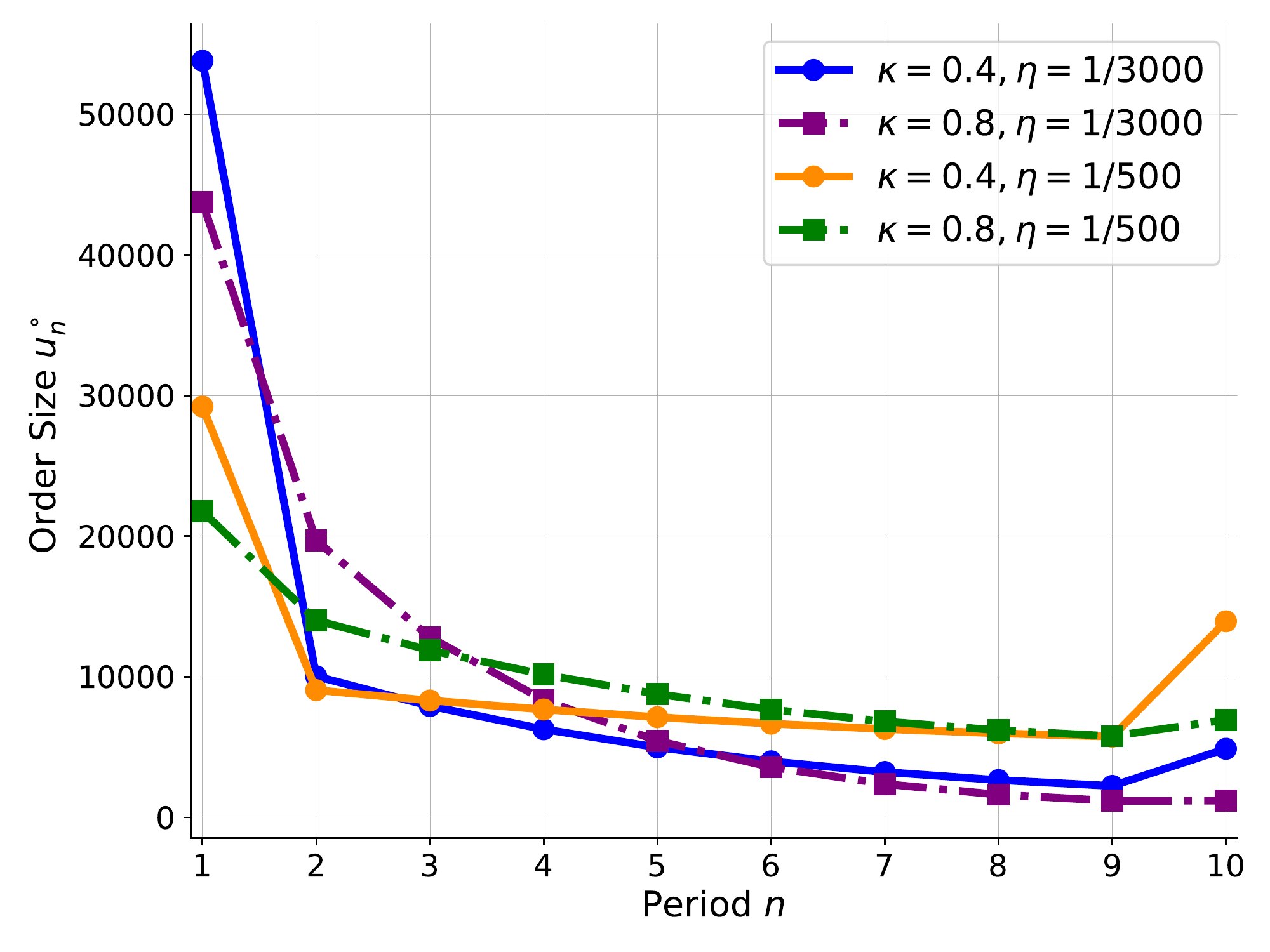} 
\caption{ Benchmark unconstrained buying programs $(u^\circ_n)$ based on  Proposition~\ref{prop:benchmark_det} for the case $\sigma = 0$. We consider executing $X_0 = 10^5$ shares in $N=10$ steps for different values of the model parameters $\kappa, \eta$ and $\nu = 0.00005$, $d_0=0$.   \label{fig:benchmark_control}}
\end{figure}
 
  Observe that relatively small changes in model parameters generate significant impact on the optimal strategy $\myx^\circ_n$. For example, the initial trade $\myx^\circ_1$ can be over 55\% of total $X_0$ when $\kappa$ is small and $\eta$ is small (which causes the emphasis to be on inventory risk), and less than 25\% of $X_0$ for $\kappa$ large and $\eta$ large (where strong resilience and strong temporary impact encourage to trade nearly equal amounts at each step). Also note the non-monotone behavior of the strategies, where the curves $n \mapsto \myx^\circ_n$ cross each other at different steps.
 
In the right panel of Figure~\ref{fig:benchmark_boxplot} we illustrate how the pathwise strategies $\myx_n^\circ$ for $\sigma >0$ from Proposition~\ref{prop:benchmark} are normally distributed and coincide on average with the deterministic solution, cf.~Corollary \ref{lem:benchmark}. In the plot the blue bars correspond to the optimal deterministic solution from Proposition~\ref{prop:benchmark_det} and Corollary~\ref{cor:benchmark_det}; and the boxplots show the distribution of $(\myx_n^\circ)_{n=1,\ldots,10}$ from Proposition~\ref{prop:benchmark}. We take $\sigma = 2$ and average over 10,000 paths, considering both the classical case with $\nu=0$  (left panel with U-shaped strategy) and our extension to positive inventory penalty $\nu >0$ (right panel). In line with Corollary~\ref{lem:benchmark}, the empirical means match the deterministic solution. Since this feature is true for any configuration of $(\kappa, \eta, \nu)$ the above qualitative effects can be deduced for the stochastic optimal feedback controls in~\eqref{prop:benchmark:feedback}.

\begin{figure}[!htb]
\centering
\begin{tabular}{cc}\hspace*{-20pt}
\includegraphics[height=1.75in]{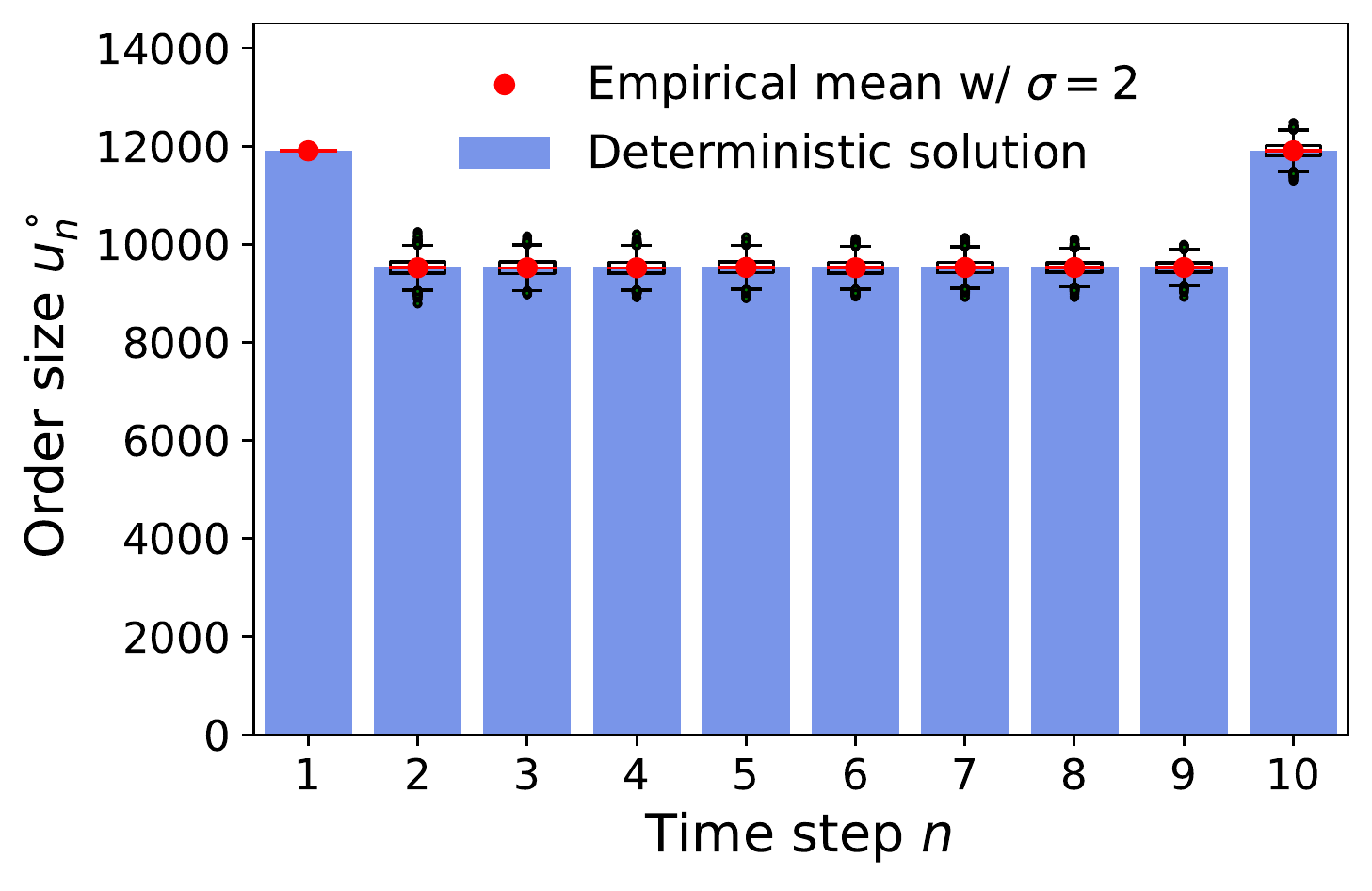} &
\includegraphics[height=1.75in]{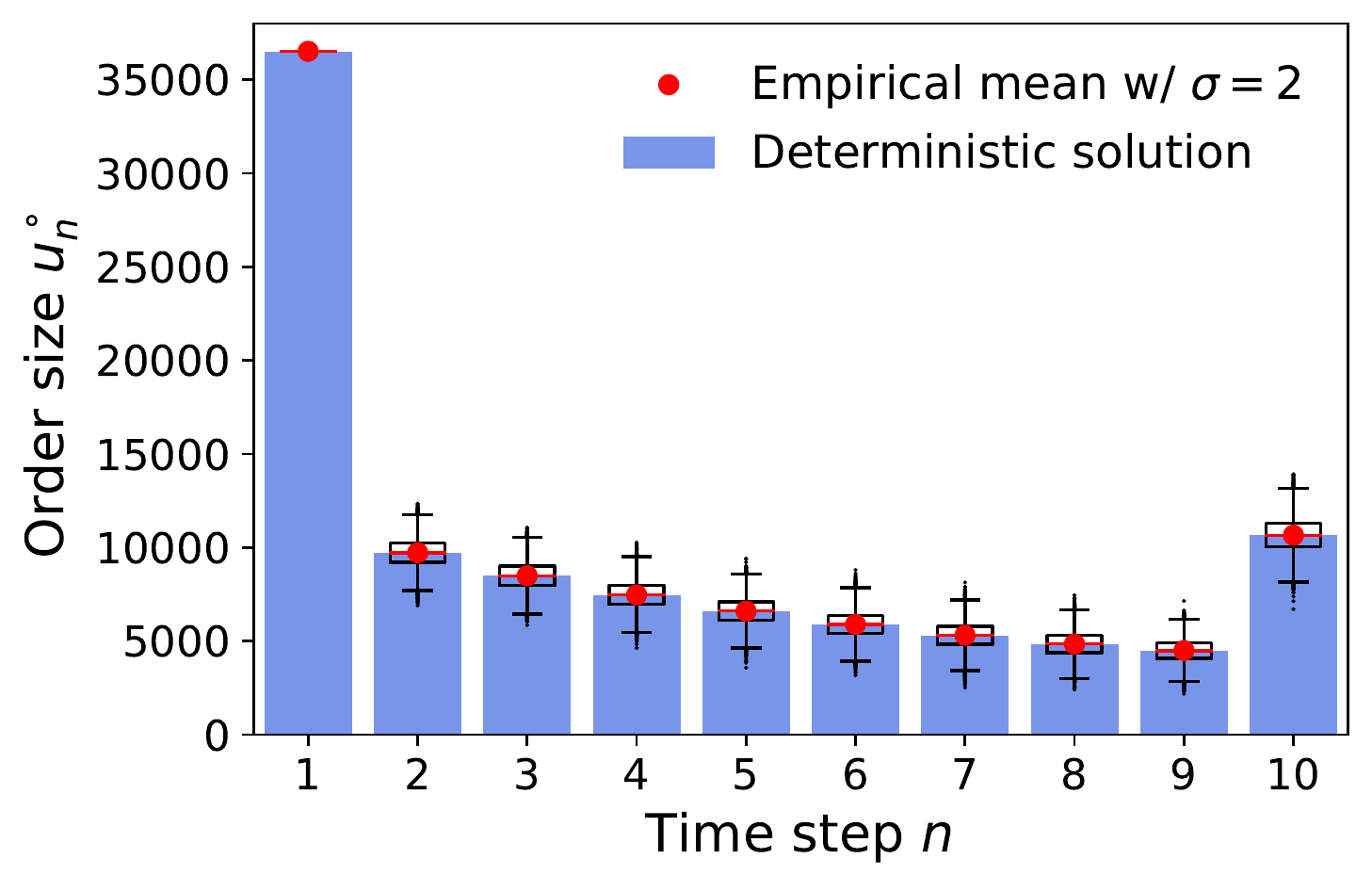} \\
 $\kappa = 0.8, \eta = 1/500, \nu = 0$ & $\kappa = 0.4, \eta = 1/1000, \nu = 0.00005$ \\
\end{tabular}
\caption{ Distribution of $(\myx_n^\circ)_{n=1,\ldots,10}$ from Proposition~\ref{prop:benchmark} over 10,000 paths for $\sigma = 2$ (boxplots) together with the empirical mean (in red) compared to the  formula from Proposition~\ref{prop:benchmark_det} for $\sigma=0$ (barplot).  \label{fig:benchmark_boxplot}}
\end{figure}

\section{Numerical Implementation}\label{sec:method}

In the case of nonlinear transient price impact $\alpha \neq 1$, no closed-form solution is possible and numerical methods are needed to compute an optimal solution $(u^*_n)_{n=1,\ldots,N}$ for the minimization problem in~\eqref{def:optProblem}. A numeric algorithm is also needed for $\alpha=1$ to handle the constrained buy-only setting $u_n \ge 0$. Finally, the algorithm will be useful to study the dependence of $V_n,u^*_n$ on model parameters. Our algorithm to compute the value function and the optimal strategy  relies on the Bellman equation \eqref{eq:bellman-2} which provides a recursive characterization for $V_n$ and therefore for $\myx^*_n$. Given the parametric setup, we consider a generic state $y$ that fuses stochastic states $(x,d)$ and some (or none) of the aforementioned model parameters; and for the rest of this section treat $V_n$ and $u_n^*$ as a function of $y$.

In order to solve \eqref{eq:bellman-2} we need to be able to evaluate its right-hand side. This entails (i) evaluating $V_{n+1}(\cdot)$; (ii) evaluating the conditional expectation; (iii) taking the $\inf$ over $\myx$. None of those steps are possible to do analytically and numerical techniques are necessary. 

Below, we propose and implement a direct approach based on constructing a functional approximator, also known as a \emph{surrogate}, $\hat{V}$, to the value function that is trained via an empirical regression. This method is also known as \emph{Approximate Dynamic Programming (ADP)} and \emph{Projected Value Iteration (PVI)} in the literature; see, e.g., \citet{KeshavarzBoyd:12} and the references therein. Specifically, we consider the use of (feed-forward) Neural Networks (NN) for the latter. Note that our method is based on the DP equation; we do not make any reference to Hamilton-Jacobi-Bellman equations that are used in continuous-time setups, and which admit their own suites of NN approaches. Neither do we consider reinforcement learning (RL) that dispenses with the divide-and-conquer paradigm underlying the Bellman equation and the value function, and aims to maximize total trading revenue on the entire horizon. Namely, RL solves for all $V_n$'s in parallel over $n=1,2,\ldots,N$ while we maintain the sequential backward learning of $V_{N-1}, V_{N-2}, \ldots, V_1$. 

Instead, our approach is conceptually faithful to the classical DP philosophy and offers the following advantages:
\begin{itemize}
    \item It is straightforward to understand and implement, in some sense offering the most immediate approach to employing surrogate and other machine learning techniques for DP. As such, we bypass the more subtle ideas that have been advanced, offering a pedagogic-flavored setup;
    
    \item It provides modularization, emphasizing that NN solvers are just one type of \emph{many} potential surrogates. Thus, it de-mystifies deep learning, simply treating it as a choice among many. Indeed, our implementation requires just a few lines of code to substitute a different surrogate type.  
\end{itemize}

\subsection{Building a Surrogate}

Denote by $\mathcal{Y} \subseteq \mathbb{R}^\ell$ the state space of the surrogate, which includes the bona fide inputs $x,d$, as well as all the relevant model parameters among $\kappa, \eta, \alpha, \nu$ which we wish to capture. In the examples below we consider $\ell \in \{2,3,4,5\}$, with the main illustrative example being $\ell=4$ where we take $y=(x,d,\kappa,\eta)$, i.e., we simultaneously learn the value function and the strategy as a function of $(x,d)$, the resilience $\kappa$ and the instantaneous price impact $\eta$ for some fixed $\alpha$ and $\nu$. 

We denote by $\mathbf{G}(\my,\myx, \epsilon)$ the one-step transition function of $y$ given external noise $\epsilon$ and action~$\myx$. For example, 
$$
\mathbf{G}( (x,d,\kappa,\eta),\myx, \epsilon) = (x-\myx, (1-\kappa)d + \eta  \myx^{\alpha} + \epsilon, \kappa, \eta).
$$

Our resolution of the DP equation~\eqref{eq:bellman-2} operates as-is with each of the underlying sub-steps. This means that considering a generic intermediate step $n$ of the backward recursion and given a surrogate $\hat{V}_{n+1}$ we first substitute it in place of $V_{n+1}$.  Next, the expectation in \eqref{eq:bellman-2} is over the stochastic shocks $\epsilon_{n}$ which are one-dimensional Gaussian random variables. We employ Gaussian quadrature to replace the respective integral with a finite sum. Specifically, we rely on the optimal quantization of \citet{PagesPhamPrintems:04,PagesPrintems:05} to select $j=1,\ldots,N'$ weights $w^j$ and respective knots $e^j$ to approximate 
\begin{align}\label{eq:quantize}
\mathbb{E} \left[ \hat{V}_{n+1}(\mathbf{G}(\my,\myx,\epsilon_{n})) \,\big| \, Y_{n} = y\right]  \simeq \sum_{j=1}^{N'}
w^j \hat{V}_{n+1} \left(\mathbf{G}(\my,\myx,e^j) \right). 
\end{align}

\begin{rem}
One can straightforwardly consider non-Gaussian (e.g., heavy tailed) noise distributions for $\epsilon_n$ thereby providing a different nonlinear generalization. Non-Gaussian $\epsilon_n$ just reduces to taking a different set of $e^j,w^j$'s.
\end{rem}

The optimization over the number of shares to buy $\myx_n$ is done via a numerical optimizer, namely the standard gradient-free optimization routine (such as L-BFGS) that is available in any software package.  Note that $\myx_n$ is scalar, allowing the use of fast one-dimensional root finding algorithms. In our implementation, we do not evaluate any gradients of $\hat{V}_{n+1}$ although that is feasible. In order to restrict to $\myx \in [0,x]$ and rule out any selling, we optimize on the above bounded interval, straightforwardly supported by such solvers.

Finally, it remains to construct $\hat{V}_n$. As a machine learning task, the goal is to learn the true input-output map $\my \mapsto V_{n}(\my)$. Such functional approximation, aka surrogate construction \cite{Gramacy:20}, is carried out by  selecting a collection of training inputs $\my^{1:M} \in \mathcal{Y}$, evaluating (a noisy version of) $V_{n}(\my^{1:M}) =: v^{1:M}_n$ and then fitting a statistical representation $\hat{V}_n(\cdot)$ that can interpolate to new, out-of-sample $\my$'s. The evaluation of $v^m_n$ is achieved by direct computation via the nonlinear optimizer and the quantized integral on the right-hand-side of \eqref{eq:bellman-2}:
\begin{equation}
\begin{aligned} \label{eq:vn-pointwise}
v_n^m := & \, \inf_{\myx \in [0,x^m_{n-1}]} \left\{(1-\kappa^m) d^m_{n-1} \myx + \frac{\eta^m}{2} \myx^{\alpha^m+1} + \nu^m(x^m_{n-1}-\myx)^2 \right. \\ \, & \hspace{52pt} \Bigg. + \sum_{j=1}^{N'}
w^j \hat{V}_{n+1}(\mathbf{G}(\my^m_{n-1},\myx,e^j)) \bigg\}.
\end{aligned}
\end{equation}
Observe that \eqref{eq:vn-pointwise} uses $\hat{V}_{n+1}$ and so leads to a recursive construction backward in time, fully mimicking the dynamic programming equation. This recursion is instantiated with the exact terminal condition $\hat{V}_{N}(\my) \equiv V_N(\my) = \left((1-\kappa)d + \frac{\eta}{2} x^\alpha\right) x$ and then run for $n=N-1,N-2, \ldots, 1$. 
Our notation emphasizes the pointwise nature of the optimization for $v^m_n$ by taking the static parameters $\kappa,\eta,\alpha,\nu$ as part of the training design, hence varying in $m$.

\subsection{Policy Approximation}
The primary output of the numerical solver is the \emph{execution strategy}, given in feedback form as $\my \mapsto \myx^*_n(\my)$. Indeed, the strategy is what the controller is ultimately after, and yields a clear interpretation of how many shares the solver recommends to buy next. In contrast, the approximate value function $\hat{V}$ is harder to interpret (since it is not in pure monetary dollars but potentially also involves the abstract inventory costs) and moreover due to the intermediate approximations does not have any concrete probabilistic representation. Indeed, while the true $V$ is the expected cost of the strategy, $\hat{V}$ is \emph{not} an expectation on $[0,T]$, since it is obtained from one-step recursions.

Conventionally, $\myx^*$ is characterized as the $\arg\inf$ of the Bellman recursion \eqref{eq:bellman-2}, so that to obtain $\myx^*_n(\my)$ one must re-do the optimization over $\hat{V}_n$. This is time-consuming, inefficient  and non-transparent to the user. Instead we seek a direct representation for $\myx^*_n$ and in the spirit of the machine learning mindset (specifically actor-critic frameworks) propose to construct a second surrogate, auxiliary to the one describing $\hat{V}$. Accordingly, we construct a separate surrogate $\hat{\myx}_n(\cdot)$ that is trained based on the recorded $\myx^m_n$, the optimal execution amounts for each $\my^m_{n-1}$. Note that the fitting of $\hat{\myx}_n$ is independent of the main loop above, so can be done in parallel with $\hat{V}_n$ or after-the-fact. 

Even when the training inputs $\myx^m_n$ are in the range $[0,x^m_{n-1}]$, it is not generally guaranteed that this would be true for the fitted prediction $\hat{u}(y^m_{n-1})$. In order to enforce this constraint, we train $\hat{u}$ based on the \emph{fractions} $\myx^m_n/x^m_{n-1} \in [0,1]$ and use a \emph{sigmoid} transformation to intrinsically restrict all predictions (again interpreted as fractions of current inventory to be bought) into $(0,1)$.

\begin{rem}
In our setup, we first pointwise approximate $\myx_n(y^m_{n-1})$ and then fit a statistical surrogate to those samples; in~\citet{Pham18deep1,Pham18deep2} the strategy is the opposite: first parametrize potential $\myx(\cdot; \theta_n)$ through a neural network with weights $\theta_n$; then use back-propagation to optimize the respective hyper-parameters $\theta_n$. In that sense, their resulting $y \mapsto \myx(y; \theta_n)$ is not a solution of any optimization problem, and there is no underlying dataset $(\my^{1:M}_{n-1}, u^{1:M}_n)$ like for our approach.
\end{rem}

\subsection{Neural Network Solvers}

A popular class of surrogates consists of feed-forward neural networks (NN). NN's allow efficient fitting of high-dimensional parametric surrogates using back-propagation and a variety of stochastic optimization techniques, such as stochastic gradient descent.

 An NN represents $\hat{V}_n$ as a composition, using linear hidden units at each layer, a user-chosen activation function across layers, and a user-selected number of layers. In our context, the precise architecture of the feed-forward NN is not conceptually important. The NN parameters are optimized via batch stochastic gradient descent. Relative to other statistical models, NN is \emph{overparametrized} with thousands of parameters, known as the weights. Nevertheless NN is known to enjoy excellent empirical convergence (i.e., the algorithms find near-optimal weights), especially for large scale datasets, including in other stochastic control applications \cite{Pham18deep2,han2016deep,IsmailPham:19r,Pham18deep1}.

Training an NN requires specifying the training inputs $\my^{1:M}_n \equiv (x^{1:M}_n, d^{1:M}_n, \kappa^{1:M}_n, \eta^{1:M}_n) \in \mathcal{Y}$ and initializing the NN weights. For the former, we propose a space-filling experimental design, matching the standard approach in statistics. As default,
we select a hyper-rectangular training domain $\bar{\mathcal{Y}}$ and 
 sample each coordinate of $\my$ uniformly and independently on the respective training interval, for example we sample the inventory $x_n \in [0, X_0]$.  One may also implement joint sampling in $\bar{\mathcal{Y}}$, such as Latin Hypercube Sampling (LHS). A further option is to use low-discrepancy Quasi Monte Carlo (QMC) sequences to achieve a space filling training set of arbitrary size $M$. Compared to i.i.d.~Uniform sampling, LHS and QMC offer lower variance and better coverage, avoiding any clusters or gaps in the training locations, which is relevant when $M$ is relatively small compared to the dimension of $\my$. Note that training sets are indexed by $n$; we sample fresh $\my^{1:M}_n$'s at each step, so that the training inputs vary across $n$'s, although they have the same size and shape.

To initialize the weights, we found it very beneficial to \emph{rescale} the inputs to the unit hypercube, which permits the use of standard NN weight priors (namely Truncated Normal with mean zero). Similarly, for $\hat{V}$ we re-scale the training outputs $v^m_n$ to be in the range $[0,1]$ too. For training the policy surrogate $\hat{\myx}$ we apply a sigmoid activation function on the output layer of the NN that directly ensures that $\hat{\myx}(\my) \in (0,1)$. The latter fraction is multiplied by the current inventory $x$ to get the number of shares to trade.
 
Our implementation (see the supplementary Jupyter Notebook) employs the \texttt{TensorFlow} library in \texttt{Python}, which provides one of the most popular engines for NNs. We employ ``factory defaults'' to construct our neural networks using the \texttt{tensorflow.keras.Sequential} architecture. This is a linear stack of layers; we use the same number of neurons per layer and the same activation function across layers. For the experiments below we utilize 4 layers and 20 neurons, with the ELU activation function, $ELU(x)= x 1_{\{x >0\}} + (e^x-1)1_{ \{x \le 0\} }$. Training uses the Adam algorithm with default learning rate, batch size of 64 and $E$ epochs. The latter represents a single pass through all training data, and many epochs are needed given the high noise in the underlying stochastic gradient descent optimizer of the NN weights.  With the above choices,  training a NN takes just a few lines in \texttt{TensorFlow}. Indeed, it takes more code to scale/re-scale the inputs and outputs than to actually build and fit the Neural Net. 
Other neural network libraries, such as \texttt{scikit-learn} or \texttt{PyTorch} could be straightforwardly substituted.

We end this section with a few final remarks:
\begin{itemize}
    \item It is completely straightforward to modify the NN architecture. For instance, to replace a 4-layer ``deep" architecture with a single layer, it takes just commenting out 3 lines in our code. Similarly, one can add more layers with a single line change.
    
    \item Because we maintain the underlying Dynamic Programming paradigm, the NNs are fitted one-by-one. Therefore, one has complete flexibility in modifying any aspect of the fitting procedure to make it step-dependent. This includes the size $M$ and shape of the training set $\my^{1:M}_{n-1}$; the parameters for the NN optimization, including the initial NN weights; the NN architecture, such as the number of neurons; and even the surrogate type itself. For example, one could mimic RL techniques to select time-dependent training regions $\bar{\cX}_n$ to reflect the natural time-dependency of the solution, such as the remaining order to fill $X_n$ decreasing over time.
    
   \item The convergence of the stochastic gradient descent to find a good $\hat{V}_n$ is quite slow. We find that $ E\gg 1000$ epochs are necessary to achieve good results. The number of epochs is the primary determinant of fit quality, cf.~Section \ref{sec:nn-config}.
   
   \item Any surrogate can be re-trained/updated at any point of the overall backward loop. For example, we suggest training $\hat{V}_n$'s for all $n$'s and then training more (i.e., run the backward loop again with same training samples or newly
generated ones) as one way to improve empirical convergence. This process is completely transparent and just requires loading the existing NN objects corresponding to $\hat{V}_n$'s
rather than initializing new ones. Similarly, one can straightforwardly implement \emph{warm starting}, using the fitted NN weights at step $n+1$ as an initial guess for the weights of $\hat{V}_n$.

\item A typical failure point for surrogates is unstable prediction when extrapolating beyond the range of the training region. Thus, care must be taken to select the training domain $\bar{\mathcal{Y}}$ in order to minimize extrapolation. Usually the modeler knows a priori the test cases of interest and so can ensure that the training range is at least as large. For example, below we wish to test for $\kappa \in \{0.4, 0.6, 0.8\}$ and therefore we train on $\kappa^{1:M}_n \in [0.38, 0.82]$ to mitigate any issues with prediction at or beyond the edge of the training domain.

\end{itemize}

\begin{rem}
Other surrogate types could be employed in place of NNs. For example,  Gaussian Processes (GP)  \cite{RasmussenWiliams:06} is a kernel regression method where the kernel hyperparameters are fitted using maximum likelihood. They are implemented in, e.g., \texttt{scikit-learn} in Python. GPs offer variable selection through  automatic relevance determination, which allows to ``turn off'' covariates/parameters that make little impact on the response. GP surrogates are known for excellent performance on limited datasets and are very popular for emulating expensive computer and stochastic experiments. Another surrogate type are LASSO linear models that explicitly project $\hat{V}_n$ onto the span
 of the basis functions $\{B_r(\my)\}$, $\hat{V}_n(\my) = \beta_0 + \sum_r \beta_r B_r(\my)$. The respective coefficients are determined from the ($L_1$-penalized) least squares equations. Finally, we note that not all regression methods are appropriate; non-smooth frameworks like Random Forests would yield unstable or discontinuous estimates of $\hat{\myx}(\cdot)$ and therefore should not be applied.
\end{rem}

\subsection{Workflow}
To summarize, the algorithmic workflow to solve the parametric optimal execution problem is given in Algorithm \ref{alg:1}. We provide a \texttt{TensorFlow} implementation of the above in the supplementary fully-reproducible Jupyter notebook.

\begin{algorithm}
\begin{algorithmic}[1]
\REQUIRE{ $M$ (number of training locations) }
\STATE Set $\hat{V}_{N}(\my) \equiv V_N(\my) = \left((1-\kappa)d + \frac{\eta}{2} x^\alpha\right) x$ (no approximation needed at terminal time)
\FOR{$ n=N - 1, N-2, \ldots, 1$ }
\STATE Initialize the neural net surrogates $\hat{V}_n$ and $\hat{\myx}_n$
\STATE Select the experimental design $\my_{n-1}^{1:M}=(x_{n-1}^m,d_{n-1}^m,\kappa^m,\eta^m,\nu^m,\alpha^m)_{m=1,\ldots,M}$
\FOR{$m=1,\ldots, M$}
\STATE Evaluate $v_n^m$ in~\eqref{eq:vn-pointwise} on $y_{n-1}^m$ using a gradient-free optimizer; record the corresponding $\myx_n^m$
\ENDFOR
\STATE Use the dataset $(\my^{1:M}_{n-1}, v^{1:M}_n)$ to fit the surrogate $\hat{V}_n(\cdot) : \mathcal{Y} \to \mathbb{R}$
\STATE Train the control surrogate $\hat{\myx}_n(\cdot)$ using $(\my^{1:M}_{n-1}, u^{1:M}_n/x^{1:M}_{n-1})$
\ENDFOR 
\STATE Generate $M'$ out-of-sample forward paths $\my^{1:M'}$
\STATE Record the cost $\check{v}^m, m=1,\ldots, M'$ based on  utilizing $\hat{\myx}_n(\my^{m}_{n-1})$ as the control at step $n$.
\RETURN Empirical average execution cost
\begin{align}\label{eq:check-v}\check{V}(0,Y_0) = \frac{1}{M'} \sum_{m=1}^{M'} \check{v}^m.
\end{align}
\end{algorithmic}
\caption{Solving the Optimal Execution Problem \label{alg:1}}
\end{algorithm}

\section{Numerical Experiments}\label{sec:results}

\subsection{Comparison to Reference Model}
With linear price impact $\alpha=1$, the unconstrained
strategy $(\myx^\circ_n)_{n=1,\ldots,N}$ is available by parsing recursively the formulas in Proposition \ref{prop:benchmark}. This yields a concrete benchmark to evaluate our numerical algorithms. In this section we compare an NN surrogate to the above ground truth. Even though $\myx^{\circ}$ is not feasible (since it can and does turn negative in some states), this comparison is still useful. For the chosen parameter configurations, $\myx^{\circ}_n \gg 0$ (cf., Fig \ref{fig:benchmark_boxplot}) so that the non-negativity constraint is not binding on the vast proportion of the paths and hence $\myx^{\circ}$ is very nearly optimal for the constrained problem as well; that is, $u^\circ \approx u^*$ with $u^*$ denoting the minimizer in~\eqref{def:optProblem}. This ``near"-optimality is confirmed by the closeness between the constrained NN-strategy $\hat{u}$ and the unconstrained $u^\circ$ strategy (LF), offering an additional consistency check on the NN solver.

To enable an apples-to-apples assessment of $(\hat{\myx}_n)_{n=0}^{N-1}$, we \emph{fix} the set of ``noise" $\epsilon^{1:M'}_n$ that feed into the realized $D_n$'s, and employ this fixed database of $\epsilon$'s to generate the test forward trajectories both for the NN approximator and for the exact solution. 
Since $D_n$ depends on the past execution amounts, the trajectories of the approximating strategy will differ at each and every step $n$. Thus, we are \emph{not} making a pairwise comparison between  $\hat{\myx}_n(\cdot)$ and $\myx^{\circ}_n(\cdot)$ from \eqref{prop:benchmark:feedback}, but compare in terms of the final execution cost in \eqref{eq:check-v}. Indeed, due to stochastic fluctuations, on any given path the realized costs from the latter may be higher or lower than the costs of the benchmark strategy, however the Law of Large Numbers guarantees that for $M'$ large, the \emph{average} execution cost must be at least as much as the benchmark.

For illustration purposes, we train an NN that takes in the four inputs $(x, d, \kappa, \eta)$ and compute the corresponding $\hat\myx_n$. The respective 4-dimensional training domain is taken to be a hyper-rectangle specified as $x \in [0,10^5], d \in [0,100], \kappa \in [0.38, 0.82], \eta \in [1/900, 1/5000]$. The remaining parameters are fixed as $\nu =0.00005, \sigma=1$. We then select $M=4000$ training points i.i.d.~uniformly in the above domain, independently for each step $n$. Note that while we kept the same training domain across steps, one can vary this as $n$ changes. The ranges of the training domain are ultimately driven by the desired \emph{test configurations}. For example, below we show the results for the test set with $\kappa=0.4, \eta = 1/1000$ and initial condition $X_0 = 10^5, D_0 = 0$. The latter affect the range of $X_n$. Since $X_n \le X_0$ and $X_N =0$, we train on the range $[0,X_0]$. The range for $D_n$ depends on $D_0, X_0$ and $\eta$. With $X_0 = 10^5$, we expect to buy up to 50-60K shares in the first step, which with $\eta \simeq 1/1000$ would lead to $D_n \in [40,80]$. Note that $D_n$ is highly sensitive to $\eta$, hence the respective training range should reflect the bounds on $\eta$ values. The range for the resilience $\kappa$ covers the test case of $\kappa=0.4$ and is simultaneously quite wide to cover a range of market conditions (calibration of $\kappa$ is known to be difficult). The range for the instantaneous impact $\eta$ is similarly chosen to cover a range of market conditions, with $\eta=1/1000$ leading to impact that is 5 times stronger than that of $\eta=1/5000$. The quantization of the conditional expectation in \eqref{eq:quantize} uses $N'=50$ knots.

\begin{figure}[!ht]
\centering
\begin{tabular}{cc}
\includegraphics[height=2in,trim=0.2in 0.1in 0.25in 0.1in]{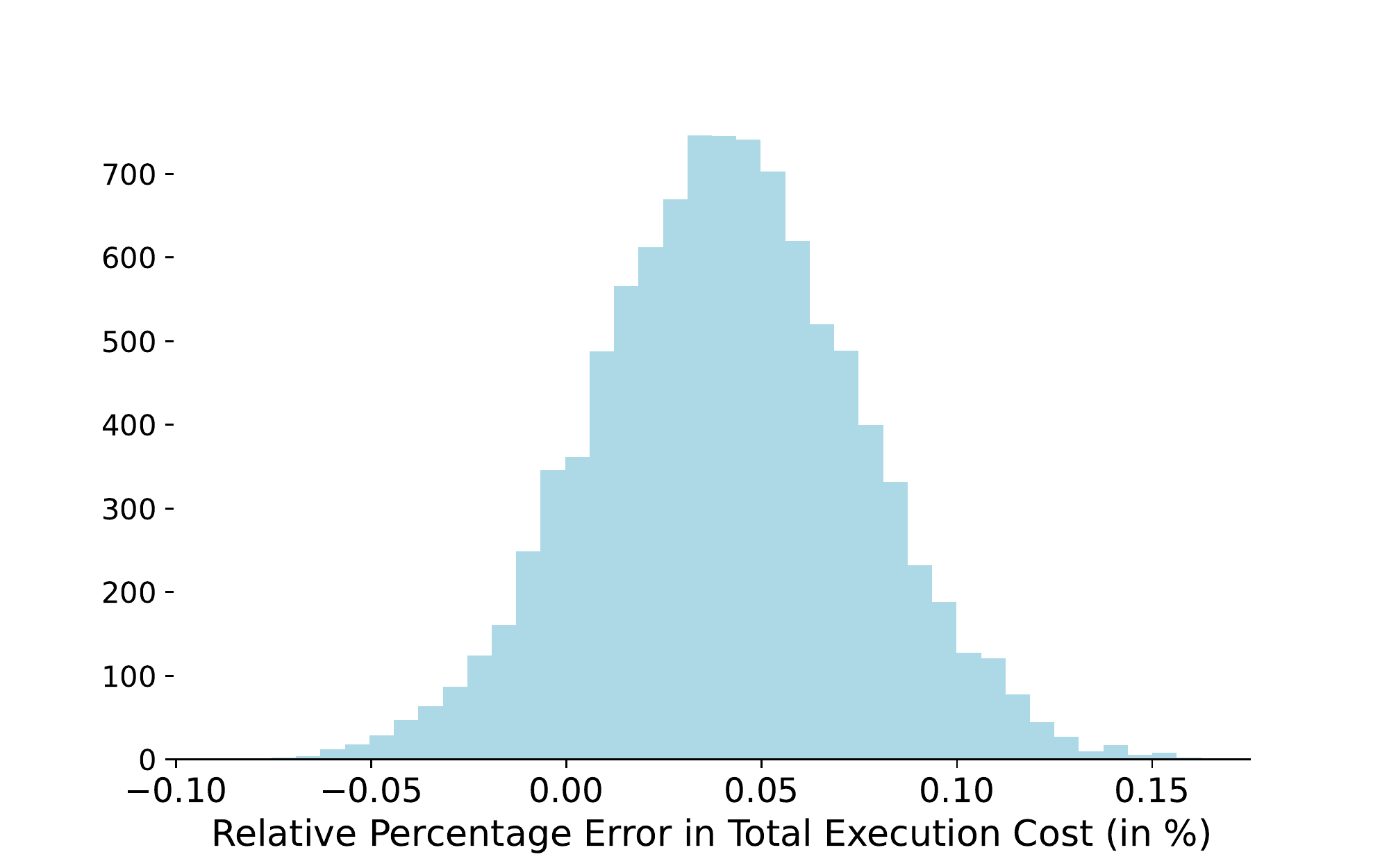} &
\includegraphics[height=2in,trim=0.25in 0.1in 0.2in 0.1in]{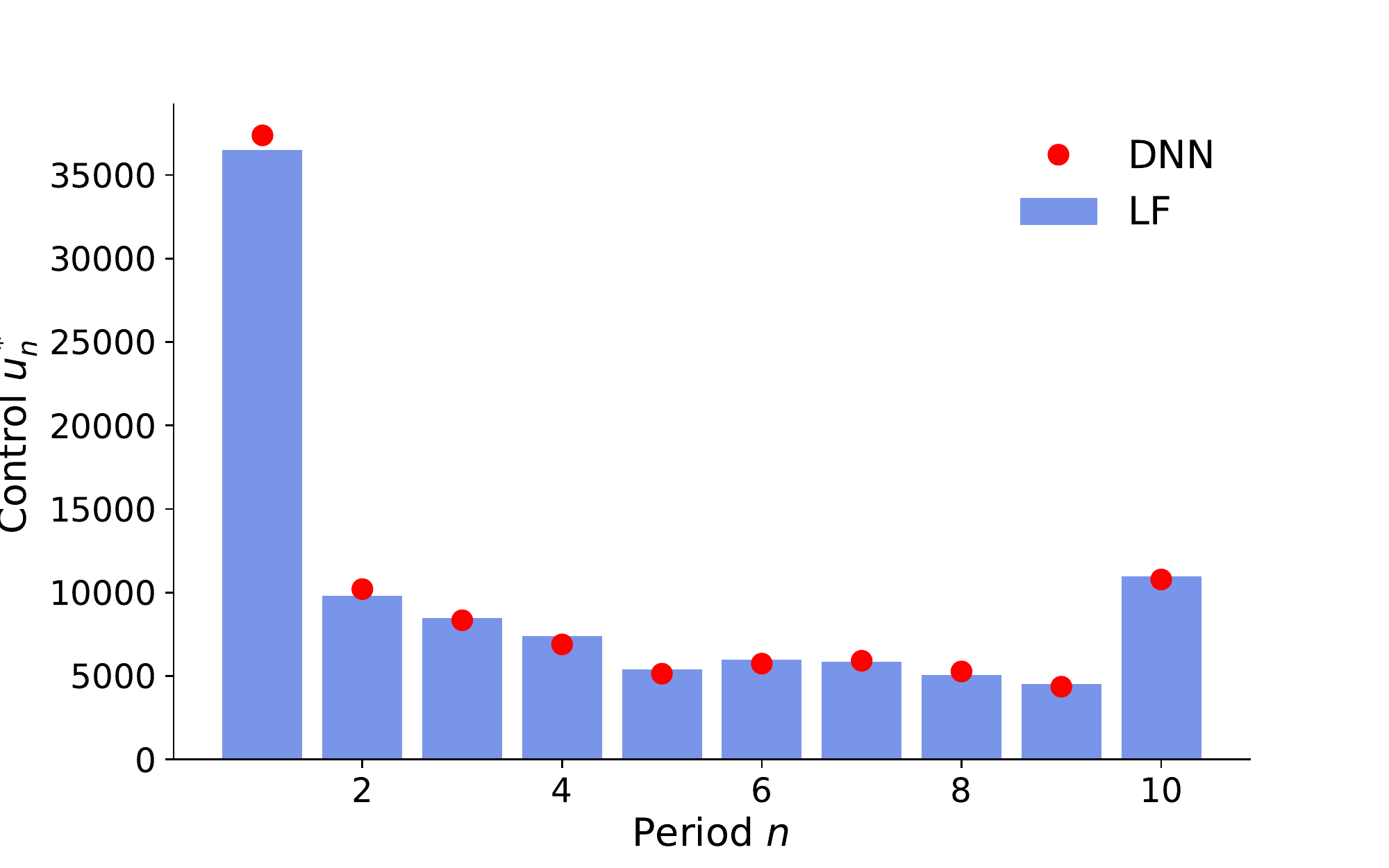}
\end{tabular}
\caption{\emph{Left:} Histogram of the relative error in final costs between the NN-strategy and the benchmark strategy based on $M'=10^4$ forward trajectories. \emph{Right:} Optimal policy vs the NN-strategy along the same path (fixed set of $\epsilon_n$'s). Parameters fixed at $\kappa=0.4,\eta=1/1000,\nu=0.00005$ and $\sigma=1$.\label{fig:nn-hist}}
\end{figure}

The left panel of Figure \ref{fig:nn-hist} shows the histogram of the difference between the realized execution costs $\check{v}^m$ coming from an NN approximator and the  benchmark, across $10^4$ test trajectories. We observe that on average the former are 0.041\% higher. 
Note that for about 11.6\% of the paths, the realized costs from our approximate strategy were \emph{less} than from the benchmark. Conversely, on more than 95.5\% of the paths, the NN-based costs were less than 0.1\% higher than those from the reference strategy, which is practically a very good level of accuracy.

The right panel of Figure~\ref{fig:nn-hist} compares the reference $u^\circ(\my^{m'}_{1:N})$ and the NN-based control $\hat{\myx}(\my^{m'}_{1:N})$ on one sample forward trajectory. We observe that the two strategies are very close in a pathwise sense as well, confirming the high approximation quality.

\subsection{Nonlinear Price Impact}

We now proceed to consider strategies with nonlinear price impact $\alpha \neq 1$. 
Two comparators are the linear feedback (LF) strategy $u^\circ$ defined by \eqref{prop:benchmark:feedback} and constrained to buys-only, which we denote by
$u^{LF}_n := u^\circ_n \vee 0 \wedge X_n$,
and the volume-weighted average policy, known as VWAP. The first $\myx^{LF}$ comparator utilizes the  linear benchmark formula as a function of current $X_n,D_n$, in other way it postulates the counter-factual $\alpha=1$ even when $\alpha$ is not unity. This means it will under-estimate price impact when  $\alpha > 1$, and will overestimate price impact when $\alpha < 1$. The above over- or under-estimation of price impact can lead to severe instability in the execution strategy. The second VWAP comparator fixes $\myx^{\text{VWAP}}_n \equiv X_0/N$. That strategy is completely model independent, and therefore its performance is little affected by $\alpha$. In that sense the VWAP is the opposite of the linear feedback policy $\myx^{LF}$ which makes strong assumptions on the market environment; VWAP implements the same trades no matter the model parameters.  

\begin{figure}[ht] 
\centering
\begin{tabular}{cc} 
\includegraphics[height=2in,trim=0.2in 0.2in 0.2in 0.5in]{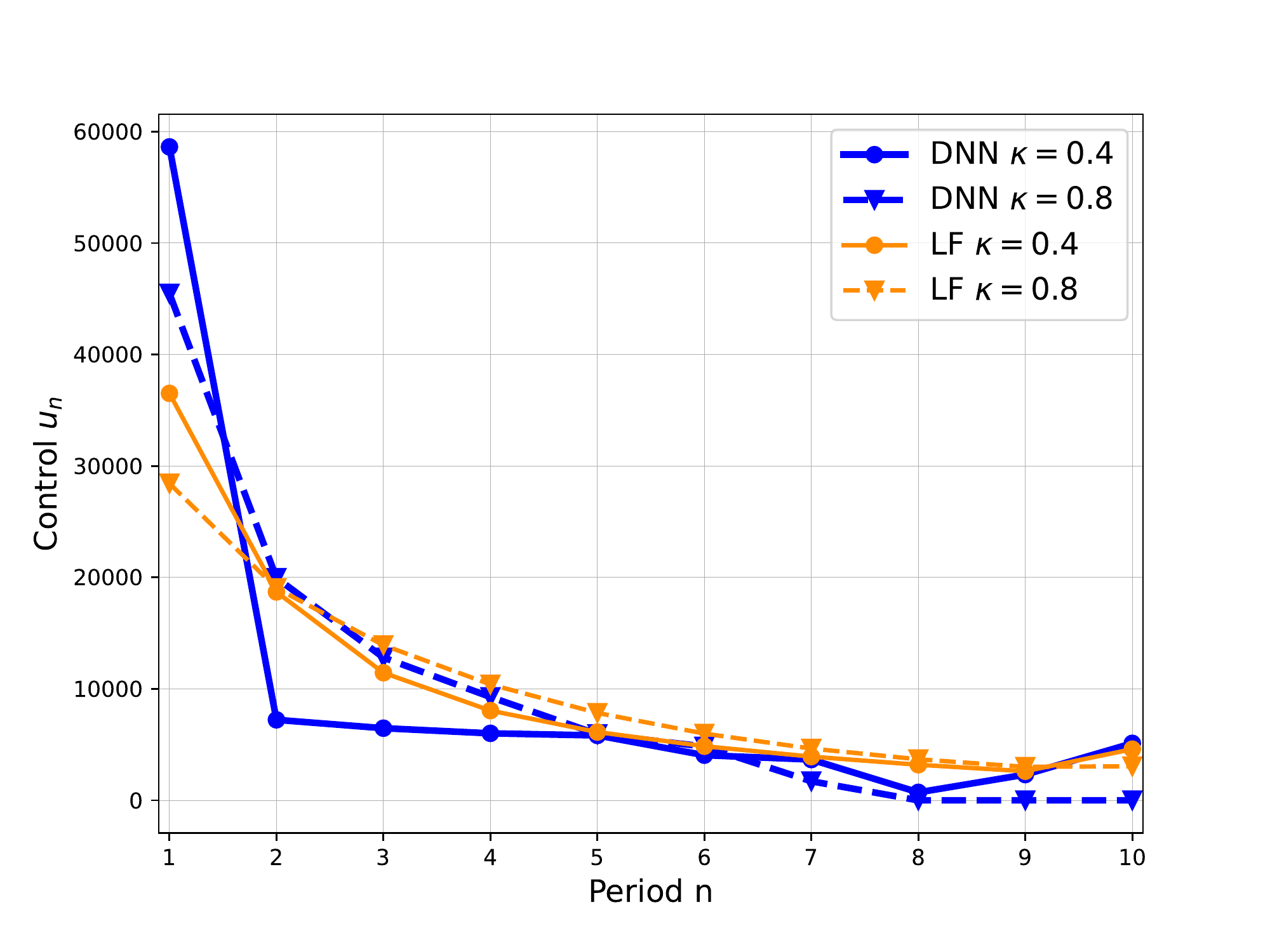} & 
\includegraphics[height=2in,trim=0.2in 0.2in 0.2in 0.5in]{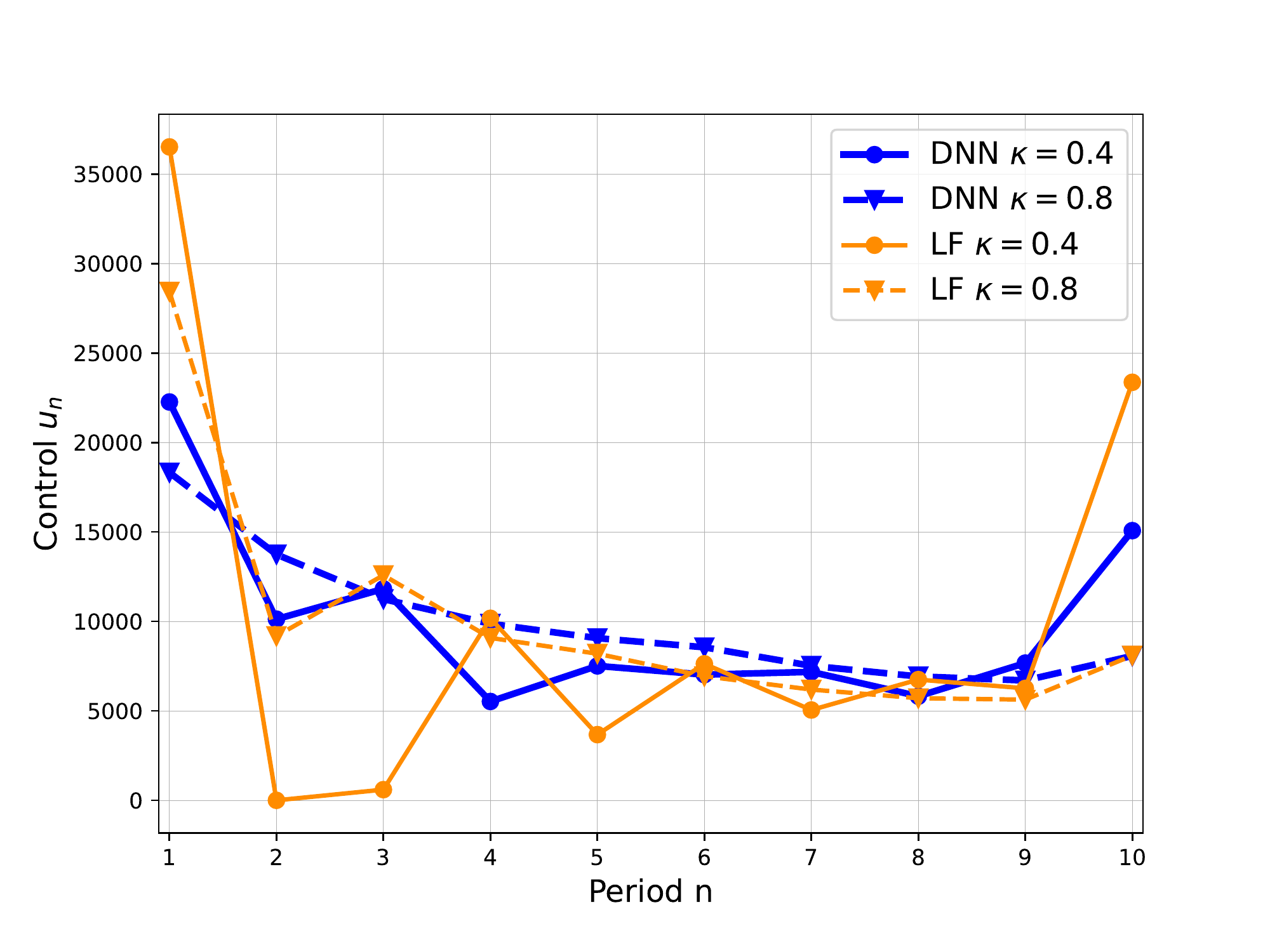}\\
$\alpha =0.9$ & $\alpha=1.1$
\end{tabular}
\caption{Execution strategy (mean values across $M'=10^4$ simulations) for $X_0 = 100,000,D_0=0$ and $N=10$ with $\sigma = 1$, urgency parameter $\nu = 0.0001$, $\eta=1/500$ and different price impact parameters $\kappa$ and $\alpha$, utilizing a 5D solver in $(x,d,\kappa,\eta,\alpha)$. 
\label{fig:feedbackCasestudies}}
\end{figure}

While for a fixed set of model
parameters there are multiple ways to construct a nonlinear solver, this would be computationally intractable to do for a large collection of $(\kappa, \eta, \alpha)$. Consequently, our \emph{parametric solver} is indispensable to provide a comprehensive solution across many parameter configurations. Moreover, since we train jointly across a range of $\alpha$, the linear case $\alpha=1$ is covered and can be used to indirectly assess the fit quality. In other words, we expect the performance of the NN surrogate for $\alpha \neq 1$ to be \emph{similar} to its performance when $\alpha =1$.

Figure~\ref{fig:feedbackCasestudies} compares the resulting NN-based strategies to the linear feedback program restricted to  buy-only. We observe that the linear feedback strategy $\myx^{LF}$ is very sensitive to $\alpha$. For $\alpha > 1$, LF generates oscillatory strategies, cf.~the right panel of Figure \ref{fig:feedbackCasestudies}. This occurs because the linear feedback underestimates the convex impact of trading on $(D_n)$, so that it first over-shoots relative to the target $D_{n+1}$, then under-shoots, etc. The unconstrained reference strategy actually tries to sell $u^\circ_2 < 0$ in the second step $n=2$, as well as in the $4^{th}$ step. Forcing $\myx^{LF}_n \ge 0$, we still obtain strong oscillations, no buying at $n=2$ and almost no buying at $n=3$. 
In contrast, the (estimated) optimal strategy smoothly maintains $\myx_n > 0$ throughout. 
Conversely for $\alpha < 1$ (left panel), the LF strategy underpurchases in the first trade since it does not correctly judge the reduced impact of large trades due to the concave price impact function. 

For $\alpha = 0.9$, price impact is much weaker and as a result the inventory urgency penalty dominates and leads to an L-shaped strategy, with a lot of buying in the first step or two, and a trickle for the rest of the steps. For $\alpha = 1.1$, price impact is the dominant feature and yields U-shaped strategies, with largest trades at $n=1$ and $n=N$. We also note that the dependence of the strategies on $\kappa,\eta$ remains complex for $\alpha \neq 1$. Since in all cases, total trades must add up to $X_0=100,000$, as parameters change, the resulting effect on $\myx_n$ is non-monotone, i.e., more is traded at some steps and less in others. As a result, the various (average) strategy curves cross each other, often more than once. 

Table \ref{tbl:parametric} reports the performance of an NN solver versus the LF one. Specifically, we use a 5D NN solver that is parametric in $(x,d,\kappa, \eta, \alpha)$.
The NN strategy is found to lead to total execution costs that are 10-15\% cheaper than LF, indicating that the differences observed in the respective strategies in Figure~\ref{fig:feedbackCasestudies} are material. Gains are larger for larger $\kappa$ and larger $\eta$, i.e., for configurations where price impact is more short-lived and more severe. While we do not have a ``gold standard'' to compare against, we may use the $\alpha=1.0$ performance (where LF is exact up to the non-negativity constraint that is almost never binding for this parameter configuration) as a yardstick for NN solution quality, since the NN solver is trained across different choices of $\alpha$.

\begin{table}[!htb]
\begin{center}
\begin{tabular}{crrrrc}
  & \multicolumn{2}{c}{$\kappa = 0.4$} & \multicolumn{2}{c}{$\kappa=0.8$} & \\ 
  \cline{2-3} \cline{4-5}
  & $\eta=1/3000$ & $\eta=1/500$ & $\eta=1/3000$ & $\eta=1/500$ & \\ \cline{2-5}
$\alpha=0.9$ & 10.33 & 7.33 & 15.05 & 10.58  \\
$\alpha=1.0$ & $-0.20$ & $-0.07$ &  $-0.52$ & $-0.04$  \\
$\alpha=1.1$ &  10.34 & 8.70 & 13.01 & 5.85 &   \\ \hline
\end{tabular}
 \caption{Performance of NN 5D solver in $(x,d,\kappa,\eta,\alpha)$ vs LF; based on $10^4$ simulations with $\sigma = 1$ and inventory penalty of $\nu=0.0001$. We report relative percent difference of total execution cost using Linear Feedback as baseline. Thus, positive values mean that the NN yields lower costs than LF, and negative means that LF outperforms. We expect positive values for $\alpha \neq 1$ and small negative values for $\alpha=1$. \label{tbl:parametric}}
\end{center}
\end{table}

\subsection{NN Implementations}\label{sec:nn-config}

NN solvers necessarily contain multiple tuning parameters that must be chosen during implementation. It is a folk theorem that some finetuning is always necessary, one of the reasons that  using NN is a bit of a ``black art''. 

In our setting, the feed-forward neural networks for $\hat{V}_n$ and $\hat{\myx}_n$ are very straightforward and do not require any bells and whistles. As mentioned, this simplicity of implementation is one of the reasons that we advocate this problem as a good pedagogical case study for building NN solvers for stochastic control problems. In particular, the NN architecture plays little significance: as long as one has sufficient flexibility, the choice of the number of neurons, the number of layers, etc., is very much secondary. Consequently, we default to the ``canonical'' setup with 16 neurons and 3 layers, that has been used in multiple prior works.

Nevertheless, there are certainly some tuning parameters that affect performance, first and foremost the effort spent on training. The latter is driven by the size of the training set, and the number of training epochs. The next set of experiments investigates in more detail how these parameters impact solution quality. In Table~\ref{tbl:linear-pnl} we compare the average P\&L of the NN strategy relative to LF for $\alpha=1.0$ as we vary the number of epochs $E$ and the number of training points $M$. As expected, accuracy increases as either $M$ or $E$ increase.
We observe that the running time is roughly linear in $M$ (since the latter is a straightforward loop) and sub-linear in $E$.

In addition, we also compare solvers that live in different dimensions, taking advantage of the fact that our 
implementation is fully dimension-agnostic and a single line change is needed to change $\ell$. For NN training purposes, that dimension $\ell$ does not matter, so the running time is constant as $\ell$ changes. However, as expected, smaller $\ell$ implies more dense training sets (since the volume of the training domain shrinks) and hence better accuracy. This reflects the fundamental property that learning a functional approximator is more laborious on a larger domain, and the respective ``volume'' grows in $\ell$. Thus, for the same $M,E$, a 3D solver will be more accurate than a 4D one, and less accurate than a 2D one. This is the (computational) price to pay for learning simultaneously across multiple parameters.

\begin{table}[!htb]
\centering 
\begin{tabular}{lcccr} \hline
& NN Configuration & $\kappa=0.4$ & $\kappa=0.8$ & Time (min)  \\ \hline\hline
4D w/$(x,d,\kappa,\eta)$ & $M=1000,E=1000$ & 0.10\% & 1.43\% & 2.73\\
& $M=2000,E=1000$ & 0.20\% & 0.88\% & 5.39\\
& $ M=2000, E=2000$ & 0.05\% & 0.46\% & 8.66 \\
& $M=4000, E=2000$ & 0.03\% & 0.28\% & 16.97  \\
& $M=8000, E=3000$ & 0.02\% & 0.27\% & 42.36 \\ \hline
3D w/$(x,d,\kappa)$ & $M=2000, E=2000$ & 0.04\% & 0.24\% & 8.38 \\
2D w/$(x,d)$ & $M=2000, E=2000$ & 0.03\% &  0.15\% &  8.14 \\ \hline
\end{tabular}
\caption{Average P\&L error compared to the linear feedback reference strategy for $\alpha=1$ across different NN implementations. In all cases, the NN has 3 layers with 16 neurons in each. $M$: number of training inputs; $E$: number of training epochs. The other test parameters are fixed at $\eta = 0.0001, \nu=0.0001, \sigma=1$, initial condition $X_0 = 100,000, D_0 = 0, N=10$, utilizing a 4D solver in $(x,d,\kappa, \eta)$, except for the last two rows. All running times are based on a Intel i7-11370H 3.0GHz laptop with 32GB RAM. \label{tbl:linear-pnl}}
\end{table}

Figure \ref{fig:comp_nn_strat} further compares solvers in different dimensions against each other in the nonlinear price impact case $\alpha = 1.1$. We consider NN solvers in 2D that only take $(x,d)$ coordinates, as well as in 3D with $(x,d,\kappa), (x,d,\eta)$ coordinates, in 4D with $(x,d,\kappa, \eta)$ and finally in 5D with $(x,d,\kappa,\eta,\alpha)$. The fact that all solvers yield very similar strategies is an empirical indication of the convergence of the NNs.

\begin{figure}[ht]
    \centering
    \begin{tabular}{cc}
    \includegraphics[height=1.8in,trim=0in 0in 0.2in 0in]{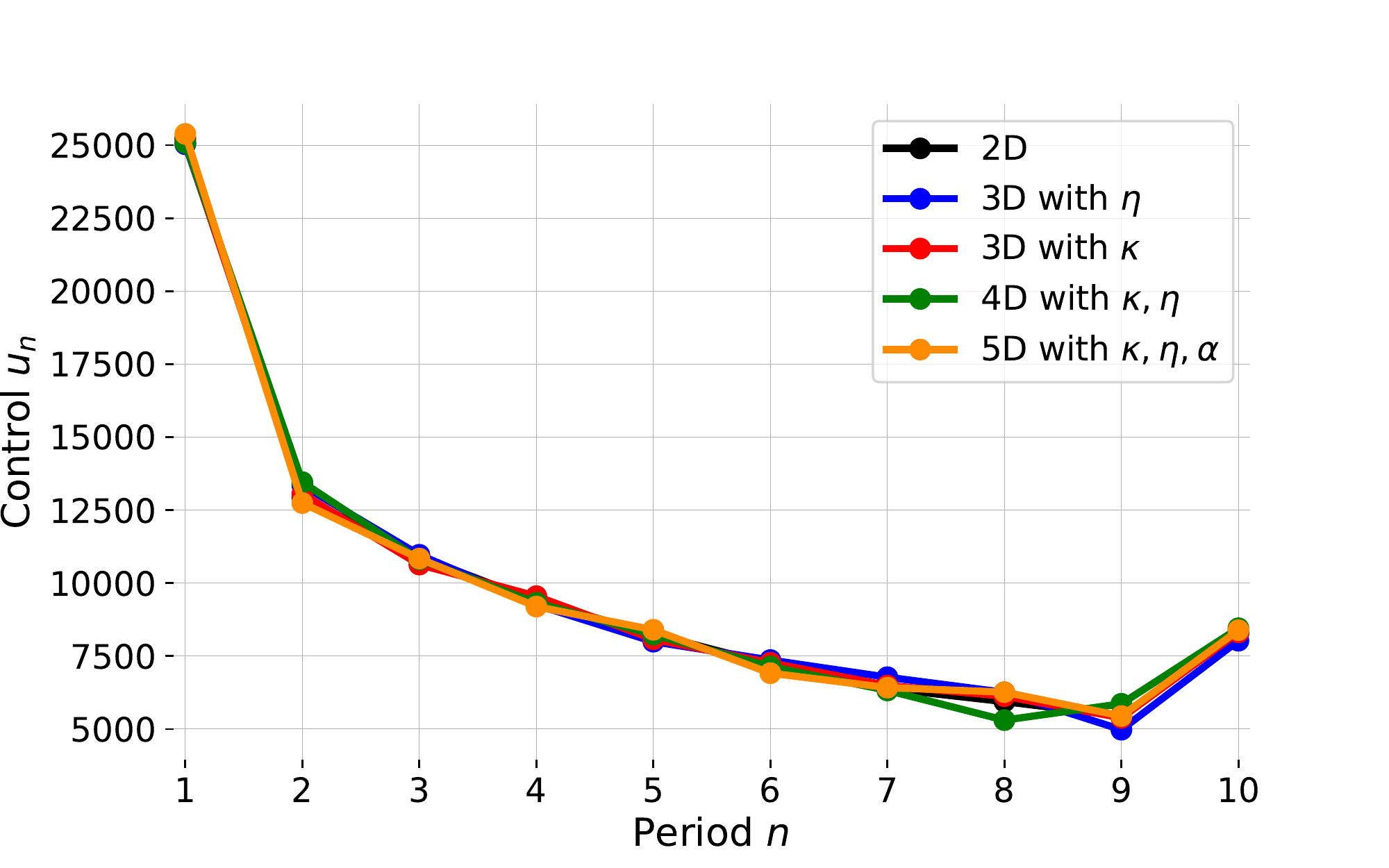} &
    \includegraphics[height=1.8in,trim=1in 0.4in 1in 0.6in,clip=true]{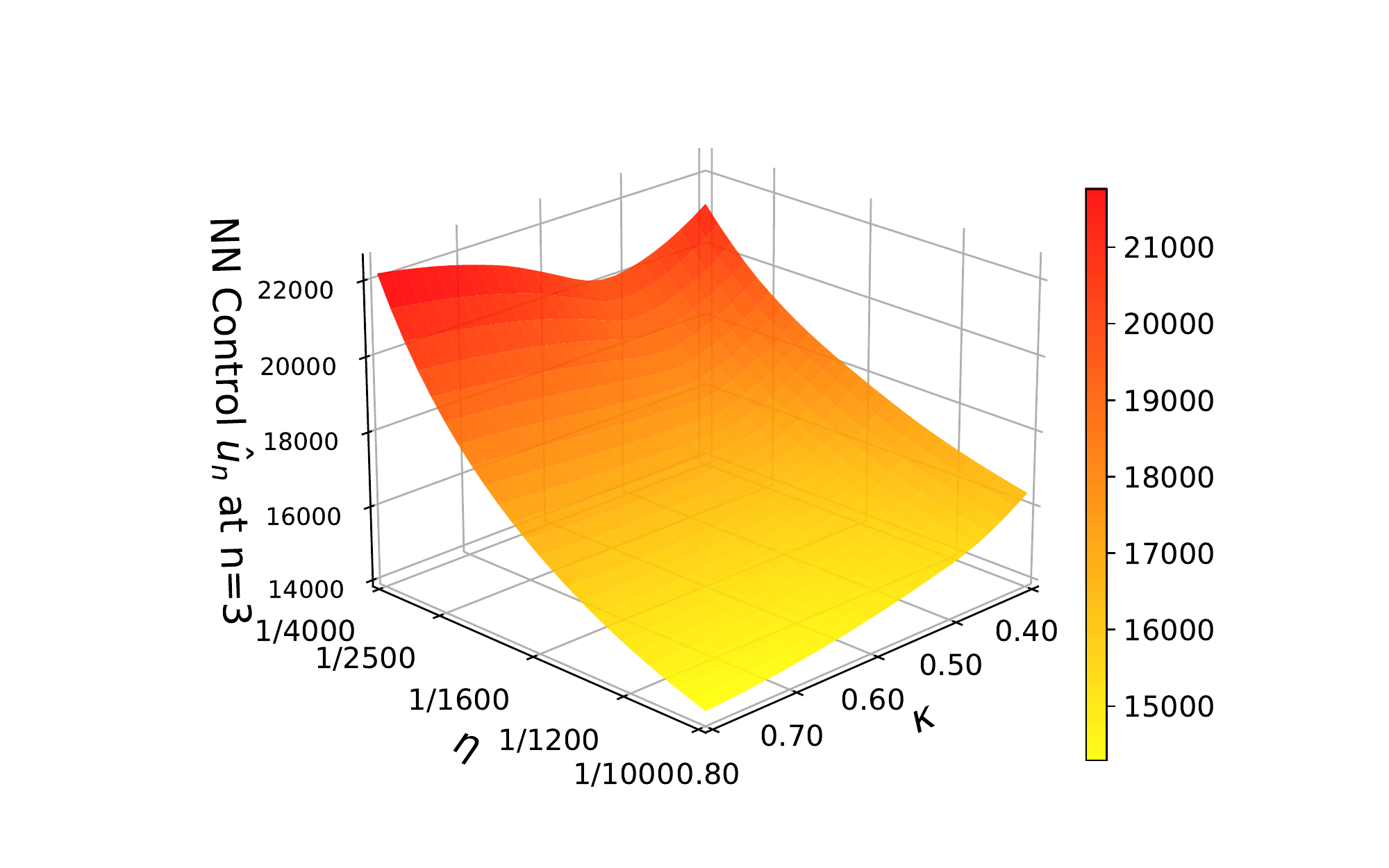}
    \end{tabular}
    \caption{\emph{Left:} Comparing different NN solvers with $\alpha=1.1$ and $\kappa=0.6, \eta=0.001, \nu=0.00005$. \emph{Right:} Dependence of NN-based execution $\myx_n(x,d, \kappa, \eta)$ strategy on $\kappa$ and $\eta$, keeping other parameters fixed at $n=3, x = 60,000, d=20$. We use the 5D solver shown on the left.}
    \label{fig:comp_nn_strat}
\end{figure}

Finally, the right panel of Figure~\ref{fig:comp_nn_strat} shows the fitted dependence of the control $\myx_n$ on $\kappa,\eta$ at $n=3$. Such dependence plots are the raison d'\^etre of parametric solvers, providing the modeler with a direct view of the sensitivity of the strategy to model parameters. Without a parametric solver, it would be prohibitively expensive to generate such surfaces through re-solving each configuration one-by-one. We observe that at this intermediate step $\myx_n$ shrinks in $\eta$ and in $\kappa$.

\subsection{Square-Root Price Impact} \label{sec:suqareroot}

We next investigate the special case where $\alpha = 0.5$. This ``square-root law" of price impact is advocated by some practitioners (c.f., \cite{LilloFarmerMantegna:03, AlmgrenThumHauptmannLi:05, BacryIugaLasnier:15}) and has also been addressed numerically in~\citet{CuratoGatheralLillo:16} via a brute force optimization of the cost function. In contrast, our case study highlights once more the usefulness of our parametric solver in that it readily computes an optimal execution schedule jointly across a range of different price impact parameters $\kappa$ and $\eta$, as well as urgency rates $\nu$; and thus unveils with ease after a single round of training the solution's dependence on the latter under this square-root price impact regime. Specifically, for $\alpha = 0.5$ fixed, we train a 5D NN solver that is parametric in $(x, d, \kappa, \eta, \nu)$ over the hyper-rectangle $x \in [0,10^5]$, $d \in [0,1]$, $\kappa \in [0.35, 0.85]$, $\eta \in [1/100, 1/1000]$, $\nu \in [0,10^{-5}]$ with $M=8000$ i.i.d.~uniformly selected training points, and set $\sigma = 0.1$.

\begin{figure}[ht] 
\centering
\begin{tabular}{cc}
\includegraphics[scale=.5]{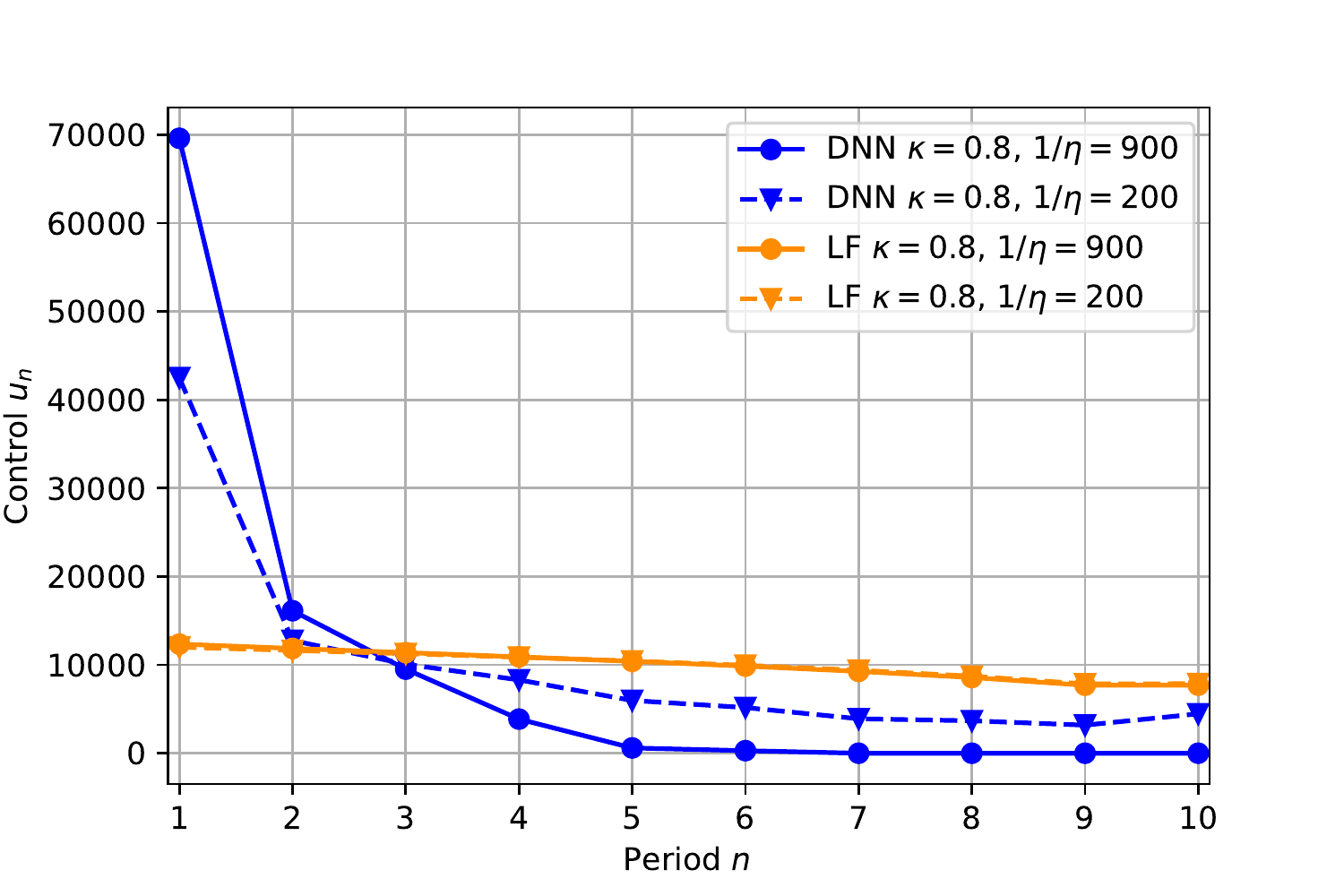} &
\includegraphics[scale=.5]{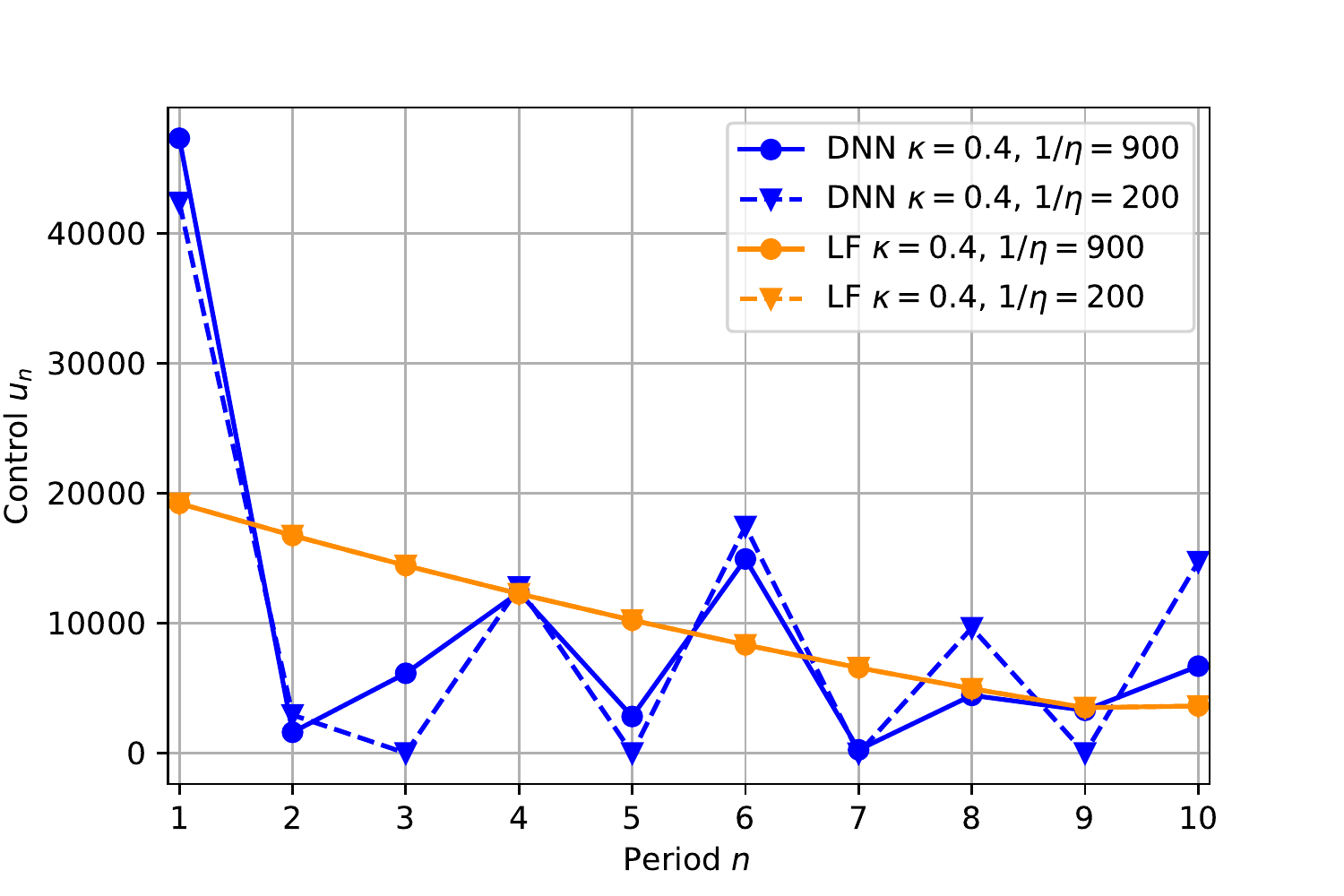} \\ 
$\nu=10^{-6}$ & $\nu = 0$
\end{tabular}
\caption{Execution strategy (mean values across $M'=10^4$ simulations) for $X_0 = 100,000, D_0=0$ and $N=10$ in the square root case $\alpha = 0.5$ with $\sigma = 0.1$, different price impact parameters $\kappa$ and $\eta$, and different urgency parameters $\nu$, utilizing a 5D solver in $(x,d,\kappa,\eta,\nu)$.
\label{fig:squarerootCasestudies}}
\end{figure}

Figure~\ref{fig:squarerootCasestudies} compares the resulting NN-based strategies to the linear feedback benchmark case $u^{LF}$ and Table~\ref{tbl:squareroot} reports their performances. We first note that the obtained strategy is very sensitive to whether the urgency parameter $\nu$ is zero or not. This is sensible because with $\alpha=0.5$, the magnitude of the incurred price impact is on the small scale and even further reduced by large values for $1/\eta$. As a consequence, as soon as $\nu$ becomes nonzero, the focus on rapidly reducing outstanding inventory dominates the overall order schedule; see the left panel in Figure~\ref{fig:squarerootCasestudies}. In contrast, when inventory control is turned off (i.e., $\nu = 0$) the NN strategy exhibits an oscillatory behavior: peaks of large buy orders are interrupted by near to zero-volume orders; see the right panel in Figure~\ref{fig:squarerootCasestudies}. This observation is somewhat consistent with the numerical results presented in~\cite[Section 4.4]{CuratoGatheralLillo:16} where the computed strategy consists of a few bursts of buying interspersed with long periods of no trading. In terms of total execution costs, the NN solver always significantly outperforms the LF benchmark strategy in all considered parameter configurations for $\kappa, \eta,\nu$; see Table~\ref{tbl:squareroot}. Apparently, the LF feedback policy overestimates the price impact which leads to an overall sub-optimal behavior.

\begin{table}[!htb]
\begin{center}
\begin{tabular}{llrr}
  & & $\eta=1/900$ & $\eta=1/200$ \\ \cline{3-4}
$\nu=10^{-6}$ & $\kappa = 0.8$ & 54.85 & 17.56  \\ 
$\nu=0$ & $\kappa = 0.4$ &  30.40 & 28.11
 \\ \hline
\end{tabular}
 \caption{Performance of the 5D NN solver with  $\my=(x,d,\kappa,\eta,\nu)$ vs.~LF; based on $10^4$ simulations with $\sigma = 0.1$. We report relative percent difference of total execution cost using Linear Feedback as baseline.  Positive values mean that the NN yields lower costs than LF. \label{tbl:squareroot}}
\end{center}
\end{table}

\subsection{Number of Periods}
Our NN algorithm trivially scales in the number of periods $N$ as it involves a simple loop over $n=N-1,\ldots,1$. Financially speaking, the most common interpretation is to fix the business time horizon $T$, and then pick the scheduling interval $\Delta t$, so that $N=T/\Delta t$. To illustrate this, we consider taking a larger number of intervals $N \gg 10$. To reflect the idea that $T$ is fixed, we need to re-scale some of the model parameters in terms of the frequency $\Delta t$. Specifically, since the dynamics of $(D_n)_{n=1,\ldots,N}$ are motivated by a discretization of a continuous-time kernel decay specification, we need to keep $\kappa^{(N)} \Delta t$ constant, in other words $\kappa^{(N)} \propto N^{-1}$, where we explicitly indicate the dependence of the resilience parameter on $N$. Indeed, as $N$ gets bigger, $\kappa^{(N)}$ should shrink so that the persistence of $D_n$ increases. Similarly, the continuous-time volatility $\sigma^{(N)} \sqrt{\Delta t}$ should be invariant, so that $\sigma^{(N)} \propto {N}^{-1/2}$. Finally, the inventory penalty is also proportional to business time, hence $\nu^{(N)} \propto N^{-1}$. In contrast, the impact parameter $\eta$ has no time-units, hence does not change as a function of $\Delta t$.

Returning to the closed-form formulas for $\alpha=1$, we observe that more fine scheduling frequency does not alter the fundamental U-shape (in the case $\nu=0$). In fact the initial and final trade amounts are only slightly affected by higher $N$, while the intermediate trades are roughly inversely proportional to $N^{-1}$, maintaining the same trading rate $u /\Delta t$ in business time.
 Similar intuition carries over to the solution when $\alpha > 1$. The left panel of Figure \ref{fig:largeN} shows the execution strategy for $N=30$ and $\alpha = 1.1$, keeping all other parameters as in Figure \ref{fig:feedbackCasestudies} right, modulo re-scaling as described above. The right panel of Figure \ref{fig:largeN} shows the execution strategy for $N=30$ and $\alpha=0.5$. The latter plot is comparable to the right panel of Figure \ref{fig:squarerootCasestudies}, taking zero inventory penalty $\nu=0$. However, to prevent negative deviations $D_n$, we must significantly decrease the noise amplitude to $\sigma=0.01$ (compared to $\sigma=0.1$ in Figure \ref{fig:squarerootCasestudies}).
 
For $\alpha = 1.1$ and $N=30$, the linear feedback strategy overshoots greatly, leading (for $\kappa=0.133$) to no trades for periods $n=2,\ldots,7$, while the NN solution maintains a steady pace (with a decreasing trend due to the inventory penalty $\nu$) throughout. It gains about 10\% in average cost savings compared to LF (11.2\% for $\kappa=0.4/3$ and 8.3\% for $\kappa=0.8/3$).

For $\alpha=0.5$ and $N=30$, we observe in Figure \ref{fig:largeN} bursty trading taking advantage of the concave price impact which discourages making small trades. Instead, the strategy executes every 2-4 periods, letting $(D_n)$ mean-revert back to zero in between. As in Figure \ref{fig:squarerootCasestudies}, due to $D_0 =0$ the first trade is very large, in our case about 65,000 for $\eta=1/900$ and about 78,000 for $\eta = 1/200$.
 
While the proposed dynamic programming approach implies error back-propagation as the iterations over $n$ are stepped through, we see very reasonable performance as $N$ is increased \emph{with exactly the same code}. We do recommend to employ a different strategy (e.g., RL-like with a single neural network that incorporates time-dependence) for $N \ge 50$ or so.

\begin{figure}[ht] 
\centering
\begin{tabular}{cc} 
\includegraphics[height=2in,trim=0.2in 0.2in 0.2in 0.5in]{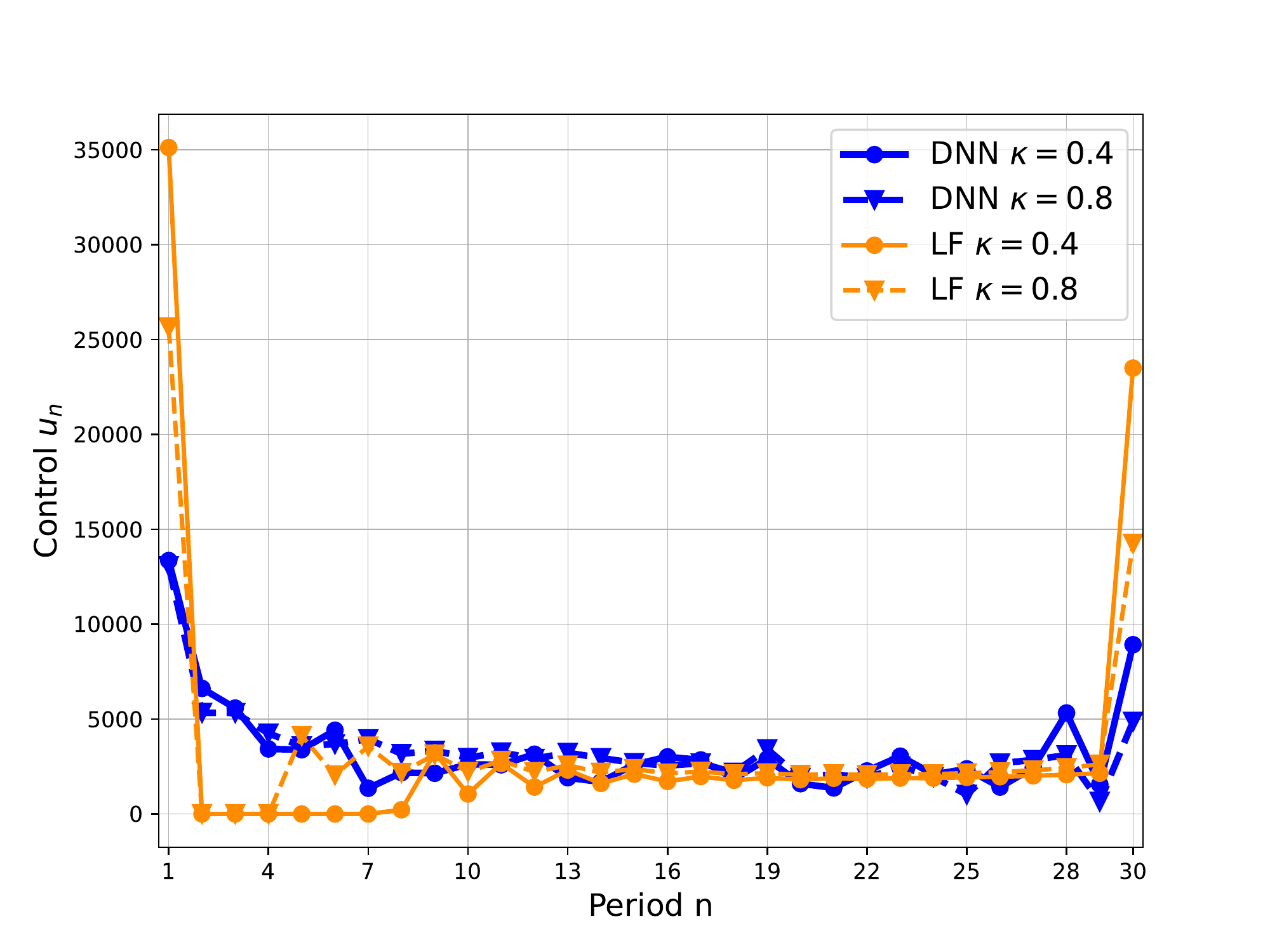} & 
\includegraphics[height=2in,trim=0.2in 0.2in 0.2in 0.5in]{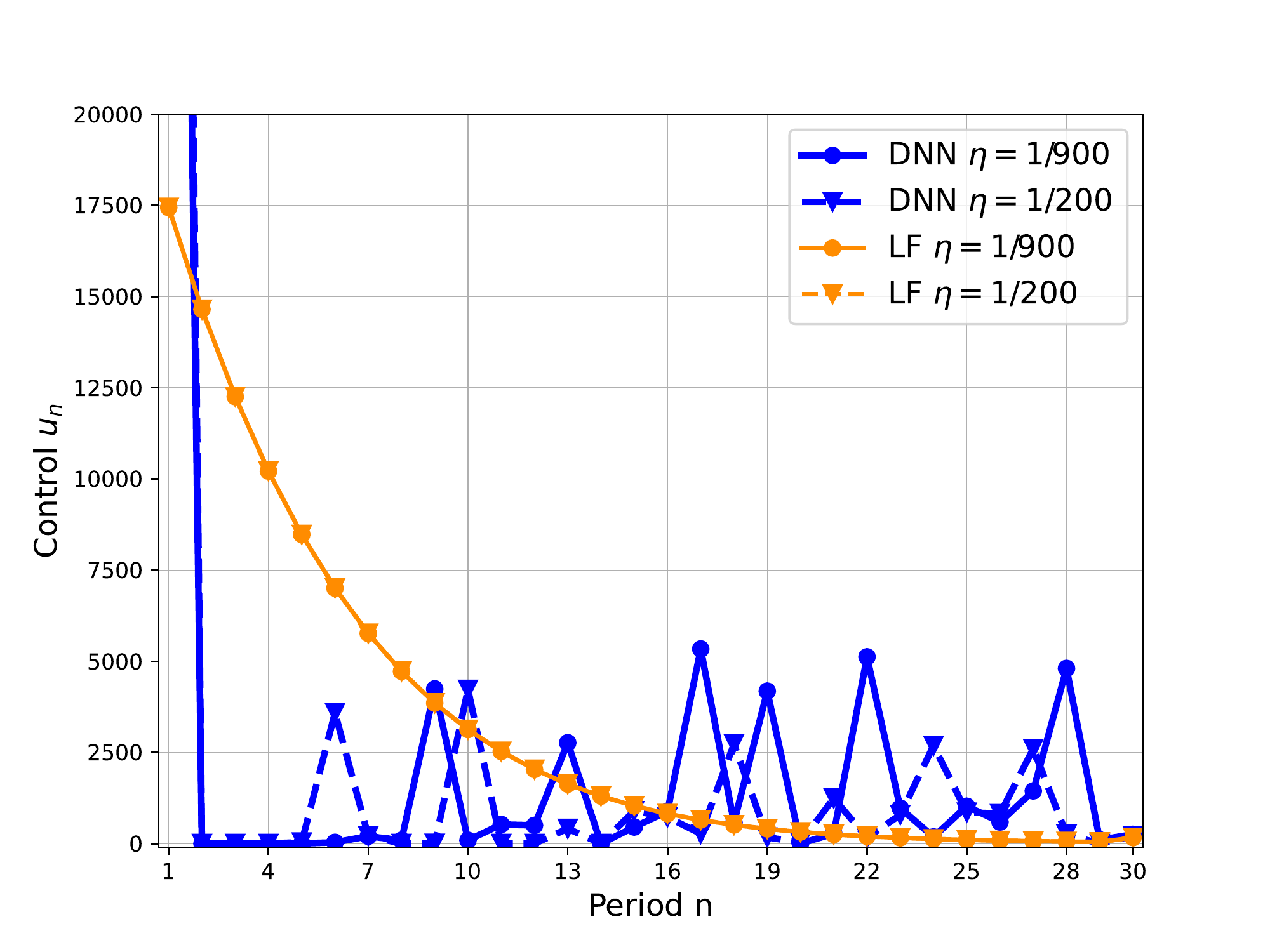}\\
$\alpha =1.1$ & $\alpha=0.5$
\end{tabular}
\caption{Execution strategy (mean values across $M'=10^4$ simulations) for $X_0 = 100,000,D_0=0$ and $N=30$ periods. Left panel: $\alpha=1.1$ with $\sigma = 1/\sqrt{3}$, urgency parameter $\nu = 0.0001/3$, $\eta=1/500$, utilizing a 5D solver in $(x,d,\kappa,\eta,\alpha)$. 
Right panel: square-root price impact $\alpha=0.5$ with $\sigma=0.01$, $\nu=0$ and $\eta=1/200$, utilizing a 4D solver in $(x,d,\kappa,\eta)$.
\label{fig:largeN}}
\end{figure}

\subsection{Multi-Exponential Decay Kernel}
We conclude our numerical experiments by investigating the performance of our NN algorithm on an extended version of our model formulated in Section~\ref{sec:model}. Recall that the transient price impact process introduced in~\eqref{def:deviation} for the studied buying program $u_n \geq 0$, $n=1,\ldots,N$, is given by
\begin{align*}
    D_{n} = & \, (1-\kappa)^n d_0 + \sum_{j=1}^{n} (1-\kappa)^{n-j} \eta \myx_j^\alpha + \sum_{j=1}^{n} (1-\kappa)^{n-j} \epsilon_j \qquad (n = 0, \ldots, N).
\end{align*} 
This deviation process belongs to a general class of decaying price impact processes, also called \emph{propagator models} originally developed by~\citet{BouchaudEtAl:09, BouchaudEtAl:04}, and can be written as
\begin{align} \label{eq:deviation_kernel}
    D_{n} = & \, G_{n,0} d_0 + \sum_{j=1}^{n} G_{n,j} \eta \myx_j^\alpha + \sum_{j=1}^{n} G_{n,j} \epsilon_j \qquad (n = 0, \ldots, N)
\end{align}
with an exponential decay kernel 
\begin{align} \label{eq:kernel}
G_{n,j} = (1-\kappa)^{n-j} \qquad (0 \leq j \leq n \leq N).
\end{align}
Therefore, it is very sensible to consider extensions of our model by allowing for different decay kernels $(G_{n,j})_{0 \leq j \leq n \leq N}$ in~\eqref{eq:kernel}. Specifically, there is empirical evidence reported in the literature that price impact exhibits some memory effect and rather decays according to a power law function; cf., e.g.,~\citet{BouchaudEtAl:04}.

In order to retain a Markovian framework, we study in this section the generalization where the kernel $G$ is a convex combination of exponentially decaying kernels, namely it is of the form
\begin{align} \label{def:multExpKernel}
G_{n,j} = \sum_{m=1}^M \zeta_m (1-\kappa_m)^{n-j} \qquad (0 \leq j \leq n \leq N)
\end{align}
for some $\zeta_m \in [0,1]$ with $\sum_{m=1}^M \zeta_m = 1$, and $\kappa_m \in (0,1]$. Plugging this back into~\eqref{eq:deviation_kernel} yields
\begin{align} \label{eq:deviation_mixed}
    D_{n} = & \, \sum_{m=1}^M \zeta_m D^m_n\qquad (n=0,\ldots, N),
\end{align}
where
\begin{equation*}
    D^m_n :=  (1-\kappa_m)^n d_0 + \sum_{j=1}^{n} (1-\kappa_m)^{n-j} \eta \myx_j^\alpha + \sum_{j=1}^{n} (1-\kappa_m)^{n-j} \epsilon_j 
  \qquad (n=0,\ldots, N).
\end{equation*}
In other words, the total price distortion $(D_n)_{n=0,\ldots,N}$ in~\eqref{eq:deviation_mixed} is now driven by $M$ processes $(D^1_n, \ldots, D^M_n)_{n=0,\ldots,N}$ which decay at  different timescales $\kappa_m$. The $D^m$'s are fully correlated with the dynamics $D^m_0 = d_0$ and
\begin{equation*}
    D^m_n =  \, (1-\kappa_m) D^m_{n-1} + \eta \myx_n^{\alpha} +
    \epsilon_n. 
\end{equation*}

Moreover, the dynamics in~\eqref{def:deviation} generalize to $D_0 = d_0$,
\begin{equation*}
    D_{n} =   \sum_{m=1}^M \zeta_m (1-\kappa_m) D^m_{n-1}  + \eta \myx_n^{\alpha} + \epsilon_n \qquad (n=1,\ldots, N).
\end{equation*}
In the special case $\zeta_m = 1$, $\zeta_n=0$ for all $n\neq m$ we retrieve the original setup from~\eqref{def:deviation}.

Augmenting the $d^m$'s to the state space and 
following the same reasoning as in Section~\ref{subsec:execution}, the objective function in~\eqref{def:optProblem} becomes
\begin{equation*} 
\inf_{(\myx_n)_{n=1,\ldots,N}\in\mathcal{A}_1}
  \mathbb{E}\left[ \sum_{n=1}^N \left\{\left( \sum_{m=1}^M \zeta_m (1-\kappa_m) D^m_{n-1} +\frac{\eta}{2} \myx_{n}^{\alpha} \right) 
    \myx_{n} + \mynu (X_{n-1} - \myx_n)^2 \right\}\right];
\end{equation*}
the corresponding value functions in~\eqref{def:valuefunction} are given by
\begin{equation*}
   \begin{aligned}
   & V_n(x, d^1, \ldots, d^M) := \inf_{(\myx_j)_{j=n,\ldots,N} \in \mathcal{A}_{n} } \mathbb{E}\Bigg[ \sum_{j=n}^N \bigg\{ \left( \sum_{m=1}^M \zeta_m (1-\kappa_m) D^m_{n-1} +\frac{\eta}{2} \myx_{j}^{\alpha}\right) \bigg.
   \myx_{j} \Bigg. \\
   & \Bigg. \hspace{160pt} \bigg. + \mynu (X_{j-1} - \myx_j)^2 \bigg\} \, \bigg\vert \, X_{n-1}=x, D^m_{n-1}=d^m, m=1,\ldots,M \Bigg]
   \end{aligned}
\end{equation*}
for all $n \in \{1,\ldots,N\}$.

We next illustrate the above extension in the case $M=2$ so that the decay kernel is a mixture of two exponentials. Taking $\zeta_1 \equiv \zeta, \zeta_2 \equiv 1-\zeta$ for the mixing weight $\zeta \in [0,1]$, the associated dynamic programming (DP) equation in~\eqref{eq:bellman-1} and~\eqref{eq:bellman-2} modifies to
\begin{align} \label{eq:bellman-1.2}
V_{N}(x, d^1, d^2) 
    & = \big( \zeta (1-\kappa_1) \cdot d^1 + (1-\zeta) (1-\kappa_2) \cdot d^2 \big) \cdot x + \frac{\eta}{2} x^{\alpha+1} \qquad \text{and } \\
 \label{eq:bellman-2.2}
V_{n}(x, d^1, d^2) 
    & = \inf_{\myx \in [0,x]} \mathbb{E}\Bigg[ \Big( \zeta (1-\kappa_1) \cdot d^1 + (1-\zeta) (1-\kappa_2) \cdot d^2 \Big) \cdot \myx + \frac{\eta}{2} \myx^{\alpha+1}  + \mynu ( x - \myx)^2  \Bigg. \nonumber \\
    & \hspace{52pt} \Bigg. + V_{n+1}(x-\myx,D^1_{n},D^2_{n}) \, \bigg\vert \, X_{n-1}=x, D^1_{n-1}=d^1, D^2_{n-1}=d^2\Bigg]
\end{align} 
for all $n=N-1,\ldots,1$.

Our NN agorithm can be easily extended to the DP equations in~\eqref{eq:bellman-1.2} and~\eqref{eq:bellman-2.2}. 
In particular, we can also treat the mixing weight $\zeta$ as a model hyperparameter that can be trained upon. In Figure~\ref{fig:mixedExp} we illustrate the execution profiles of the NN solver $(\hat{u}_n(x,d^1,d^2,\zeta))_{n=1,\ldots,N}$  for a model with a multi-exponential decay kernel as specified in~\eqref{def:multExpKernel} where $M=2$, $\kappa_1=0.4$, $\kappa_2=0.8$ and several values of  $\zeta$. We train the solver across the 4 inputs $(x,d^1, d^2,\zeta)$ with the latter in the range $\zeta \in [0.25, 0.75]$. 
The plot extends the configuration of the left panel in Figure~\ref{fig:feedbackCasestudies} to a multi-exponential propagator and shows the corresponding NN policy (mean values across $M'=10^4$ simulations) in the concave price impact regime $\alpha = 0.9$ with $\sigma=1$, $\nu=0.0001$ and $\eta=1/500$. As before, we take $X_0 = 100,000, D_0=0$ and $N=10$ periods and three different $\zeta$'s.

As the accompanying table shows, misspecifying the decay kernel is costly: the DNN solver beats an LF strategy that assumes linear price impact $\alpha=1$ and a single decay parameter $\kappa$ by 9-12\%. Of note, taking an exponential kernel with the larger $\kappa_2$ does better than using an averaged $\kappa= \zeta \kappa_1 + (1-\zeta)\kappa_2$. Even when the impact is linear, $\alpha=1.0$, the DNN beats the LF strategy (by 0.32\%-1.51\%) due to its mis-specification via an exponential kernel.

\begin{figure}[ht] 
\centering
\begin{tabular}{cc} 
\begin{minipage}{0.41\textwidth}
\includegraphics[width=\textwidth,trim=0.4in 0.2in 0.2in 0.5in]{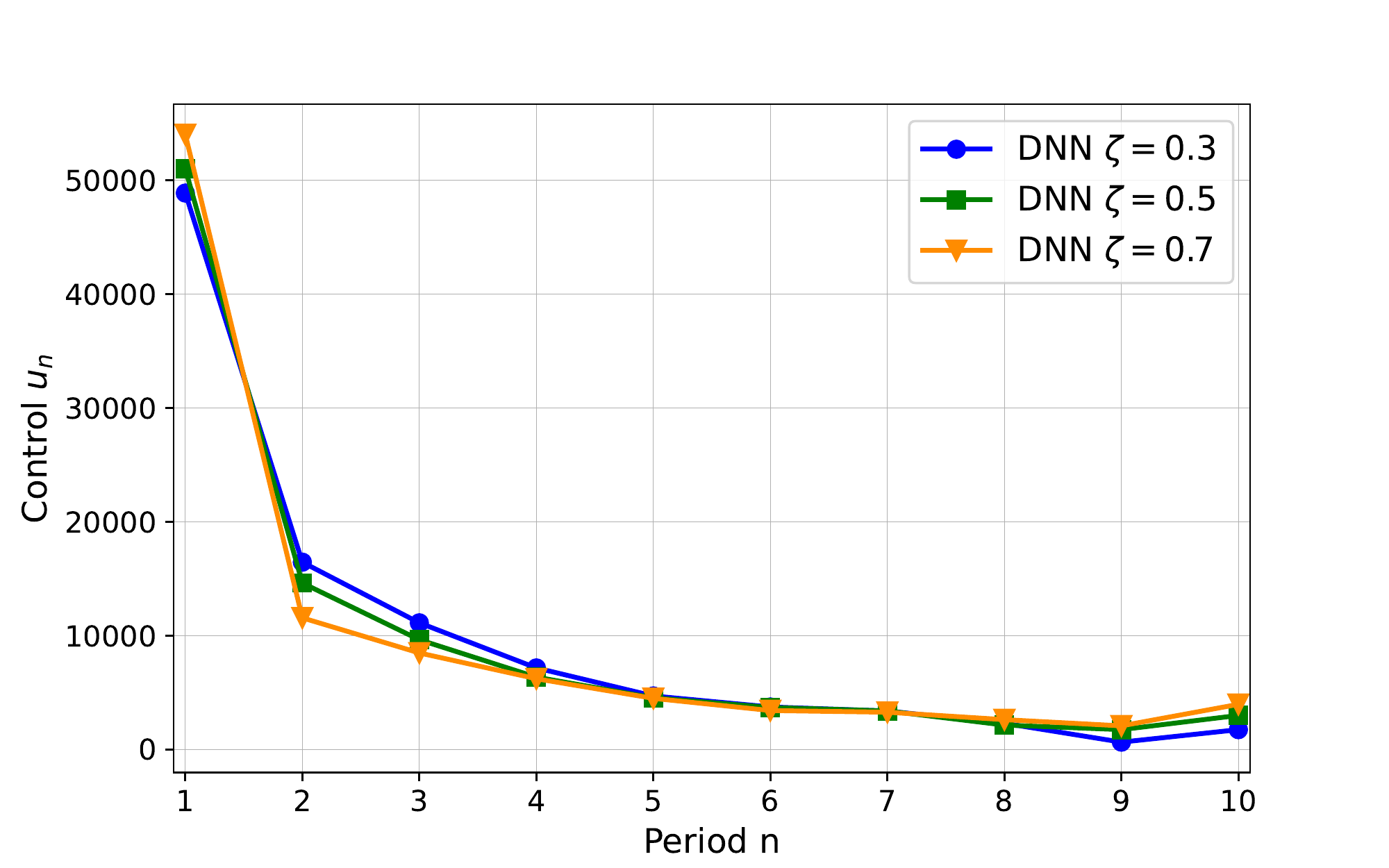}
\end{minipage} & \begin{minipage}{0.49\textwidth} 
\begin{tabular}{lrrr}
& \multicolumn{3}{c}{DNN w/$\alpha = 0.9$}  \\
 LF w/ &  $\zeta=0.3$ & $\zeta=0.5$ & $\zeta=0.7$ \\ \cline{2-4}
  $\kappa = \kappa_1 = 0.4$ & 4.81 & 12.12 & 11.03  \\ 
  $\kappa = \zeta \kappa_1 + (1-\zeta)\kappa_2$ & 5.18 & 12.37 & 10.02 \\
 $\kappa = \kappa_2 = 0.8$ &  5.92 & 12.98 & 9.02
 \\ \hline
  & \multicolumn{3}{c}{DNN w/$\alpha = 1$}  \\ \cline{2-4}
  $\kappa = \kappa_1 = 0.4$ &  & 1.51 &   \\ 
  $\kappa = 0.6$ &  & 0.32 &  \\
 $\kappa = \kappa_2 = 0.8$ &   & 0.44 & 
 \\ \hline
\end{tabular}
\end{minipage}

\end{tabular}
\caption{\emph{Left}: Execution strategy (mean values across $M'=10^4$ simulations) for $X_0 = 100,000,D_0=0$ and $N=10$ periods with bi-exponential decay kernel with $\kappa_1=0.4$, $\kappa_2=0.8$ and weight $\zeta \in \{0.3, 0.5, 0.7\}$.
We use $\alpha=0.9$ with $\sigma=1$, $\nu=0.0001$ and $\eta=1/500$, utilizing a 4D solver in $(x,d^1,d^2,\zeta)$. \emph{Right}: Percent gain of the DNN value function relative to the indicated LF comparator. Positive values mean that the DNN achieves lower execution costs.
\label{fig:mixedExp}}

\end{figure}

\section{Conclusion}\label{sec:conclude}

In this article we have investigated neural network surrogates for solving optimal execution problems across a range of model parameters. Our approach jointly learns an optimal strategy as a function of the stochastic system state and of the market configuration.

The developed algorithm and the accompanying Jupyter Notebook can be used as a starting point for many other related analyses. For example, it would be straightforward to modify the code to handle parametric stochastic control problems of similar flavor (e.g., discrete-time hedging). 

A further use case is to use the trained neural network as a building block in a more sophisticated setup. In particular, one may consider frameworks that explicitly account for model risk, in the sense of imprecisely known parameters. In the adaptive approach (including the Predictive Model Control popular in engineering), the modeler first develops \emph{learning dynamics}, which convert static parameters, such as $\kappa$, into  a stochastic process $(\hat{\kappa}_n)$, where $\hat{\kappa}_n$ is the best estimate of the book resilience at step $n$. Updating equations for $\hat{\kappa}_{n+1}$ in terms of the previous $\hat{\kappa}_n$ and new information from step $n+1$ (such as using the Bayesian paradigm) yield dynamics that can be merged with those of $(X_n, D_n)$. One then \emph{plugs-in} the resulting $\hat{\kappa}_n$ (and other similarly learned/updated parameters) into the NN-learned $\hat{u}(X_n, D_n, \hat{\kappa}_n)$ to obtain the adaptive strategy---which takes into account the latest parameter estimates, but does not solve the full Bellman equation. Conversely, one could also consider \emph{robust} approaches that {minimize} $\hat{V}_n(\cdot)$ over feasible parameter settings in order to protect against a worst-case situation. The latter again requires access to the computed $\hat{V}_n$ as a building block. Finally, we may mention the \emph{adaptive robust} approach \cite{Chen:18} that combines dynamic learning with a worst-case min-max optimization to protect against incorrect estimates or mis-specified dynamics. The resulting numerical algorithms will be investigated in a separate, forthcoming sequel.

\section{Proofs} \label{sec:proofs}

We start with Lemma~\ref{lem:positive} which provides an intermediate computation relevant for showing $V^\circ_n(0,0) < 0$ in~\eqref{prop:benchmark:valuefunction} (existence of round-trips) and characterizing the optimal $u^\circ$ in in~\eqref{prop:benchmark:feedback}. 

\begin{lemma} \label{lem:positive}
For all $n = N, N-1, \ldots, 2$ the constants $\mya_{n}, \myb_{n}, \myc_{n}$ recursively defined in~\eqref{prop:benchmark:recursionTerminal} and~\eqref{prop:benchmark:recursion} satisfy $2 \eta + 4 \mynu + 4 \mya_{n} - 4 \eta \myb_{n} + 4 \eta^2 \myc_{n} > 0$.
\end{lemma}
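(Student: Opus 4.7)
The plan is backward induction on $n$. Write $A_n := 2\eta + 4\nu + 4a_n - 4\eta b_n + 4\eta^2 c_n$ for convenience. The base case is immediate from the terminal conditions: $A_N = 4(\nu + \eta\kappa) > 0$. For the inductive step, the crux is to uncover a hidden structural simplification that reduces the nominally three-dimensional recursion in $(a_n,b_n,c_n)$ to a one-dimensional iteration for $A_n$.

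Introduce the shorthand $P_n := 2\nu + 2a_{n+1} - \eta b_{n+1}$ and $Q_n := 1 - b_{n+1} + 2\eta c_{n+1}$ (the two compact blocks appearing in the recursion~\eqref{prop:benchmark:recursion}). A direct expansion yields the identity $P_n + \eta Q_n = A_{n+1}/2$. Substituting $P_n = A_{n+1}/2 - \eta Q_n$ into the $b_n$-recursion, using the algebraic cancellation $b_{n+1} + Q_n = 1 + 2\eta c_{n+1}$, and then recognising $c_{n+1} - Q_n^2/A_{n+1} = c_n/(1-\kappa)^2$ from the $c_n$-recursion, produces the key identity
\begin{equation*}
b_n \;=\; (1-\kappa) + \frac{2\eta}{1-\kappa}\, c_n \qquad (\kappa<1).
\end{equation*}
A parallel manipulation of the $a_n$-recursion gives $a_n = \eta/2 + \eta^2 c_n/(1-\kappa)^2$, so that, substituting both back into the definition of $A_n$, the compact relation
\begin{equation*}
A_n \;=\; 4(\nu + \eta\kappa) + \frac{4\eta^2\kappa^2}{(1-\kappa)^2}\, c_n
\end{equation*}
holds along the recursion. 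Using this to express $c_{n+1}$ (hence also $Q_n = \kappa[1 - 2\eta c_{n+1}/(1-\kappa)]$) in terms of $A_{n+1}$ alone, the recursion collapses after bookkeeping into a scalar M\"obius iteration
\begin{equation*}
A_n \;=\; B - \frac{C^2}{A_{n+1}}, \qquad B := 4\nu\bigl[1+(1-\kappa)^2\bigr] + 4\eta\kappa(2-\kappa), \;\; C := 4\nu(1-\kappa) + 2\eta\kappa(2-\kappa).
\end{equation*}
The boundary case $\kappa=1$ is handled directly: the recursions force $b_n \equiv c_n \equiv 0$, so $Q_n \equiv 1$ and the same formula is recovered with $B = 4(\nu+\eta)$, $C = 2\eta$.

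It then remains to analyse the scalar map $f(x) = B - C^2/x$, which is strictly increasing on $(0,\infty)$. Its fixed points $x_\pm = \tfrac{1}{2}(B \pm \sqrt{B^2-4C^2})$ are both real and positive, since $B^2 - 4C^2 = (B-2C)(B+2C) = 16\nu\kappa^2(2-\kappa)[\nu(2-\kappa)+2\eta\kappa] \ge 0$ and $x_+ x_- = C^2 > 0$, $x_+ + x_- = B > 0$. The identity $f(x) - x_+ = x_-(x-x_+)/x$ shows that $f$ maps $(x_+, \infty)$ into itself. To close the induction it suffices to verify $A_N = M := 4(\nu+\eta\kappa) > x_+$: evaluating the characteristic polynomial at $M$ gives $M^2 - BM + C^2 = 4\eta^2\kappa^4 > 0$, so $M \notin (x_-,x_+)$, and $2M - B = 4\kappa[\nu(2-\kappa) + \eta\kappa] > 0$ rules out the alternative $M < x_-$. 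Hence $A_n > x_+ > 0$ for all $n=2,\ldots,N$ by induction.

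The main obstacle is the discovery and verification of the two auxiliary identities expressing $a_n$ and $b_n$ in terms of $c_n$ alone; these are what collapse the three-dimensional recursion to the one-dimensional map for $A_n$, and they are not obvious a priori. Once they are in hand, the algebraic reduction is tedious but mechanical, and the remaining one-dimensional fixed-point argument (monotonicity of $f$, positivity and location of $x_\pm$, checking $A_N > x_+$) is routine.
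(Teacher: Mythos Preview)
Your argument is correct, and it is considerably more complete than the paper's own proof. The paper verifies the base cases $n=N$ and $n=N-1$ by direct substitution, then records the recursion
\[
A_n \;=\; 4\eta\kappa + 4\nu + 4\eta^2\kappa^2\,\frac{2c_{n+1}(2a_{n+1}+2\nu-\eta) - (1-b_{n+1})^2}{A_{n+1}}
\]
and simply asserts that ``the general claim can then be checked similarly with a tedious backward induction,'' without indicating how the induction actually closes: the numerator on the right still involves all three of $a_{n+1},b_{n+1},c_{n+1}$, and no invariant controlling its sign is exhibited.

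You go further by discovering the two invariants $b_n = (1-\kappa) + \tfrac{2\eta}{1-\kappa}c_n$ and $a_n = \tfrac{\eta}{2} + \tfrac{\eta^2}{(1-\kappa)^2}c_n$, which hold for $n=N$ and propagate along the recursion (this is straightforward to verify inductively using $P_n+\eta Q_n = A_{n+1}/2$). These collapse the paper's three-variable relation to a genuine scalar M\"obius iteration $A_n = B - C^2/A_{n+1}$, after which the monotonicity/fixed-point argument cleanly delivers $A_n > x_+ > 0$. Your computations of $B,C$, of $B^2-4C^2 = 16\nu\kappa^2(2-\kappa)[\nu(2-\kappa)+2\eta\kappa]\ge 0$, of $M^2-BM+C^2 = 4\eta^2\kappa^4>0$, and of $2M-B>0$ all check out, and the boundary case $\kappa=1$ is handled correctly. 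In short, the paper leaves the heart of the induction to the reader; your proposal supplies exactly the missing structural insight (the $a_n,b_n$ invariants) that makes the induction go through, and in doing so gives a self-contained proof.
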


\begin{proof}
For $n=N$ we directly get from~\eqref{prop:benchmark:recursionTerminal} that 
\begin{equation*}
    2 \eta + 4 \mynu + 4 \mya_{N} - 4 \eta \myb_{N} + 4 \eta^2 \myc_{N} = 4\eta\kappa+4\nu > 0. 
\end{equation*}
For $n=N-1$, using the definition in~\eqref{prop:benchmark:recursion}, one computes
\begin{align*} 
    2 \eta + 4 \mynu + 4 \mya_{N-1} - 4 \eta \myb_{N-1} + 4 \eta^2 \myc_{N-1} = \frac{32\eta\kappa\nu + 16 \nu^2 + 4\eta^2\kappa^2 (4-\kappa^2)}{4\eta\kappa+4\nu} > 0.
\end{align*}
The general claim can then be checked similarly with a tedious backward induction relying on the following recursive relation obtained from~\eqref{prop:benchmark:recursion}
 \begin{align} 
     2 \eta + 4 \mynu + 4 \mya_{n} - 4 \eta \myb_{n} + 4 \eta^2 \myc_{n} = & \, 2 \eta + 4 \mynu + 4 (\mya_{n+1} + \mynu) - 4\eta(1-\kappa) \myb_{n+1} + 4 \eta^2 (1-\kappa)^2 \myc_{n+1} \nonumber \\
     & \, - \frac{\big( 2(2\mynu+2\mya_{n+1}-\eta \myb_{n+1})+2\eta(1-\kappa)(1-\myb_{n+1}+2\eta\myc_{n+1}) \big)^2}{2 \eta + 4 \mynu + 4 \mya_{n+1} - 4 \eta \myb_{n+1} + 4 \eta^2 \myc_{n+1}} \nonumber \\
    = & \, 4 \eta \kappa + 4 \mynu + 4 \eta^2 \kappa^2 \frac{2 \myc_{n+1} (2 \mya_{n+1}+2\nu-\eta)-(1-\myb_{n+1})^2}{2 \eta + 4 \mynu + 4 \mya_{n+1} - 4 \eta \myb_{n+1} + 4 \eta^2 \myc_{n+1}}. \label{proof:lem:eq1}
 \end{align}
\end{proof}

\textbf{Proof of Proposition~\ref{prop:benchmark}:}
In the linear case $\alpha=1$ the unconstrained version of the optimal execution problem formulated in~\eqref{def:valuefunctionUC} is a linear quadratic stochastic control problem. Therefore, it is well known that for all $n \in \{1,\ldots,N\}$ the value functions $V^\circ_n(x,d)$ are linear quadratic in $x$ and~$d$. This motivates the ansatz $V^\circ_n(x,d) = \mya_n x^2 + \myb_n x d + \myc_n d^2 + \mye_n$, where the coefficients $\mya_n, \myb_n, \myc_n, \mye_n \in \mathbb{R}$ are determined via backward induction by using the corresponding dynamic programming equations in~\eqref{eq:bellman-1} and~\eqref{eq:bellman-2}. 

First, the terminal condition in~\eqref{eq:bellman-1} yields $\mya_N = \frac{\eta}{2}, \myb_N = 1 - \kappa, \myc_N = 0$ as claimed in~\eqref{prop:benchmark:recursionTerminal}, as well as $\mye_N = 0$. Next, for the inductive step, let $n \in \{N-1, \ldots, 1\}$. The dynamic programming equation in~\eqref{eq:bellman-2} (for the considered unconstrained version of the problem) yields   
\begin{align}
V_{n}^\circ(x, d) 
    & = \min_{\myx \in \mathbb{R}} \bigg\{ (1-\kappa) d \myx + \frac{\eta}{2} \myx^{2} + \mynu ( x - \myx)^2 + \mathbb{E}\Big[ V_{n+1}^\circ(X_n,D_n) \, \Big\vert \, X_{n-1}=x, D_{n-1}=d \Big] \bigg\} \nonumber \\
    & = \min_{\myx \in \mathbb{R}} \bigg\{ (1-\kappa) d \myx + \frac{\eta}{2} \myx^{2} + \mynu ( x - \myx)^2 \nonumber \\
    & \hspace{42pt} + \mathbb{E}\Big[ V^\circ_{n+1} \big( x - \myx,(1-\kappa)d + \eta \myx + \epsilon_{n} \big) \, \Big\vert \, X_{n-1}=x, D_{n-1}=d \Big] \bigg\}. \label{proof:prop_bench:inductivestep}
\end{align}
Plugging in $V^\circ_{n+1}(x,d) = \mya_{n+1} x^2 + \myb_{n+1} x d + \myc_{n+1} d^2 + \mye_{n+1}$ in~\eqref{proof:prop_bench:inductivestep} and solving the squares we obtain
\begin{align}
    & V_{n}^\circ(x,d) \nonumber \\
    & = \min_{\myx \in \mathbb{R}}  \Bigg\{ \myx^2 \left( \frac{\eta}{2} +  \mynu + \mya_{n+1} - \eta \myb_{n+1} + \eta^2 \myc_{n+1}  \right) \nonumber \\
    & \hspace{45pt} + \myx \Big( ( \eta \myb_{n+1} -2  \mynu - 2 \mya_{n+1}) x + (1-\myb_{n+1} + 2\eta\myc_{n+1}) (1-\kappa) d \Big) \nonumber  \\
    & \hspace{45pt} + ( \mynu + \mya_{n+1}) x^2 + (1-\kappa) \myb_{n+1} x d + (1-\kappa)^2 \myc_{n+1} d^2 + \mye_{n+1} \nonumber \\
    & \hspace{45pt} + \left( \frac{1}{2} u \left( 1 - 2 \myb_{n+1} + 4 \eta\myc_{n+1} \right)  + (\myb_{n+1} x + 2 (1-\kappa) \myc_{n+1} d) \right) \mathbb{E} \left[  \epsilon_{n} \, \big\vert \, X_{n-1} = x, D_{n-1} = d \right]  \nonumber \\
    & \hspace{45pt} + c_{n+1} \mathbb{E} \left[  \epsilon^2_{n} \, \big\vert \, X_{n-1} = x, D_{n-1} = d \right] \Bigg\} \nonumber \\ 
    & = \min_{\myx \in \mathbb{R}}  \Bigg\{ \myx^2 \left( \frac{\eta}{2} +  \mynu + \mya_{n+1} - \eta \myb_{n+1} + \eta^2 \myc_{n+1}  \right) \nonumber \\
    & \hspace{45pt} + \myx \Big( ( \eta \myb_{n+1} -2  \mynu - 2 \mya_{n+1}) x + (1-\myb_{n+1} + 2\eta\myc_{n+1}) (1-\kappa) d \Big) \label{proof:prop_bench:min}  \\
    & \hspace{45pt} + ( \mynu + \mya_{n+1}) x^2 + (1-\kappa) \myb_{n+1} x d + (1-\kappa)^2 \myc_{n+1} d^2 + \mye_{n+1} + \myc_{n+1} \sigma^2 \Bigg\}, \nonumber
\end{align}
where we used the fact that $\mathbb{E} [ \epsilon_{n} \, \vert \, X_{n-1}, D_{n-1}] = \mathbb{E}[ \epsilon_{n}] = 0$ as well as $\mathbb{E} [ \epsilon_{n}^2 \, \vert \, X_{n-1}, D_{n-1}] = \mathbb{E}[ \epsilon^2_{n}] = \sigma^2$. Minimizing~\eqref{proof:prop_bench:min} with respect to $\myx$ gives
\begin{equation}
    \myx = - \frac{ (\eta \myb_{n+1} - 2  \mynu - 2 \mya_{n+1}) x + (1 - \myb_{n+1} + 2 \eta\myc_{n+1}) (1-\kappa) d}{\eta +  2 \mynu + 2 \mya_{n+1} - 2 \eta \myb_{n+1} + 2 \eta^2 \myc_{n+1}} \label{proof:prop_bench:feedback}
\end{equation}
and hence the feedback policy $u^\circ_n$ as claimed in~\eqref{prop:benchmark:feedback}. In particular, note that it follows from Lemma~\ref{lem:positive} that $\eta +  2 \mynu + 2 \mya_{n+1} - 2 \eta \myb_{n+1} + 2 \eta^2 \myc_{n+1} > 0$ and that $u$ in~\eqref{proof:prop_bench:feedback} is indeed the unique minimum in~\eqref{proof:prop_bench:min}. Moreover, inserting~\eqref{proof:prop_bench:feedback} back into~\eqref{proof:prop_bench:min} yields
\begin{align*}
    V^\circ_{n}(x, d) & = -\frac{\Big( ( \eta \myb_{n+1} - 2\mynu - 2 \mya_{n+1}) x + (1-\myb_{n+1} + 2\eta\myc_{n+1}) (1-\kappa) d \Big)^2}{2 \eta + 4 \mynu + 4 \mya_{n+1} - 4 \eta \myb_{n+1} + 4 \eta^2 \myc_{n+1}}  \nonumber\\
    & \hspace{13pt} + ( \mynu + \mya_{n+1}) x^2 + (1-\kappa) \myb_{n+1} x d + (1-\kappa)^2 \myc_{n+1} d^2 \nonumber \\
    & \hspace{13pt} + \myc_{n+1} \sigma^2 + \mye_{n+1} \nonumber \\
    & = \left( \mynu + \mya_{n+1} - \frac{(\eta\myb_{n+1} - 2 \mynu - 2 \mya_{n+1})^2}{2 \eta + 4 \mynu + 4 \mya_{n+1} - 4 \eta \myb_{n+1} + 4 \eta^2 \myc_{n+1}} \right) x^2 \nonumber \\
    & \hspace{13pt} + (1-\kappa) \left(\myb_{n+1} -\frac{(\eta\myb_{n+1}-2 \mynu - 2 \mya_{n+1}) (1-\myb_{n+1} + 2\eta\myc_{n+1})}{\eta + 2 \mynu + 2 \mya_{n+1} - 2 \eta \myb_{n+1} + 2 \eta^2 \myc_{n+1}} \right) x d  \nonumber \\
    & \hspace{13pt} + (1-\kappa)^2 \left( \myc_{n+1} -\frac{(1-\myb_{n+1}+2\eta\myc_{n+1})^2}{2 \eta + 4 \mynu + 4 \mya_{n+1} - 4 \eta \myb_{n+1} + 4 \eta^2 \myc_{n+1}} \right) d^2  + \myc_{n+1} \sigma^2 + \mye_{n+1},
\end{align*}
which implies the desired recursive formulas provided in~\eqref{prop:benchmark:recursion} and the representation of the value function in~\eqref{prop:benchmark:valuefunction}. Also note in~\eqref{prop:benchmark:feedback} that $u^\circ_1$ is just a deterministic constant in $\mathbb{R}$ and that $u^\circ_n$ is normally distributed for all $n=2,\ldots,N$. Indeed, since $u^\circ_n$ is linear in $X_{n-1}^\circ$ and $D_{n-1}^\circ$, and the state variables $X_{n-1}^\circ$ and $D_{n-1}^\circ$ are linear in $(\myx^\circ_j)_{j=1,\ldots,n-1}$ and the i.i.d.~zero-mean Gaussian noise $(\epsilon_j)_{j=1,\ldots,n-1}$, one checks that $u^\circ_n$ is ultimately just a linear transformation of $(\epsilon_j)_{j=1,\ldots,n-1}$. As a direct consequence, we can conclude that $(u^\circ_n)_{n=1,\ldots,N}$ is an admissible strategy in the set $\mathcal{A}^\circ_1$ as defined in~\eqref{def:admissibleSetUC}. Finally, the representation of $D^\circ$ in~\eqref{prop:benchmark:deviation} follows directly from its state dynamics in~\eqref{def:deviation}.  
\qed

\textbf{Proof of Proposition~\ref{prop:benchmark_det}:} In the case $\sigma = 0$ the optimal control problem in~\eqref{def:valuefunctionUC} (with $\alpha = 1$) is deterministic. We can introduce the corresponding cost functional $C: \mathbb{R}^N \rightarrow \mathbb{R}$ given by
\begin{align}
C(\myx_1,\ldots,\myx_N) & \triangleq
\sum_{n=1}^N \left\{ \left( (1-\kappa) D_{n-1} + \frac{\eta}{2} \myx_{n} \right) \myx_{n} + \mynu (X_{n-1} - \myx_n)^2 \right\} \nonumber \\
& = \frac{\eta}{2} \sum_{n=1}^N \myx_n^2 + (1-\kappa) \sum_{n=1}^N D_{n-1} \myx_n + \nu \sum_{n=1}^N (X_{n-1} - \myx_n)^2. \label{proof:benchmark_det:cost}
\end{align}
For all $n=1,\ldots,N$, considering the state variables $X_n(\myx_1,\ldots,\myx_N) \triangleq X_n$ and $D_n(\myx_1,\ldots,\myx_N) \triangleq D_n$ in~\eqref{def:inventory} and~\eqref{def:deviation} as functions in $\myx_1,\ldots,\myx_N \in \mathbb{R}$, we note that 
\begin{equation*}
    \frac{\partial X_n}{\partial \myx_i}  = -1, \quad \frac{\partial D_n}{\partial \myx_i} = \eta (1-\kappa)^{n-i} \qquad (1\leq i \leq n).
\end{equation*}
In particular, we have the relation
\begin{equation} \label{proof:benchmark_det:pD}
    \frac{\partial D_n}{\partial \myx_{i}} = (1-\kappa) \frac{\partial D_n}{\partial \myx_{i+1}} \qquad (1\leq i \leq n-1).
\end{equation}
Therefore, for all $i \in \{1,\ldots,N-1\}$ we can compute
\begin{align}
     \frac{\partial C
     }{\partial \myx_i} & = \eta \myx_i + (1-\kappa) D_{i-1} + (1-\kappa) \sum_{n=i+1}^N \frac{\partial D_{n-1}}{\partial \myx_i} \myx_n - 2 \mynu \sum_{n=i}^N X_n \nonumber \\
    & = \eta \myx_i + (1-\kappa) D_{i-1} + (1-\kappa) \left( \frac{\partial D_{i}}{\partial \myx_i} u_{i+1} + \sum_{n=i+2}^N \frac{\partial D_{n-1}}{\partial \myx_i} \myx_n \right) - 2 \mynu \sum_{n=i}^N X_n \nonumber \\
    & = \eta \myx_i + (1-\kappa) D_{i-1} + (1-\kappa) \eta \myx_{i+1} + (1-\kappa)^2 \sum_{n=i+2}^N \frac{\partial D_{n-1}}{\partial \myx_{i+1}} \myx_n - 2 \mynu \sum_{n=i}^N X_n,
    \label{proof:benchmark_det:pC1}
\end{align}
where we used~\eqref{proof:benchmark_det:pD} in the last step. Similarly, 
\begin{equation} \label{proof:benchmark_det:pC2}
\frac{\partial C}{\partial \myx_{i+1}} = \eta \myx_{i+1} + (1-\kappa) D_{i} + (1-\kappa) \sum_{n=i+2}^N \frac{\partial D_{n-1}}{\partial \myx_{i+1}} \myx_n - 2 \mynu \sum_{n=i+1}^N X_n.
\end{equation}
Hence, using~\eqref{proof:benchmark_det:pC2} in~\eqref{proof:benchmark_det:pC1} we obtain the recursive equation
\begin{equation} \label{proof:benchmark_det:pC3}
    \frac{\partial C}{\partial \myx_{i}} = (\eta \myx_i + (1-\kappa) D_{i-1}) (2\kappa+\kappa^2) - 2\mynu\kappa \sum_{n=i+1}^N X_n - 2\mynu X_i + (1-\kappa) \frac{\partial C}{\partial \myx_{i+1}} \quad (1\leq i \leq N-1).
\end{equation}
Next, minimizing~\eqref{proof:benchmark_det:cost} under the constraint $X_0 - \sum_{n=1}^N \myx_n = 0$ and denoting $\lambda \in \mathbb{R}$ the Lagrangian multiplier, we obtain the first order conditions  
\begin{equation}
    \frac{\partial C
     }{\partial \myx_i} = \lambda \qquad (i=1,\ldots,N),
\end{equation}
which, together with~\eqref{proof:benchmark_det:pC3}, can be rewritten as  
\begin{equation} \label{proof:benchmark_det:FOC}
    \lambda \kappa = \kappa (2-\kappa) (\eta \myx_i + (1-\kappa) D_{i-1}) - 2\mynu\kappa \sum_{n=i+1}^N X_n - 2 \mynu X_i \qquad (i=1,\ldots,N-1).
\end{equation}
Computing the differences of the equations in~\eqref{proof:benchmark_det:FOC} for successive $i$ and $i-1$ (for $i \in \{2,\ldots,N-1\}$) and rearranging the terms yields the recursive formula
\begin{align} 
    \myx_i = & \tilde{a} \myx_{i-1} + \tilde{b} D_{i-2} + \tilde{c} X_{i-2} \nonumber \\
    = & \tilde{a} \myx_{i-1} + \tilde{b} (1-\kappa)^{i-2} d_0 + \sum_{j=1}^{i-2} (\tilde{b} \eta (1-\kappa)^{i-2-j} - \tilde{c}) \myx_j + \tilde c X_0 \qquad (i=2,\ldots,N-1), \label{proof:benchmark_det:recursion}
\end{align}
where
\begin{equation} \label{def:constants}
    \tilde{a} := \frac{\kappa (2\kappa\eta-\kappa^2\eta+2\nu)}{2\kappa\eta-\kappa^2\eta-2\kappa\nu+2\nu}, \quad \tilde{b} := \frac{\kappa^2 (2-3\kappa+\kappa^2)}{2\kappa\eta-\kappa^2\eta-2\kappa\nu+2\nu}, \quad \tilde{c} := \frac{-2\kappa\nu}{2\kappa\eta-\kappa^2\eta-2\kappa\nu+2\nu}.
\end{equation}
Due to its linear structure, the recursion in~\eqref{proof:benchmark_det:recursion} can be solved explicitly. We obtain the representation
\begin{equation} \label{proof:benchmark_det:ui}
    \myx_i = \mathfrak{a}_i \myx_{1} + \mathfrak{b}_i d_{0} + \mathfrak{c}_i X_{0} \qquad (i=1,\ldots, N)
\end{equation}
where $\mathfrak{a}, \mathfrak{b}, \mathfrak{c} \in \mathbb{R}^N$ are given by $\mathfrak{a}_1 := 1, \mathfrak{b}_1 := \mathfrak{c}_1 := 0, \mathfrak{a}_2 := \tilde{a}, \mathfrak{b}_2 := \tilde{b}, \mathfrak{c}_2 := \tilde{c}$,
\begin{equation} \label{def:mathcalabc}
\begin{aligned}
\mathfrak{a}_i := & \, \tilde{a} \cdot \mathfrak{a}_{i-1} + \sum_{j=1}^{i-2} \left( \tilde{b} \eta (1-\kappa)^{i-2-j} - \tilde{c} \right) \mathfrak{a}_j, \\
\mathfrak{b}_i := & \, \tilde{a} \cdot \mathfrak{b}_{i-1} + \tilde{b} (1-\kappa)^{i-2} + \sum_{j=1}^{i-2} \left( \tilde{b} \eta (1-\kappa)^{i-2-j} - \tilde{c} \right) \mathfrak{b}_j, \\
\mathfrak{c}_i := & \, \tilde{a} \cdot \mathfrak{c}_{i-1} + \tilde{c} + \sum_{j=1}^{i-2} \left( \tilde{b} \eta (1-\kappa)^{i-2-j} - \tilde{c} \right) \mathfrak{c}_j,
\end{aligned} 
\end{equation}
for $i = 3,\ldots,N-1$, as well as $\mathfrak{a}_N := -\sum_{j=1}^{N-1} \mathfrak{a}_j, \mathfrak{b}_N := -\sum_{j=1}^{N-1} \mathfrak{b}_j, \mathfrak{c}_N := 1-\sum_{j=1}^{N-1} \mathfrak{c}_j$, which comes from the terminal condition $u_N = X_{N-1}$. Moreover,~\eqref{proof:benchmark_det:ui} implies the representation
\begin{equation}
X_i = \mathfrak{a}^x_{i+1} \myx_1 + \mathfrak{b}^x_{i+1} d_0 + \mathfrak{c}^x_{i+1} X_0 \qquad (i=0,\ldots,N-1),
\end{equation}
where $\mathfrak{a}^x, \mathfrak{b}^x, \mathfrak{c}^x \in \mathbb{R}^N$ are given by $\mathfrak{a}^x_1 := \mathfrak{b}^x_1 := 0, \mathfrak{c}^x_1 := 1$ and 
\begin{equation} \label{def:mathcalabcx}
\mathfrak{a}^x_{i+1} := -\sum_{j=1}^{i} \mathfrak{a}_j, \quad \mathfrak{b}^x_{i+1} := -\sum_{j=1}^{i} \mathfrak{b}_j, \quad \mathfrak{c}^x_{i+1} := 1 - \sum_{j=1}^{i} \mathfrak{c}_j
\quad \quad (i = 1,\ldots,N-1);
\end{equation}
as well as
\begin{equation}
D_i = \mathfrak{a}^d_{i+1} \myx_1 + \mathfrak{b}^d_{i+1} d_0 + \mathfrak{c}^d_{i+1} X_0
\qquad (i=0,\ldots,N-1),
\end{equation}
where $\mathfrak{a}^d, \mathfrak{b}^d, \mathfrak{c}^d \in \mathbb{R}^N$ are defined as $\mathfrak{a}^d_1 := 0, \mathfrak{b}^d_1 := 1, \mathfrak{c}^d_1 := 0$ and
\begin{equation}  \label{def:mathcalabcd}
\begin{aligned}
\mathfrak{a}^d_{i+1} := & \, \sum_{j=1}^{i} \eta (1-\kappa)^{i-j} \mathfrak{a}_j , \;
\mathfrak{b}^d_{i+1} := (1-\kappa)^i + \sum_{j=1}^{i} \eta (1-\kappa)^{i-j} \mathfrak{b}_j \\
\mathfrak{c}^d_{i+1} := & \, \sum_{j=1}^{i} \eta (1-\kappa)^{i-j} \mathfrak{c}_j
\end{aligned} 
\quad \quad (i = 1,\ldots,N-1).
\end{equation}
In other words, the first order conditions in~\eqref{proof:benchmark_det:FOC}, together with the terminal state constraint $X_N = 0$, allow to express $u_2,\ldots,u_N$ explicitly in terms of $u_1$ via~\eqref{proof:benchmark_det:ui} and it remains to minimize the costs in~\eqref{proof:benchmark_det:cost} as a linear quadratic function in $u_1$ only, namely   
\begin{align}
C(\myx_1,\ldots,\myx_N) = & \frac{\eta}{2} \sum_{n=1}^N \myx_n^2 + \nu \sum_{n=1}^N (X_{n-1} - \myx_n)^2 + (1-\kappa) \sum_{n=1}^N D_{n-1} \myx_n  \nonumber \\
= & \frac{\eta}{2} \sum_{n=1}^N (\mathfrak{a}_n \myx_{1} + \mathfrak{b}_n d_{0} + \mathfrak{c}_n X_{0})^2 + \nu \sum_{n=1}^{N-1} (\mathfrak{a}^x_{n+1} \myx_1 + \mathfrak{b}^x_{n+1} d_0 + \mathfrak{c}^x_{n+1} X_0)^2 \nonumber \\
& + (1-\kappa) \sum_{n=1}^N (\mathfrak{a}^d_{n} \myx_1 + \mathfrak{b}^d_{n} d_0 + \mathfrak{c}^d_{n} X_0) (\mathfrak{a}_n \myx_{1} + \mathfrak{b}_n d_{0} + \mathfrak{c}_n X_{0}) \nonumber \\
 = & \bigg[ \frac{1}{2} \eta \mathfrak{a}^\top \mathfrak{a} + (1-\kappa) (\mathfrak{a}^d)^\top \mathfrak{a} + \nu (\mathfrak{a}^x)^\top \mathfrak{a}^x) \bigg] u^2_1 \nonumber \\
& + \bigg[ \left(\mathfrak{a}^{\top} (\eta\mathfrak{b}+(1-\kappa)\mathfrak{b}^d) + (1-\kappa) (\mathfrak{a}^d)^\top \mathfrak{b} + 2 \nu (\mathfrak{a}^x)^\top \mathfrak{b}^x \right) d_0 \bigg. \nonumber \\
& \bigg. \hspace{20pt} + \left( \mathfrak{a}^{\top} (\eta\mathfrak{c}+(1-\kappa)\mathfrak{c}^d) + (1-\kappa) (\mathfrak{a}^d)^\top \mathfrak{c} + 2 \nu (\mathfrak{a}^x)^\top \mathfrak{c}^x \right) X_0 \bigg] u_1 \nonumber \\
& + \left( \frac{1}{2}\eta \mathfrak{b}^\top \mathfrak{b} + \nu (\mathfrak{b}^x)^\top \mathfrak{b}^x + (1-\kappa) (\mathfrak{b}^d)^\top \mathfrak{b} \right) d_0^2 \nonumber \\ 
& + \left( \frac{1}{2}\eta \mathfrak{c}^\top \mathfrak{c} + \nu (\mathfrak{c}^x)^\top \mathfrak{c}^x + (1-\kappa) (\mathfrak{c}^d)^\top \mathfrak{c} - \nu \right) X_0^2 \nonumber \\
& + \left( \eta \mathfrak{b}^\top \mathfrak{c} + 2 \nu (\mathfrak{b}^x)^\top \mathfrak{c}^x + (1-\kappa) ((\mathfrak{b}^d)^\top \mathfrak{c} + (\mathfrak{c}^d)^\top \mathfrak{b}) \right) X_0 d_0 \nonumber
\end{align}
which yields the claims in~\eqref{prop:benchmark_det_x1},~\eqref{prop:benchmark_det_xi},~\eqref{prop:benchmark_det_XD}. \qed

\textbf{Proof of Corollary~\ref{cor:benchmark_det}:} In the case $\nu = 0$ the constants introduced in~\eqref{def:constants} reduce to
$\tilde{a} = \kappa$, $\tilde{b}=\kappa(2-3\kappa+\kappa^2)/(2\eta-\kappa\eta)$ and $\tilde{c} = 0$. Moreover, direct computations reveal that the coefficients in~\eqref{def:mathcalabc} simplify to
\begin{equation*}
    \mathfrak{a}_i = \kappa, \quad \mathfrak{b}_i = \tilde{b}, \quad \mathfrak{c}_i = 0 \qquad (i=2,\ldots,N-1),
\end{equation*}
as well as $\mathfrak{a}_N = -1-(N-2)\kappa$, $\mathfrak{b}_N = -(N-2)\tilde{b}$, $\mathfrak{c}_N=1$. This yields the claim in~\eqref{cor:benchmark_det_opt_1}. For the coefficients in~\eqref{def:mathcalabcx} we obtain
\begin{equation*}
    \mathfrak{a}^x_i = -1-(i-1)\kappa, \quad \mathfrak{b}^x_i = -(i-1)\tilde{b}, \quad \mathfrak{c}^x_i = 1 \qquad (i=2,\ldots,N),
\end{equation*}
and for the coefficients in~\eqref{def:mathcalabcd} we have
\begin{equation*}
    \mathfrak{a}^d_i = \eta, \quad \mathfrak{b}^d_i = 1-\kappa, \quad \mathfrak{c}^d_i = 0 \qquad (i=2,\ldots,N).
\end{equation*}
This yields the claim in~\eqref{cor:benchmark_det_xd}. Finally, directly computing $\hat{b}$ and $\hat{c}$ as defined in~\eqref{prop:benchmark_det_bc} gives the expressions in~\eqref{cor:benchmark_det_bc}. 
\qed

\printbibliography

\end{document}